\documentclass[11pt]{article}

\pdfoutput=1
\usepackage{amsmath,amssymb}
\usepackage{algorithm,algorithmic}
\usepackage{amsthm}
\usepackage{dsfont}
\usepackage{fullpage}
\usepackage{graphicx}
\usepackage{subfigure}

\setlength{\parskip}{2 ex}
\setlength{\topmargin}{-0.3 in}
\setlength{\oddsidemargin}{-0.2 in}
\setlength{\evensidemargin}{-0.2 in}
\setlength{\textheight}{9.4 in}
\setlength{\textwidth}{7.0 in}
\setlength{\parindent}{.3in}
\setlength{\textheight}{630pt}

\newtheorem{lemma}{Lemma}
\newtheorem{theorem}{Theorem}
\newtheorem{prop}{Proposition}
\newtheorem{corollary}{Corollary}

\newcommand{\defas}{\colon =}
\newcommand{\grad}{\bigtriangledown}

\newcommand{\argmax}{\operatorname{argmax}}

\newcommand{\U}{\mathcal{U}}
\newcommand{\V}{\mathcal{V}}
\newcommand{\G}{\mathcal{G}}
\newcommand{\B}{\mathcal{B}}
\newcommand{\mS}{\mathcal{S}}
\newcommand{\mL}{\mathcal{L}}

\author{Praneeth Netrapalli\\ praneethn@utexas.edu
\and Sujay Sanghavi \\ sanghavi@mail.utexas.edu
}

\date{}

\begin{document}

\title{Finding the Graph of Epidemic Cascades}

\maketitle
\begin{abstract}
We consider the problem of finding the graph on which an epidemic cascade spreads, given {\em only} the times when each node gets infected. While this is a problem of importance in several contexts -- offline and online social networks, e-commerce, epidemiology, vulnerabilities in infrastructure networks -- there has been very little work, analytical or empirical, on finding the graph.  Clearly, it is impossible to do so from just one cascade; our interest is in learning the graph from a small number of cascades. 

For the classic and popular ``independent cascade'' SIR epidemics, we analytically establish the number of cascades required by both the global maximum-likelihood (ML) estimator, and a natural greedy algorithm. Both results are based on a key observation: the global graph learning problem decouples into $n$ local problems -- one for each node. For a node of degree $d$, we show that its neighborhood can be reliably found once it has been infected $O(d^2 \log n)$ times (for ML on general graphs) or $O(d\log n)$ times (for greedy on trees). We also provide a corresponding information-theoretic lower bound of $\Omega(d\log n)$; thus our bounds are essentially tight. Furthermore, if we are given side-information in the form of a super-graph of the actual graph (as is often the case), then the number of cascade samples required -- in all cases -- becomes independent of the network size $n$. 

Finally, we show that for a very general SIR epidemic cascade model, the Markov graph of infection times is obtained via the moralization of the network graph. 

\noindent {\bf Keywords:} Epidemics, cascades, network inverse problems, structure learning, sample complexity, Markov random fields
\end{abstract}

\section{Introduction}

Cascading, or epidemic, processes are those where the actions, infections or failure of certain nodes increase the susceptibility of other nodes to the same; this results in the successive spread of infections / failures / other phenomena from a small set of initial nodes to a much larger set. Initially developed as a way to study human disease propagation, cascade or epidemic processes have recently emerged as popular and useful models in a wide range of application areas. Examples include \\
{\em (a) social networks:} cascading processes provide natural models for understanding both the consumption of online media (e.g. viral videos, news articles\cite{LerGho10}) and spread of ideas and opinions (e.g. trending of topics and hashtags on Twitter/Facebook\cite{Zhou10}, keywords on blog networks\cite{Gruhletal04}) \\
{\em (b) e-commerce:} understanding epidemic cascades (and, in this case, finding influential nodes) is crucial to viral marketing \cite{KemKleTar03}, and predicting/optimizing uptake on social buying sites like Groupon etc. \\
{\em (c) security and reliability:} epidemic cascades model both the spread of computer worms and malware \cite{KepWhi91}, and cascading failures in infrastructure networks \cite{Kosetal99,Sacetal00} and complex organizations \cite{Per99}. \\
{\em (d) peer-to-peer networks:} epidemic protocols, where users sending and receiving (pieces of) files in a random uncoordinated fashion, form the basis for 
many popular peer-to-peer content distribution, caching and streaming networks \cite{MasTwi08,Bonetal08}. 

{\bf Structure Learning:} The vast majority of work on cascading processes has focused on understanding how the graph structure of the network (e.g. power laws, small world, expansion etc.) affects the spread of cascades. We focus on the {\em inverse problem:} if we only observe the states of nodes as the cascades spread, can we infer the underlying graph ? Structure learning is the crucial first step before we can {\em use} network structure; for example, before we find influential nodes in a network (e.g. for viral marketing) we need to know the graph. Often however we may only have crude, prior information about what the graph is, or indeed no information at all. \\
For example, in online social networks like Twitter or Facebook, we may have access to a {\em nominal} graph of all the friends of a user. However, clearly not all of them have an equal effect on the user's behavior; we would like to find the sub-graph of important links. In several other settings, we may have no a-priori information; examples include information forensics that study the spread of worms, and offline settings like real-world epidemiology and social science. The standard practice seems to be to use crude/nominal subgraphs if they exist (e.g. Twitter), or find graphs by other means (e.g. surveys). We propose to take a {\em data-driven} approach, finding graphs from observations of the cascades themselves.

While structure learning from cascades is an important primitive, there has been very little work investigating it (we summarize below). There are two related issues that need to be addressed: {\em (a) algorithms:} what is the method, and its complexity, and {\em (b) performance:} how many observations are needed for reliable graph recovery? The main intellectual contribution of this paper is characterizing the performance of two algorithms we develop, and a lower bound showing they perform close to optimal. To the best of our knowledge, there exists no prior work on performance analysis (i.e. characterizing the number of observations needed) for learning graphs of epidemic cascades.

\subsection{Summary of Our Results}

We present two algorithms, and information-theoretic lower bounds, for the problem of learning the graph of an epidemic cascade when we are given prior information of a super-graph\footnote{Of course if no super-graph is given, it can be taken to be the complete graph.}. It is not possible to learn the graph from a single cascade; we study the number of cascades required for reliable learning. Key outcomes of our results are that {\em (i)} epidemic graph learning can be done in a fast, distributed fashion, {\em (ii)} with a number of samples that is close to the lower bound. Our results:

{\bf (a) Maximum Likelihood:} We show that, via a suitable change of variables, the problem of finding the graph most likely to generate the cascades we observe {\em decouples} into $n$ convex problems -- one for each node. Further, for node $i$, the algorithm requires as input only the infection times of that node's size-$D_i$ super-neighborhood; it is local both in computation and in the information requirement. Our main result here is to establish that for this efficient algorithm, if $d_i$ is the size of the true neighborhood, then node $i$ needs to be infected  $O(d_i^2\log D_i)$ times before we learn it, for a general graph. 

{\bf (b) Greedy algorithm:} We show that if the graph is a tree, then a natural greedy algorithm is able to find the true neighborhood of a node $i$ with only $O(d_i\log D_i)$ samples. The greedy algorithm involves iteratively adding to the neighborhood the node which ``explains" (i.e. could be the likely cause of) the largest number of instances when node $i$ was infected, and removing those infections from further consideration. 

{\bf (c) Lower bounds:} We first establish a general information-theoretic lower bound on the number of cascade samples  required for approximate graph recovery, for general (but abstract) notions of approximation, and for any SIR process. We then derive two corollaries: one for learning a graph upto a specified edit distance when there is no super-graph information, and another for the case when there is a super-graph, and specified edit distances for each of the nodes. These bounds show that the ML algorithm is at most a factor $d$ away from the optimal. 

{\bf (d) Markov structure of general cascades:} Every set of random variables has an associated Markov graph. In our final result, we show  that for a very general SIR epidemic cascade model -- essentially any that is causal with respect to time and the directed network graph -- the (undirected) Markov graph of the (random) infection times is the {\em moralized} graph of the true directed network graph on which the epidemic spreads. This allows for learning graph structure using techniques from Markov Random Fields / graphical models, and also illustrates the role of causality.

While here we used the $O(\cdot)$ and $\Omega(\cdot)$ notation for compact statement, we emphasize that our results are {\em non-asymptotic}, and thus more general than a merely asymptotic result. Thus for fixed values of system parameters and probabilities of error, we give precise bounds on the number of cascades we need to observe. If one is interested in asymptotic results under particular scaling regimes for the parameters, such results can be derived as corollaries of our algorithms (with union bounds if one is interested in complete graph recovery).

A nice feature of our results is that both the algorithms work on a {\em node by node basis}. Thus for recovering the neighbors of a node we only need information about its super-neighborhood, and solve a local problem. We are also able to find the neighborhood of one or a few nodes, without worrying about finding the neighborhoods of other nodes or the entire graph. Similarly, the number of samples required to recover the neighborhood of a node depend only on the sizes of its own neighborhood and super-neighborhood. 

\subsection{Related Work}

\noindent {\bf Learning graphs of epidemic cascades:} While structure learning from cascades is an important primitive, there has been very little work investigating it: \\
{\em (a) algorithms:} A recent paper \cite{RodLesKra10} investigates learning graphs from infection times for the independent cascade model (similar setting as our paper). However, they take an approach that results in an NP-hard combinatorial optimization problem, which they show can be approximated. Another paper \cite{MyeLes10} shows max-likelihood estimation in the independent cascade model can be cast as a decoupled convex optimization problem (albeit a different one from ours). \\
{\em (b) performance:} To the best of our knowledge, there has been no work on the crucial question of how many cascades one needs to observe to learn the graph; indeed, this question is the main focus of our paper.

\noindent {\bf Markov graph structure learning:} The ideas in this paper are related to those from Markov Random Fields (MRFs, aka Graphical Models) in statistics and machine learning, but there are also important differences. We overview the related work, and contrast it to ours, in Section \ref{sec:markov}.

\section{System Model}\label{sec:sysmodel}

Most of the analytical results of this paper are for the classic and popular {\em independent cascade} model of epidemics; in particular we will consider the simple one-step model first proposed in \cite{GolLibMul01} and recently popularized by Kempe, Kleinberg and Tardos \cite{KemKleTar03}. 

{\bf Standard independent cascade epidemic model \cite{KemKleTar03}:} The network is assumed to be a {\em directed} graph $G=(V,E)$;  for every directed edge $(i,j)$ we say $i$ is a parent and $j$ is a child of the corresponding other node. Parent may infect child along an edge, but the reverse cannot happen; we allow bi-directed edges (i.e. it is possible that $(i,j)$ and $(j,i)$ are in $E$). Let $\V_i := \{j : (j,i)\in E\}$ denote the set of parents of each node $i$, and for convenience we also include $i\in \V_i$. Epidemics proceed in discrete time; all nodes are initially in the {\em susceptible} state. At time 0, each node tosses a coin and independently becomes {\em active}, with probability $p_{init}$. This set of initially active nodes are called {\em seeds}. In every time step each active node probabilistically tries to infect its susceptible children; if node	 $i$ is active at time $t$, it will infect each susceptible child $j$ with probability $p_{ij}$, independently. Correspondingly, a node $j$ that is susceptible at time $t$ will become active in the next time step, i.e. $t+1$, if {\em any one} of its parents infects it. Finally, a node remains active for only {\em one} time slot, after which it becomes {\em inactive:} it does not spread the infection, and cannot be infected again. Thus this is an ``SIR" epidemic, where some nodes remain forever susceptible because the epidemic never reaches them, while others transition according to: \\ $\quad \quad$ {\bf susceptible $\rightarrow$ active for one time step $\rightarrow$ inactive.}
A sample path of the independent cascade model is illustrated in Figure \ref{fig:indcascade-evolution}.

\begin{figure}[h]
\centering
\subfigure{
\includegraphics[width=0.22\linewidth]{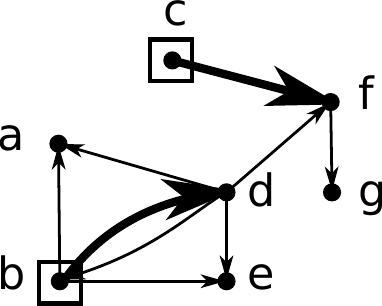}
\label{fig:indcascade-t0}
}
\subfigure{
\includegraphics[width=0.22\linewidth]{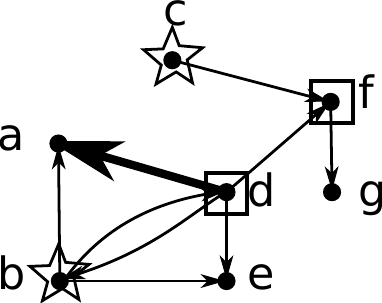}
\label{fig:indcascade-t1}
}
\subfigure{
\includegraphics[width=0.22\linewidth]{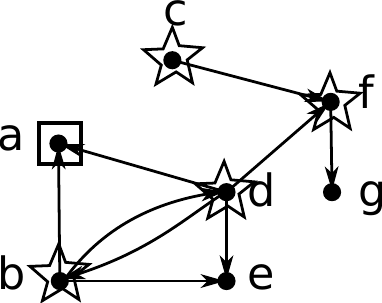}
\label{fig:indcascade-t2}
}
\subfigure{
\includegraphics[width=0.22\linewidth]{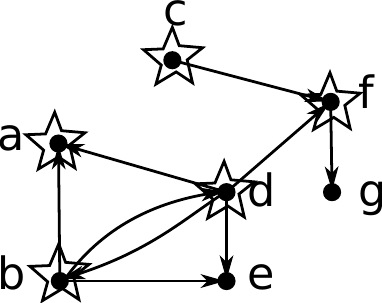}
\label{fig:indcascade-t3}
}
\caption{{\bf Illustration of the independent cascade model:} 
This figure illustrates a sample path of the evolution of the independent cascade model.
The four figures above represent the state of the system at time steps $0$, $1$, $2$ and $3$
respectively. A node with no box around it means that it is in susceptible state, a node with
a square around it means that it is active and a node with a star around it means that it is
inactive. At time step $0$, nodes $b$ and $c$ are chosen as seeds. They infect
$d$ and $f$ respectively and turn inactive. In time step $1$, $d$ infects $a$ where as $f$ fails
to infect any of its children. In time step $2$, $a$ does not have any children to infect. Once
$a$ turns inactive in time step $3$, the epidemic stops.}
\label{fig:indcascade-evolution}
\end{figure}

Note thus that the set parental set is $\V_i = \{j : p_{ji} >0\}$, i.e. the set of all nodes that have a non-zero probability of infecting $i$.

{\bf Observation model:} For an epidemic cascade $u$ that spreads over a graph, we observe for each node $i$ the time $t_i^u$ when $i$ became active. If $i$ is one of the seed nodes of cascade $u$ then $t^u_i=0$, and for nodes that are never infected in $u$ we set $t^u_i = \infty$. Let $t^u$ denote the vector of infection times for cascade $u$. We observe more than one cascade on the same graph; let $\U$ be the set of cascades, and $m = |\U|$ be the number, which we will often refer to as the {\em sample complexity}. Each cascade is assumed to be generated and observed as above, independent of all others.

{\bf (possible) Super-graph information:} In several applications, we (may) also have prior knowledge about the network, in the form of a directed {\em super-graph}\footnote{For example, on social networks like Facebook or Twitter, we may know the set of all friends of a user, and from these we want to find the ones that most influence the user.} of $G$. We find it convenient to represent super-graph information as follows: for each node $i$, we are given a set $\mS_i \subset V$ of nodes that contain its true parents; i.e. $\V_i \subset \mS_i$ for all $i$. In terms of edge probabilities, this means that $p_{ji} >0$ (strictly) for $j\in \V_i$, and $p_{ji} = 0$ for $j\in \mS_i \backslash \V_i$. Of course if no super-graph is available we can set $\mS_i = V$, the set of all nodes; so from now on we assume a $\mS_i$ is always available.

{\bf Problem description:} Using the vectors of infection times $\{t^u\}$ we are interested in finding the parental neighborhood $\V_i$, for some or all of the nodes $i$. That is, we want to find the set of nodes that can infect $i$. This is not possible when we only observe a single cascade; we will thus be interested in learning the graph from {\em as few cascades as possible.} 

Note that multiple seeds begin each cascade $u\in \U$; thus, for a single cascade even at time step 1 we will not be able to say with surety which seed infected which individual. 

{\bf Correlation decay:} Loosely speaking, random processes on graphs are said to have ``correlation decay" if far away nodes have negligible effects. For our problem, this means that the cascade from each seed does not travel too far. Formally, all the results in this paper assume that there exists a number $\alpha >0$ such that for every node $i$, the sum of all probabilities of incoming edges satisfies $\sum_k p_{ki} < 1 - \alpha$. The following lemma clarifies what this assumption means for the infection times of a node.
\begin{lemma}\label{lem:corrdecay-probbound}
For any node $i$ and time $t$, we have
\begin{align*}
  \mathbb{P}\left[T_i = t \right] \leq \left(1-\alpha\right)^{t-1} p_{\textrm{init}}
\end{align*}
\end{lemma}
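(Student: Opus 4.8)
The plan is to reduce the claim to a one-step recursion on the marginal infection-time distribution and then solve that recursion by induction on $t$. The governing observation is a domination: in the independent cascade model a node $i$ can become active at a time $t \geq 1$ only if it was still susceptible through time $t-1$ and at least one of its parents $k$ was active at time $t-1$ (i.e. $T_k = t-1$) and successfully transmitted the infection across the edge $(k,i)$ at that step. Hence the event $\{T_i = t\}$ is contained in $\bigcup_{k \in \V_i} \left(\{T_k = t-1\} \cap \{k \text{ transmits to } i \text{ at step } t-1\}\right)$, and a union bound over the parents will furnish an upper bound on $\mathbb{P}[T_i = t]$.

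The key technical point is that, under the natural coupling in which each edge $(k,i)$ is assigned an independent $\mathrm{Bernoulli}(p_{ki})$ transmission coin for every time step, the coin governing edge $(k,i)$ at step $t-1$ is independent of the event $\{T_k = t-1\}$; the latter is measurable with respect to the cascade history through time $t-1$ (it is determined by whether $k$'s own parents infected it at step $t-2$), whereas the former is a fresh draw. Combining this independence with the union bound yields
\begin{align*}
\mathbb{P}[T_i = t] \;\leq\; \sum_{k \in \V_i} \mathbb{P}[T_k = t-1]\, p_{ki}.
\end{align*}

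From here I would set $q_t := \max_i \mathbb{P}[T_i = t]$ and invoke the correlation-decay hypothesis $\sum_k p_{ki} < 1-\alpha$ to bound the right-hand side, uniformly in $i$, by $q_{t-1}\sum_k p_{ki} \leq (1-\alpha)\, q_{t-1}$; this gives the recursion $q_t \leq (1-\alpha)\, q_{t-1}$ for all $t \geq 1$. The base case is immediate, since a node is active at time $0$ exactly when it is a seed, so $q_0 = p_{\textrm{init}}$. Unrolling the recursion gives $\mathbb{P}[T_i = t] \leq q_t \leq (1-\alpha)^t p_{\textrm{init}}$, which is in fact slightly stronger than the stated bound $(1-\alpha)^{t-1} p_{\textrm{init}}$ (as $1-\alpha \leq 1$); the stated form then follows at once.

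The only step requiring genuine care is the domination together with the independence argument behind it: one must confirm that conditioning on which parents are active at step $t-1$ does not bias the fresh transmission coins of that step, so that the two events factorize. Everything else is a routine induction, and the correlation-decay assumption is exactly what makes the geometric decay close. A minor bookkeeping remark is the self-inclusion convention $i \in \V_i$; since $i$ cannot cause its own infection, that term carries $p_{ii}=0$ and may be dropped.
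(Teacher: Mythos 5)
Your proof is correct, and it reaches the bound by a genuinely different induction than the paper's. The paper inducts on the \emph{number of nodes} $n$: to control $\mathbb{P}_G[T_i=t]$ it bounds the probability that a given parent $k$ infects $i$ at time $t$ by $\mathbb{P}_{\widetilde{G}}[T_k = t-1]\,p_{ki}$, where $\widetilde{G} := G\setminus i$ is the graph with node $i$ deleted, applies the induction hypothesis to the smaller graph $\widetilde{G}$ (which inherits the same decay coefficient $\alpha$), and finishes with the same union bound over parents and the hypothesis $\sum_k p_{ki} < 1-\alpha$ that you use. You instead induct on \emph{time} $t$ within the fixed graph $G$, setting $q_t := \max_i \mathbb{P}[T_i = t]$ and justifying the one-step inequality $\mathbb{P}[T_i=t] \leq \sum_{k\in\V_i}\mathbb{P}[T_k=t-1]\,p_{ki}$ directly, via the observation that the transmission coin on edge $(k,i)$ at step $t-1$ is a fresh draw independent of the history through time $t-1$, hence of $\{T_k = t-1\}$. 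The two proofs locate the delicate point in different places: your route must verify the fresh-coin factorization inside $G$ (which you argue correctly --- conditioning on the full history through $t-1$, any configuration with $k$ active and $i$ susceptible sees the coin succeed with probability exactly $p_{ki}$, and the coin can only influence states from time $t$ onward, so no feedback through $i$ can contaminate $\{T_k=t-1\}$), whereas the paper's node-removal sidesteps that discussion but tacitly relies on a coupling: on the event that $i$ is still susceptible through $t-1$, the dynamics on $V\setminus\{i\}$ coincide with those of $\widetilde{G}$, so $\mathbb{P}_G[\,k \text{ infects } i \text{ at } t\,] \leq \mathbb{P}_{\widetilde{G}}[T_k=t-1]\,p_{ki}$. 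Your argument is arguably more elementary (no auxiliary graph, no coupling), and as you note it even yields the marginally sharper bound $(1-\alpha)^{t}\,p_{\textrm{init}}$, since the base case $q_0 = p_{\textrm{init}}$ is exact and each of the $t$ steps contributes a full factor $(1-\alpha)$; the paper's stated $(1-\alpha)^{t-1}p_{\textrm{init}}$ follows a fortiori. Your closing remark on the convention $i\in\V_i$ is also handled consistently with the paper, which effectively carries no self-transmission term in the spread dynamics.
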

Thus, the probability $\mathbb{P}[T_i < \infty]$  that a node is infected satisfies $p_{init} < \mathbb{P}[T_i < \infty] < \frac{p_{init}}{\alpha}$. Also, the average distance from a node to any seed that infected it is at most $\frac{1}{\alpha}$. We discuss the case where there is no correlation decay in the Discussion section.

{\bf Interpreting the results:} Each cascade we observe provides some information about the graph. Suppose we want to infer the presence, or absence, of the directed edge $(i,j)$ (i.e. if $p_{ij} >0$ or not). Note that if the parent $i$ is not infected in a cascade, then {\em that cascade provides no information} about $(i,j)$: since the parent was never infected, no infection attempt was made using that edge; the ``edge activation variable" was never sampled. While our theorems are in terms of the total number $m$ of cascades needed for graph estimation, for a meaningful interpretation of this number one needs to realize that the expected number of times we get {\em useful} information about any edge is, on average, between $m p_{init}$ and $m p_{init}/\alpha$. These are also the bounds on the average number of times a particular node is infected in a particular cascade.

We provide both upper bounds (via two learning algorithms), and (information theoretic) lower bounds on the sample complexity. Note that the {\em execution} of our algorithms does not require knowledge of these parameters like $p_{init}, \alpha$ etc.; these are defined only for the {\em analysis}.

\section{Maximum Likelihood}

The graph learning problem can be interpreted as a parameter estimation problem: for each cascade, the vector $T$ of infection times  is a set of random variables that has a joint distribution which is determined by a set of parameters $p_{ji} \geq 0$ for every $i$ and $j\in \mS_i$. We want to find these parameters, or more specifically the identities of the edges where they are non-zero, from samples $t^u$, $u\in \U$. Each choice of  parameters has an associated probability, or likelihood, of generating the infection times we observe. The classical {\em Maximum-likelihood (ML) estimator} advocates picking the parameter values that maximize this likelihood.

Our crucial insight in this section is that, with an appropriate change of variables the likelihood function has a particularly nice (decoupled, convex) form, enabling both efficient implementation and analysis. In particular, define  $\theta_{ij} := - \log (1-p_{ij})$ ; note that  $p_{ij} = 0 \Leftrightarrow \theta_{ij} = 0$. 

Further, for each node $i$ let $\theta_{*i} := \{\theta_{ji}\,;\,j\in \mS_i\}$ be the set of parameters corresponding to the possible parents $\mS_i$ of node $i$. Let $\theta$ be the set of all parameters of the graph. Note that $\theta \geq 0$ (i.e. every parameter is positive or zero). Finally, we define the {\em log-likelihood} of a vector $t$ of samples to be 
\[
\mathcal{L}(t;\theta) ~ := ~ \log \left ( \mathrm{Pr}_\theta[T=t] \right )
\]
The proposition below shows how $\mL$ decouples into convex functions with this change of variables.

\begin{prop}[convexity \& decoupling]\label{prop:likelihood} 
For any vector of parameters $\theta$, and infection time vector $t$, the log-likelihood is given by
\[
\mathcal{L}(t;\theta) ~ = ~ \log (p_{init}^s (1-p_{init})^{n-s}) \, + \sum_i \mathcal{L}_i(t_{\mS_i};\theta_{*i}) 
\]
where $s$ is the number of seeds (i.e. nodes with $t_i = 0$), and the node-based term
\[
\mathcal{L}_i(t_{\mS_i};\theta_{*i}) ~ := ~ - \sum_{j:t_j\leq t_i-2} \theta_{ji} \, + \, \log \left ( 1 - \exp \left ( -\sum_{j:t_j = t_i -1} \theta_{ji} \right ) \right ) 
\]
Furthermore, $\mathcal{L}_i(t_{\mS_i};\theta_{*i})$ is a concave function of $\theta_{*i}$, for any fixed $t_{\mS_i}$.
\end{prop}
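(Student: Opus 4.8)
The plan is to reduce the whole cascade to two collections of independent coins and then show the likelihood factorizes over nodes. Represent the randomness by a seed coin $S_k\sim\mathrm{Bernoulli}(p_{init})$ for each node $k$, and a single ``live edge'' coin $X_{jk}\sim\mathrm{Bernoulli}(p_{jk})$ for each directed edge $(j,k)$. Given these, the cascade is deterministic: $k$ is a seed iff $S_k=1$, and the infection time $T_k$ equals $1$ plus the smallest infection time among parents whose edge into $k$ is live (and $T_k=\infty$ if no live path from a seed reaches $k$). The first observation I would make is that the coin $X_{jk}$ influences only the child $k$, so the edge coins feeding distinct children form disjoint groups; together with the independent seed coins, this disjointness is exactly what will produce the decoupling.

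The key step is to rewrite the event $\{T=t\}$ as an intersection of per-node conditions on these disjoint coin groups. I claim $\{T=t\}$ holds iff (i) $S_k=\mathds{1}[t_k=0]$ for every $k$, and (ii) for every non-seed $i$ with $t_i<\infty$, no parent active before $t_i-1$ has a live edge into $i$ (that is $X_{ji}=0$ whenever $t_j\le t_i-2$), while at least one parent with $t_j=t_i-1$ does; edges from parents with $t_j\ge t_i$ are left free, since they would act on an already-infected node. The forward direction is immediate from the infection rule. The reverse direction I would establish by induction on time, showing that conditions (i)--(ii) force the deterministic process to infect each node exactly at its prescribed time. Because condition (ii) at $i$ constrains only the coins $\{X_{ji}\}_j$ into $i$, the events for different nodes are independent, as is the seed condition.

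Multiplying the per-node probabilities then yields the factorization. The seed condition contributes $p_{init}^{s}(1-p_{init})^{n-s}$. For a non-seed $i$ with finite $t_i$, the constraints on the two disjoint edge groups multiply, giving
\[
\Pr[\text{(ii) at }i] ~=~ \Bigl(\prod_{j:\,t_j\le t_i-2}(1-p_{ji})\Bigr)\Bigl(1-\prod_{j:\,t_j=t_i-1}(1-p_{ji})\Bigr),
\]
and substituting $1-p_{ji}=e^{-\theta_{ji}}$ and taking logarithms reproduces $\mathcal{L}_i(t_{\mS_i};\theta_{*i})$ verbatim. Seeds impose a vacuous condition and are absorbed into $p_{init}^{s}$ (so $\mathcal{L}_i:=0$ there), while never-infected nodes ($t_i=\infty$) retain only the product-of-failures factor $\prod_{j:\,t_j<\infty}(1-p_{ji})$, i.e. the first, linear term of $\mathcal{L}_i$ with the second index set empty. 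Since $\theta_{ji}=0$ off the super-neighborhood, every sum involves only $j\in\mS_i$, which is why $\mathcal{L}_i$ depends on $t$ only through $t_{\mS_i}$.

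For concavity I would read it straight off the formula: the term $-\sum_{j:\,t_j\le t_i-2}\theta_{ji}$ is linear in $\theta_{*i}$, and for the second term I would set $u:=\sum_{j:\,t_j=t_i-1}\theta_{ji}$, an affine function of $\theta_{*i}$, and note that $g(u):=\log(1-e^{-u})$ satisfies $g''(u)=-e^{u}/(e^{u}-1)^2<0$, hence is concave; a concave function composed with an affine map stays concave, and a sum of a linear and a concave function is concave. I expect the main obstacle to be the reverse direction of the event-equivalence in the second step: one must run the time-induction carefully and correctly account for the edge coins that are never effectively sampled (parents active at or after $t_i$), since it is precisely their being free that makes the per-child events clean, disjoint, and independent.
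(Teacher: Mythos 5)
Your proof is correct, but it takes a genuinely different route from the paper's. The paper works \emph{dynamically}: it passes to the Markov state process $X(\tau)$, applies the chain rule over time steps, factors each one-step transition probability over nodes (since nodes are infected independently given the current active set), and then collects, for each node $i$, the per-time factors $a_i(\tau)$ into exactly the two terms of $\mathcal{L}_i$. You instead work \emph{statically}, via the live-edge (percolation) coupling in the style of Kempe--Kleinberg--Tardos: one Bernoulli coin per directed edge, the cascade a deterministic function of the coins, and $\{T=t\}$ rewritten as an intersection of per-node events on disjoint coin groups, whose probabilities multiply. Both arguments land on the same per-node factor $\bigl(\prod_{j:t_j\le t_i-2}(1-p_{ji})\bigr)\bigl(1-\prod_{j:t_j=t_i-1}(1-p_{ji})\bigr)$, and your concavity argument is identical to the paper's. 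What your route buys: it makes vividly explicit which edge variables are never effectively sampled (exactly the point the paper makes informally in ``Interpreting the results''), and you handle the boundary conventions at $t_i=0$ and $t_i=\infty$ more explicitly than the paper does. What it costs: you need the extra time-induction to verify the event equivalence (which you correctly flag and sketch --- note that the ``iff'' in your second paragraph should, as stated, also include the condition $X_{ji}=0$ for all $j$ with $t_j<\infty$ whenever $t_i=\infty$; you do impose this when computing the never-infected factor, so it is an organizational slip rather than a gap), and the one-coin-per-edge reduction is special to the one-step model: it does not transfer directly to the generalized independent cascade model of Proposition \ref{prop:likelihood-genIC}, where an edge's infection attempt spans several time slots with per-delay probabilities (one would need a random delay variable or a coin per edge--delay pair), nor to the general causal processes of Section \ref{sec:markov}, whereas the paper's time-ordered chain-rule factorization extends to both essentially verbatim.
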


{\bf Proof:} Please see appendix. 

{\bf Remark:} The overall log-likelihood $\mathcal{L}(t;\theta)$ has now decoupled because it is the sum of $n$ terms of the form $\mathcal{L}_i(t;\theta_{*i})$, each of which depend on a {\em different} set of variables $\theta_{*i}$. Thus each one can be optimized, and analyzed, in isolation.

The {\em algorithmic} implications of this proposition are: \\
{\em (a)} if we are only interested in a small subset of nodes, we can find their parental neighborhood by solving a separate $|\mS_i|$-variable convex program for each one, \\
{\em (b)} even if we want to find the entire graph, the decoupling allows for parallelization, and speedup: solving $n$ convex programs with $n$ variables each is much faster than solving one program with $n^2$ variables. \\
{\em (c)} The function $\mathcal{L}_i$ is fully determined by the times $t_{\mS_i}$ of the node's super-neighborhood; it does not need knowledge of the infection times of other nodes.

Proposition \ref{prop:likelihood} is equally crucial {\em analytically,} as it enables us to derive bounds on the number of cascades required for us to reliably select the neighborhood, via analysis of the first-order optimality conditions of the convex program. In particular, we will see that complementary slackness conditions from convex programming, and concentration results, are key to proving our results on the sample complexity of the ML procedure.

The ML algorithm for finding the parental neighborhood of node $i$ is formally stated below. it involves solving the convex program corresponding to the max-likelihood, and setting small values of $\theta_{ji}$ to 0. The threshold for this cut-off is $\eta$, which is an input to the procedure.

\begin{algorithm}[h]
\caption{ML Algorithm for Node $i$}
\label{algo:ML}
\begin{algorithmic}[1]
\STATE Find the optimizer of the empirical likelihood, i.e. find
\[
\widehat{\theta}_{*i} ~ := ~ \arg \max_{\theta_{*i}} \,  \sum_u \, \mathcal{L}_i(t^u_{\mS_i};\theta_{*i})
\]
where $\mathcal{L}_i(t_{\mS_i};\theta_{*i})$ is as defined in Prop. \ref{prop:likelihood}.
\STATE Estimate the parental neighborhood by thresholding: 
\[
\widehat{\V}_i ~ := ~ \{ j : \widehat{\theta}_{ji} \geq \eta \}
\]
\STATE Output $\widehat{\V}_i$.
\end{algorithmic}
\end{algorithm}

Our main analytical result of this section is a characterization of the performance of this ML algorithm, in terms of the number of cascades it needs to reliably estimate the parental neighborhood of any node $i$.

\begin{theorem}\label{thm:samplecomplexity_maxLL}
Consider a node $i$ with true parental degree $d_i := |\V_i|$, and super-graph degree $D_i := |\mS_i|$. Let $p_{i,min} := \min_{j\in \V_i} p_{ji}$ be the strength of the edge from the weakest parent. Assume $d_i p_{init} < \frac{1}{2}$. Then, for any $\delta >0$, if the number of cascades $m = |\U|$ satisfies
\begin{equation}
m ~ > ~ \frac{c}{p_{init}} \left ( \frac{1}{\alpha^7 \eta^2  p_{i,min}^2} \right )  d_i^2 \log \left(\frac{D_i}{\delta}\right) \label{eq:ML_sampcomp}
\end{equation}
Then, with probability greater than $1-\delta$, the estimate $\widehat{\V}_i$ from the ML algorithm with threshold $\eta$ will have \\
{\em (a)} no false neighbors, i.e. $\hat{V}_i \subset \V_i$, and \\
{\em (b)} all strong enough neighbors: if $j\in \V_i$ and $p_{ji} > \frac{8}{\alpha} (e^{2\eta} - 1)$, then $j\in \widehat{\V}_i$ as well. \\
Here $c$ is a number independent of any other system parameter.
\end{theorem}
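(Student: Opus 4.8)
The plan is to leverage Proposition~\ref{prop:likelihood}: since the empirical objective for node $i$ depends only on $\theta_{*i}$ and is concave over the feasible set $\{\theta_{*i}\ge 0\}$, I would reduce the theorem to showing that the maximizer $\widehat\theta_{*i}$ is close, coordinatewise, to the true parameter $\theta^*_{*i}$ (where $\theta^*_{ji}=-\log(1-p_{ji})$), and then invoke the threshold $\eta$. Concretely, write $F(\theta_{*i}):=\mathbb E[\mL_i(T_{\mS_i};\theta_{*i})]$ for the population objective and $\widehat F := \frac1m\sum_u \mL_i(t^u_{\mS_i};\cdot)$ for the empirical one. Since maximizing $F$ over the orthant recovers the truth (the population score at a true parent vanishes, and at a non-parent it is $\le 0$ by the boundary optimality condition), part~(a) will follow from $\widehat\theta_{ji}<\eta$ for $j\notin\V_i$ and part~(b) from $\widehat\theta_{ji}\ge \eta$ for strong parents, once the deviation $\widehat\theta_{*i}-\theta^*_{*i}$ is controlled.

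Because a naive global strong-concavity bound over all $D_i$ coordinates would cost a factor $\sqrt{D_i}$ (killing the target $\log D_i$ dependence), I would instead use a \emph{primal--dual witness} construction. First solve the restricted program over the true support $\V_i$ only (the remaining coordinates held at $0$), obtaining $\widehat\theta_{\V_i}$, and then verify that padding it with zeros satisfies the first-order (complementary slackness) conditions of the full $D_i$-variable program. The verification has two halves: \textbf{(i) restricted accuracy}, bounding $\|\widehat\theta_{\V_i}-\theta^*_{\V_i}\|_\infty$ via restricted strong concavity of $F$ on the $d_i$-dimensional support together with concentration of $\nabla_{\V_i}\widehat F(\theta^*)$ around $\nabla_{\V_i}F(\theta^*)=0$; and \textbf{(ii) strict dual feasibility}, showing that for every non-parent $j\notin\V_i$ the empirical partial derivative $\partial_j\widehat F$ at the witness is strictly negative, which by complementary slackness forces $\widehat\theta_{ji}=0$ and hence gives part~(a) with room to spare. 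The $\log(D_i/\delta)$ factor enters precisely here, through a union bound of scalar concentration inequalities over the $D_i-d_i$ non-parent coordinates; the factor $1/p_{init}$ and the powers of $\alpha$ enter through Lemma~\ref{lem:corrdecay-probbound}, which makes each per-cascade summand bounded and geometrically light-tailed so Hoeffding/Bernstein applies with variance proxy controlled by $p_{init}$ and $\alpha$.

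For the concentration step I would differentiate $\mL_i$: the linear part contributes indicators $-\mathbf 1(t_j\le t_i-2)$, while the log term contributes $\mathbf 1(t_j=t_i-1)\,(e^{x}-1)^{-1}$ with $x=\sum_{k:t_k=t_i-1}\theta_{ki}$. The delicate point is that $(e^x-1)^{-1}$ blows up as $x\to 0$, so I must argue that at and near $\theta^*$ the relevant sum is bounded below whenever a true parent is active---this is where the weakest-edge strength $p_{i,min}$ enters---and bound the summands accordingly, both before applying the concentration inequality and, separately, before lower-bounding the Hessian.

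The step I expect to be the main obstacle is the \textbf{curvature lower bound} underlying (i). The Hessian of $\mL_i$ from a single cascade is only rank one---it is $g''(x)\,\mathbf 1_A\mathbf 1_A^{\!\top}$ supported on the set $A$ of parents active at time $t_i-1$, with $g''(x)=-e^{-x}/(1-e^{-x})^2$---so no single sample is strongly concave in all $d_i$ directions. I would therefore lower-bound $\lambda_{\min}(\nabla^2_{\V_i}F(\theta^*))$ by isolating cascades in which few parents are simultaneously active (so $g''$ is not exponentially small and the rank-one contributions are nearly diagonal), using correlation decay to guarantee such events have probability bounded below in terms of $p_{init}$ and $\alpha$, and then showing the empirical restricted Hessian concentrates around its expectation over the $d_i$-dimensional support. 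This yields a restricted strong-concavity constant that degrades with $d_i$ (roughly $\gamma\gtrsim p_{i,min}/\sqrt{d_i}$ up to $\alpha$ factors), the weakest direction being governed by the weakest edge. Combined with the $\ell_2\!\to\!\ell_\infty$ passage on the support (a second $\sqrt{d_i}$) and the requirement that the error fall below $\eta$, this produces the $d_i^2/(\alpha^7\eta^2 p_{i,min}^2)$ scaling in~\eqref{eq:ML_sampcomp}. Finally, part~(b) follows by noting that $p_{ji}>\tfrac{8}{\alpha}(e^{2\eta}-1)$ forces $\theta^*_{ji}$ to exceed $\eta$ by more than the established estimation error, so such $j$ survive the threshold.
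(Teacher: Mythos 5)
Your proposal is sound in outline but takes a genuinely different route from the paper's proof. You share the paper's starting point --- the population score at $\theta^*$ vanishes on $\V_i$ and is strictly negative off $\V_i$ (Proposition \ref{prop:grad-characterization} shows $\grad_j L(\theta^*)=-\mathbb{P}\left[T_i>T_j;\,T_k\neq T_j\ \forall k\in\V_i\right]$, bounded away from zero by $b=\alpha p_{init}/16$ for non-parents), the union-bounded coordinatewise concentration that produces $\log(D_i/\delta)$, and the $p_{i,min}$ device for taming $(e^x-1)^{-1}$. But where you invoke a primal--dual witness with restricted strong concavity, the paper never touches the Hessian: it bounds $\max_{j\in\V_i}\widehat{\theta}_{ji}$ directly from the stationarity condition at the largest coordinate, using counting statistics $m_{1,j}$ (cascades where $j$ is the sole candidate infector of $i$) and $m_{2,j}$ (cascades with $t_j\le t_i-2$); it then applies the first-order concavity inequality at $\theta^*$ together with optimality of $\widehat{\theta}$ to obtain $0\le -b\,\|\widehat{\theta}_{\V^c}\|_1 + a\,d_i\bigl(\max_{j\in\V_i}\widehat{\theta}_{ji}+\|\theta^*_{\V}\|_\infty\bigr)$, which with $a=\alpha^3\eta p_{init}/(144\,d_i)$ yields $\sum_{j\notin\V_i}\widehat{\theta}_{ji}<\eta$ (Lemma \ref{lemma:thetanonneighb-upperbound}). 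Thus part (a) is proved by thresholding an $\ell_1$ bound --- the paper never establishes $\widehat{\theta}_{ji}=0$ for non-parents --- and the $d_i^2$ in the sample complexity arises purely from requiring gradient accuracy $a\propto \eta/d_i$ (it gets multiplied by $d_i$ in the $\ell_1$ step), not from any curvature constant; part (b) follows from the stationarity condition $\grad_j \widehat{L}(\widehat{\theta})\le 0$ at parents using the same counts. Two caveats about your route: first, since the program is nonnegativity-constrained and the empirical likelihood need not be strictly concave off the support, strict dual feasibility at the witness must be supplemented by the standard (but necessary) argument that it forces \emph{every} maximizer to vanish off $\V_i$; second, your curvature bookkeeping is asserted rather than derived --- the claimed $\gamma\gtrsim p_{i,min}/\sqrt{d_i}$ and two $\sqrt{d_i}$ conversions are placeholders, and transferring dual feasibility from $\theta^*$ to the witness costs a Lipschitz factor of order $1/p_{i,min}^2$ for $(e^x-1)^{-1}$ on $x\gtrsim p_{i,min}$, so it is not evident your accounting reproduces the theorem's specific $p_{i,min}^{-2}$ and $\alpha^{-7}$ exponents (a witness analysis would more plausibly yield a bound of the same form with different polynomial factors). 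What each approach buys: yours would certify exact sparsity of $\widehat{\theta}$, stronger than the thresholding estimator needs, and would extend naturally to regularized variants; the paper's purely first-order argument sidesteps any minimum-eigenvalue analysis --- usually the hardest step in such proofs --- at the price of only controlling the off-support $\ell_1$ mass below the threshold.
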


\noindent {\bf Remarks:} 

{\em (a)} This is a {\em non-asymptotic} result that holds for {\em all} values of the system variables $d_i,p_{init},\alpha,p_{i,min},\eta$ and $\delta$. Appropriate asymptotic results can be derived as corollaries, if required. Note that this result on finding the nodes that influence node $i$ {\em does not depend on $n$}.

{\em (b)} We can learn the entire neighborhood, i.e. $\widehat{\V}_i = \V_i$, by choosing the threshold $\eta \leq \frac{1}{2} \log (1+\frac{\alpha p_{i,min}}{8})$ low enough, and the corresponding number of cascade samples $m$ according to (\ref{eq:ML_sampcomp}). Thus, the number of times node $i$ needs to be infected before we can reliably (i.e. with a fixed small error probability) learn its neighborhood scales as $O(d_i^2 \log D_i)$ (for fixed values of other system variables). Our result allows for learning stronger edges with fewer samples.

{\em (c)} If we want to learn the structure of the {\em entire} graph with probability greater than $\epsilon$, we can set $\delta = \epsilon / n$ and then take a union bound over all the nodes. So, for example, if every node has true degree at most $|\V_i| \leq d$, and super-graph degree $|\mS_i| \leq D$, then the number of samples needed to learn the entire graph (with probability at least $1-\epsilon$) scales as $O(d^2 \log \frac{Dn}{\epsilon})$ (for fixed values of other system variables).


{\em (d)} The average number of parents of $i$ that are seeds is $d_i p_{init}$. If this is large, then in every cascade there will be a reasonable probability of one of them being seeds, and infecting $i$ in the next time slot. This makes it hard to discern the neighborhood of $i$; the (mild) assumption $d_i p_{init} < \frac{1}{2}$ is required to counter this effect. Indeed, in most applications $p_{init}$ is likely to be quite small.

{\em (e)} Note that our results depend on the {\em in-degree} of nodes, not the out-degree. So for example it is possible to have high out-degree nodes (as e.g. in power-law graphs), and still be able to learn the graph with small number of samples. 

\subsection{Generalized Independent Cascade Model}

In this paper, for ease of analysis, we restrict our sample-complexity analysis to one-step independent cascade epidemics, where a node is active for only one time slot after it is infected. However, our algorithmic and bounding approaches apply to a more general class of independent cascade models. Specifically, we consider an extension where each parent now has a probability distribution of the amount of time it waits before infecting a child, and prove a generalization of Proposition \ref{prop:likelihood}, which was the key result enabling both the implementation and analysis of the ML algorithm.

Formally, let $p_{ji}^{\tau}$ denote the probability that an active node $j$ infects a susceptible child $i$, $\tau$ time steps
after $j$ was infected. The time taken for $j$ to infect $i$ is bounded by a parameter $\overline{t}$ i.e., $p_{ji}^{\tau}=0$
for $\tau>\overline{t}$. Note that if we have $\overline{t} = 1$, we recover the standard
independent cascade model. The total probability that $j$ infects $i$ is given by $\sum_{\tau\in[\overline{t}]} p_{ji}^{\tau}$
(which can be strictly less than $1$) where $[\overline{t}]$ denotes the set of integers between $1$ and $\overline{t}$
(including the end points). 

Following in the steps of Proposition \ref{prop:likelihood}, define $\theta_{ji}^{\tau} \defas -\log\left(\frac{1-\sum_{r\in[\tau]} p_{ji}^r}{1-\sum_{r\in[\tau-1]} p_{ji}^r}\right)$.
Note that given any parameter vector $p_{ji}^{\tau}$ we obtain the corresponding $\theta_{ji}^{\tau}$ and vice versa.
Moreover $\theta_{ji}^{\tau}=0 \Leftrightarrow p_{ji}^{\tau}=0$.
Suppose each node is seeded with the infection with probability $p_{\textrm{init}}$ and let
$\mL(t,\theta)$ denote the log-likelihood of the infection time vector $t$ when the parameters of the model are given by
$\theta$. We have the following version of Proposition \ref{prop:likelihood} for the generalized independent cascade model.

\begin{prop}\label{prop:likelihood-genIC}
  For any vector of parameters $\theta$, and infection time vector $t$, the log-likelihood is given by
\[
\mathcal{L}(t;\theta) ~ = ~ \log (p_{init}^s (1-p_{init})^{n-s}) \, + \sum_i \mathcal{L}_i(t_{\mS_i};\theta_{*i}) 
\]
where $s$ is the number of seeds (i.e. nodes with $t_i = 0$), and the node-based term
\[
\mathcal{L}_i(t_{\mS_i};\theta_{*i}) ~ := ~ - \sum_{j:t_j\leq t_i-2} \sum_{\tau \in [t_i-t_j-1]} \theta_{ji}^{\tau} \,
	+ \, \log \left ( 1 - \exp \left ( -\sum_{j:t_j<t_i} \theta_{ji}^{t_i-t_j} \right ) \right ) 
\]
Furthermore, $\mathcal{L}_i(t_{\mS_i};\theta_{*i})$ is a concave function of $\theta_{*i}$, for any fixed $t_{\mS_i}$.
\end{prop}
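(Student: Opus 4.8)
The plan is to follow the architecture of the proof of Proposition \ref{prop:likelihood}, generalizing each step from a single transmission opportunity per edge to a delay distribution over $[\overline{t}]$ opportunities. First I would introduce, for each edge $(j,i)$, a single independent latent variable $X_{ji}\in\{1,\dots,\overline{t},\infty\}$ recording the delay with which $j$ would transmit to $i$ after $j$ is infected, so that $\mathbb{P}[X_{ji}=\tau]=p_{ji}^{\tau}$ and $\mathbb{P}[X_{ji}=\infty]=1-\sum_{\tau\in[\overline{t}]}p_{ji}^{\tau}$. With this representation the cascade is a deterministic function of the seed indicators and the $\{X_{ji}\}$, and the two sources of randomness are mutually independent.

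Next I would establish the factorization. The seeds are exactly the nodes with $t_i=0$, contributing the factor $p_{init}^{s}(1-p_{init})^{n-s}$. Conditioned on the seeds, the event $\{T=t\}$ is equivalent to the conjunction, over all non-seed $i$, of the local events $\{\min_{j:t_j<t_i}(t_j+X_{ji})=t_i\}$, evaluated at the fixed coordinates of $t$. The equivalence in the nontrivial direction is a causal induction on the infection time (identical in spirit to Proposition \ref{prop:likelihood}): no node can become active before the candidate time dictated by its parents, and the seed / earliest-parent structure pins each $T_i$ to $t_i$. Because distinct nodes own disjoint sets of edge variables and the $X_{ji}$ are independent, the conditional probability factors as $\prod_i P_i$, so it suffices to show $\log P_i=\mathcal{L}_i(t_{\mS_i};\theta_{*i})$ for each node.

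The crux is the per-node computation. For a node $i$ with finite $t_i$, the local event requires every parent $j$ with $t_j<t_i$ to satisfy $X_{ji}\ge t_i-t_j$ (no early transmission) and at least one such parent to satisfy $X_{ji}=t_i-t_j$ (transmission exactly at $t_i$). Writing $Q_{ji}(\tau):=\mathbb{P}[X_{ji}\ge\tau]=1-\sum_{r\in[\tau-1]}p_{ji}^{r}$ and applying inclusion--exclusion to the ``at least one'' clause gives $P_i=\prod_{j:t_j<t_i}Q_{ji}(t_i-t_j)-\prod_{j:t_j<t_i}Q_{ji}(t_i-t_j+1)$. The definition of $\theta_{ji}^{\tau}$ is engineered precisely so that $e^{-\theta_{ji}^{\tau}}=Q_{ji}(\tau+1)/Q_{ji}(\tau)$, whence the telescoping identity $Q_{ji}(\tau)=\exp\left(-\sum_{r\in[\tau-1]}\theta_{ji}^{r}\right)$ (using $Q_{ji}(1)=1$). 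Substituting this into both products, factoring out the common first product, and collapsing the ratio in the second product into $\exp\left(-\sum_{j:t_j<t_i}\theta_{ji}^{t_i-t_j}\right)$ yields $\log P_i$ in exactly the claimed form; the restriction of the first sum to $j$ with $t_j\le t_i-2$ is automatic, since $t_j=t_i-1$ makes the inner index set $[t_i-t_j-1]=[0]$ empty. This telescoping is the step I expect to be the main obstacle --- not because it is hard, but because it is where the otherwise opaque definition of $\theta_{ji}^{\tau}$ must be exhibited as exactly the reparametrization that linearizes the log-survival probabilities; getting the cumulative-sum bookkeeping right is the only delicate part.

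Finally I would dispatch the remaining two points. For $t_i=\infty$ the local event is simply that every infected parent fails to ever transmit, contributing $\prod_{j:t_j<\infty}Q_{ji}(\overline{t}+1)=\exp\left(-\sum_{j:t_j<\infty}\sum_{\tau\in[\overline{t}]}\theta_{ji}^{\tau}\right)$, which matches the first term of $\mathcal{L}_i$ under the convention $\theta_{ji}^{\tau}=0$ for $\tau>\overline{t}$, the log term being absent since there is no infection event to explain. For concavity, the first term of $\mathcal{L}_i$ is linear in $\theta_{*i}$, while the second is $g(z)=\log(1-e^{-z})$ composed with the affine map $z=\sum_{j:t_j<t_i}\theta_{ji}^{t_i-t_j}$; since $g''(z)=-e^{z}/(e^{z}-1)^{2}<0$, $g$ is concave, and composition with an affine map preserves concavity, so $\mathcal{L}_i$ is concave in $\theta_{*i}$.
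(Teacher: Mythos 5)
Your proof is correct, but it reaches the factorization by a genuinely different mechanism than the paper. The paper's proof is time-sequential: it identifies $t$ with a sample path $x^0(\tau)$, applies the chain rule over time --- conditioning on the full history $X(0{:}\tau-1)$, since the generalized process is not Markov in the instantaneous state --- splits each one-step transition probability across nodes (conditional independence of infection attempts), and reads off the per-node factors $b_i(\tau)$ directly as conditional hazard probabilities, with $b_i(t_i) = 1-\prod_{j:t_j<t_i} \exp\left(-\theta_{ji}^{t_i-t_j}\right)$ coming straight from the definition of $\theta_{ji}^{\tau}$ as a conditional survival ratio. You instead build a static coupling to independent latent delay variables $X_{ji}$, rewrite $\{T=t\}$ as a conjunction of per-node min-events, and compute each factor as a difference of survival products, $\prod_j Q_{ji}(t_i-t_j) - \prod_j Q_{ji}(t_i-t_j+1)$, before telescoping. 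Both routes hinge on the same reparametrization identity ($e^{-\theta_{ji}^{\tau}} = Q_{ji}(\tau+1)/Q_{ji}(\tau)$) and produce identical algebra, but the bookkeeping trade-off differs: the chain rule gives the node-and-time factorization essentially for free, whereas your coupling must pay for it with the causal induction showing that $\{T=t\}$ equals the intersection of the local min-events --- a step you correctly flag but only sketch, and which needs both directions (no node is infected strictly before its $t$-value, and each $t_i$ is attained); this is where a careless version of your argument could leak probability. In exchange, your version makes the $t_i=\infty$ case, the emptiness of the inner sum when $t_j=t_i-1$, and the disjointness-based independence across nodes completely transparent, without setting up full-history conditioning. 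Your concavity argument is the same as the paper's.
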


{\bf Proof:} Please see appendix. 

\section{Greedy Algorithm}

We now analyze the sample complexity of a simple iterative greedy algorithm -- for the case when the graph is a tree\footnote{We believe (especially since we have correlation decay) that our results can be easily extended to the case of ``locally tree-like" graphs; e.g. random graphs from the Erdos-Renyi, random regular or several other popular models.}. The algorithm is of course defined for general graphs.

The idea is as follows: suppose we want to find the parents of node $i$ from a given set of cascades $\U$. In each cascade $u$, the set of nodes that could have possibly infected $i$ is the set of nodes $j$ for which $t^u_j = t^u_i -1$. In the first step, the algorithm thus picks the $j$ which has $t^u_j = t^u_i -1$ for the largest number of observed cascades. It then {\em removes} those cascades from further consideration (since they have been ``accounted for") and proceeds as before on the remaining cascades, stopping when all cascades are exhausted.

%

\begin{algorithm}[h]
\caption{Greedy Algorithm for Node $i$}
\label{algo:greedy}
\begin{algorithmic}[1]
\STATE 	Initialize unaccounted cascades $U = \mathcal{U}$
\STATE Initialize $\widehat{\V}_i = \emptyset$
\WHILE{$U \neq \emptyset$}
\STATE Find $k =  \arg \max_{j \in \mS_i} |\{ u\in U : t^u_j = t^u_i -1 \}|$
\STATE Add it : $\widehat{\V}_i \leftarrow \widehat{\V}_i  \cup k$
\STATE Remove cascades : $U \leftarrow U \setminus \{u : t_k^u = t_i^u - 1\}$
\ENDWHILE
\STATE Output $\widehat{\V}_i$
\end{algorithmic}
\end{algorithm}

Our main result for this section is below.

\begin{theorem}\label{thm:greedy-guarantees}
Suppose the graph $G$ is a tree, and the degree of node $i$ is $d_i := |\V_i|$.
Suppose also that $p_{\textrm{init}} < \frac{\alpha^2 p_{\textrm{min}}}{16ed}$.
If Algorithm \ref{algo:greedy} is given a super-neighbhorhood of size $D_i := |\mS_i|$, then for any $\delta >0$ if the number of samples satisfies
\[
m ~ > ~ \frac{c}{p_{init}} \left ( \frac{1}{p_{min}} \right ) d_i \log \frac{D_i}{\delta}
\]
then with probability at least $1-\delta$  the estimate from the greedy algorithm will be the same as the true neighborhood, i.e.  $\widehat{\V}_i = \V_i$. Here $c$ is a constant independent of any other system parameter.

\end{theorem}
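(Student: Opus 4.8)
The plan is to show that, on a high-probability ``good event,'' the greedy algorithm adds the nodes of $\V_i$ one per iteration and then halts, so that $\widehat{\V}_i=\V_i$. Two deterministic observations drive everything. Call a cascade $u$ \emph{relevant} if $i$ is infected at a finite positive time (cascades with $t_i^u\in\{0,\infty\}$ contribute to no node's count and are discarded). First, in every relevant cascade the node that actually infected $i$ is a true parent that was active at time $t_i^u-1$, so \emph{every relevant cascade has at least one candidate in $\V_i$}, where ``$j$ is a candidate in $u$'' means $t_j^u=t_i^u-1$. Hence once all of $\V_i$ has been added, every relevant cascade is removed and the loop must terminate. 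It therefore suffices to prove (i) the algorithm never adds a node outside $\V_i$, and (ii) it does not stop before all of $\V_i$ is added.

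The crux is a coincidence bound exploiting the tree structure, and this is the step I expect to be the main obstacle. For a non-parent $k$, let $k'$ be the neighbor of $i$ on the tree-path from $i$ to $k$. Deleting $i$ splits the tree into subtrees, one per neighbor; since any back-infection routed through $i$ reaches a subtree only at times $>t_i$, each node's infection time \emph{at values $\le t_i$} is determined solely by the seeds and coin-flips inside its own subtree, and these are independent across subtrees. The only parent of $i$ in $k$'s subtree is $k'$, so the event ``$k$ is a candidate while $k'$ is not'' forces $i$ to have been infected through a \emph{different} subtree than the one carrying $k$. The events $\{t_k=\tau\}$ and $\{i$ infected at $\tau+1$ via another subtree$\}$ are then independent, and Lemma~\ref{lem:corrdecay-probbound} bounds each by a geometric tail:
\[
\mathbb{P}\!\left[k \text{ candidate},\ k' \text{ not candidate}\right]\ \le\ \sum_{\tau\ge 1}(1-\alpha)^{\tau-1}p_{\mathrm{init}}\cdot(1-\alpha)^{\tau}p_{\mathrm{init}}\ \le\ \frac{p_{\mathrm{init}}^2}{\alpha}.
\]
(When $k'$ is not a parent of $i$ at all, $i$ is never infected through $k$'s subtree and the same bound holds for $\mathbb{P}[k \text{ candidate}]$ directly.) Making the branch-independence rigorous --- that candidacy depends only on subtree-local pre-$t_i$ times --- is the delicate point; everything downstream is bookkeeping.

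On the positive side, for each parent $j$ I would use the event ``$j$ is seeded, infects $i$, no other node of $\V_i$ is seeded, and a specified other node is not seeded,'' whose probability is at least $p_{\mathrm{init}}p_{ji}(1-p_{\mathrm{init}})^{d_i}\ge\tfrac12 p_{\mathrm{init}}p_{\mathrm{min}}$ using $d_i p_{\mathrm{init}}<\tfrac12$; in such a cascade $j$ is the \emph{unique} candidate in $\V_i$. Define the good event that, simultaneously: (a) for every parent $j$ and every non-parent $k$ with $k'=j$, the number of cascades in which $j$ is the unique $\V_i$-candidate and $k$ is not a candidate is at least $\tfrac14 m\,p_{\mathrm{init}}p_{\mathrm{min}}$; and (b) for every non-parent $k$ the count of $\{k \text{ candidate},\ k' \text{ not candidate}\}$ cascades is at most $2m\,p_{\mathrm{init}}^2/\alpha$. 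Each quantity is a sum of $m$ i.i.d.\ indicators, so a Chernoff bound together with a union bound over the $O(D_i)$ relevant nodes shows the good event holds with probability at least $1-\delta$ once $m$ exceeds the stated threshold; the $\log(D_i/\delta)$ comes from the union bound and the $1/(p_{\mathrm{init}}p_{\mathrm{min}})$ from the Chernoff exponent (the extra $d_i$ factor is conservative slack that also secures the separation below for the weakest parent).

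Finally I would run the induction on the good event. Suppose the first $r-1$ picks are distinct members of $\V_i$; a picked parent has zero remaining coverage, and any still-uncovered relevant cascade retains an \emph{unpicked} parent as candidate. If some $k\notin\V_i$ were the current $\arg\max$, consider $k'$: if $k'$ is already picked, all cascades in which $k'$ is a candidate are gone, so $k$'s remaining coverage is at most its $\{k,\ \neg k'\}$ count $\le 2m p_{\mathrm{init}}^2/\alpha$, which is beaten by any unpicked parent's surviving unique-candidate cascades $(\ge\tfrac14 m p_{\mathrm{init}}p_{\mathrm{min}})$; if $k'$ is unpicked, then $\mathrm{cov}(k')-\mathrm{cov}(k)$ equals (cascades where $k'$ is a candidate but $k$ is not) minus (cascades where $k$ is a candidate but $k'$ is not) $\ge\tfrac14 m p_{\mathrm{init}}p_{\mathrm{min}}-2m p_{\mathrm{init}}^2/\alpha>0$. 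The strict inequality uses $p_{\mathrm{init}}<\alpha p_{\mathrm{min}}/8$, implied by the hypothesis $p_{\mathrm{init}}<\alpha^2 p_{\mathrm{min}}/(16ed)$. In both cases $k$ cannot be the $\arg\max$, which proves (i). For (ii), while any parent $j$ is unpicked its $\ge\tfrac14 m p_{\mathrm{init}}p_{\mathrm{min}}\ge 1$ unique-candidate cascades remain uncovered, so $U\ne\emptyset$ and the loop keeps adding parents; after at most $d_i$ iterations all of $\V_i$ is in, $U$ is empty, and $\widehat{\V}_i=\V_i$.
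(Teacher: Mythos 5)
Your architecture matches the paper's: exploit the tree to pair each non-parent $k$ with its gateway node $k'$ on the path from $i$, bound coincidence events via correlation decay (Lemma \ref{lem:corrdecay-probbound}) and independence across the subtrees hanging off $i$, set up a good event by Chernoff plus a union bound over $\mS_i$, and run a deterministic induction showing greedy only ever picks parents and does not stop early. Your deterministic half is in fact slightly cleaner than the paper's: the exact identity $\mathrm{cov}(k')-\mathrm{cov}(k)=N_U(k'\text{ cand},\,k\text{ not})-N_U(k\text{ cand},\,k'\text{ not})$ replaces the paper's summation of coincidence counts over all $d$ parents, and you explicitly handle the degenerate cascades $t_i^u\in\{0,\infty\}$ and the case where the gateway is not a parent, both of which the paper glosses over. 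The subtree-independence step you flag as delicate is asserted at the same level of rigor in the paper itself, so that is not a gap relative to it.

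The genuine flaw is the concentration step for your good event (b). You ask that, for each non-parent $k$, the count of $\{k\text{ candidate},\,k'\text{ not candidate}\}$ cascades be at most $2mp_{\mathrm{init}}^2/\alpha$, i.e., twice its mean, with failure probability $O(\delta/D_i)$. Multiplicative Chernoff at a constant factor above a mean $\mu\approx mp_{\mathrm{init}}^2/\alpha$ gives failure probability $e^{-\Theta(\mu)}$, and with the stated $m$ and the hypothesis $p_{\mathrm{init}}<\alpha^2p_{\mathrm{min}}/(16ed)$ one gets $\mu\lesssim\frac{c\alpha}{16e}\log\frac{D_i}{\delta}$; for small $\alpha$ this is far below $\log\frac{D_i}{\delta}$, and can even be $O(1)$, so $e^{-\Theta(\mu)}$ is nowhere near $\delta/D_i$ and cannot survive the union bound --- yet the theorem is claimed non-asymptotically for all $\alpha$. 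The fix is to threshold at the logarithmic scale rather than at twice the mean: demand the count be at most $\frac18 mp_{\mathrm{init}}p_{\mathrm{min}}=\Theta\left(d\log\frac{D_i}{\delta}\right)$ and use the form $\mathbb{P}[X\geq t]\leq\left(e\mu/t\right)^t$; the hypothesis on $p_{\mathrm{init}}$ then makes the ratio $e\mu/t\leq 8ep_{\mathrm{init}}/(\alpha p_{\mathrm{min}})<1/2$, giving failure probability $2^{-\Theta(d\log(D_i/\delta))}$, and every downstream comparison survives since $\frac18 mp_{\mathrm{init}}p_{\mathrm{min}}<\frac14 mp_{\mathrm{init}}p_{\mathrm{min}}$. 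This is precisely the paper's move (it thresholds coincidence counts at $mp_{\mathrm{init}}p_{\mathrm{min}}/(8d)$, making its Chernoff ratio $8edp_{\mathrm{init}}/(\alpha^2p_{\mathrm{min}})<1/2$), and it is exactly where the full strength of $p_{\mathrm{init}}<\alpha^2p_{\mathrm{min}}/(16ed)$ is consumed: the factor $d$ you dismissed as ``conservative slack'' is what the paper's ratio needs, though with your sharper single-pair comparison a weaker hypothesis would indeed suffice --- just not with the threshold you wrote down.
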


\section{Lower Bounds}

We now turn our attention to establishing lower bounds on the number of cascades that need to be observed for even approximately learning graph structure, using {\em any} algorithm. Clearly, we now cannot focus on learning just one graph, since in that case we could come up with an ``algorithm" tailored to find precisely that one graph. Instead, as is standard practice in information-theoretic lower bounds, we need to consider a collection (or ``ensemble") of graphs, and study how many cascades are needed to (approximately) find {\em any one} graph from this collection.  

We first state a lower bound in a general setting, for any pre-defined ensemble and notion of approximate recovery. We then provide two corollaries specializing it to our independent cascade epidemic model, edit distance approximation, and two natural graph ensembles.

{\bf General Setting:} Consider any general cascading process generating infection times $\{T_i\}$. Let $\G$ be a fixed collection of graphs and corresponding edge probabilities, and let $G$ be a graph chosen uniformly at random from this collection. We then generate a set $\U$, with $|\U| = m$, of independent cascades, and observe infection times $T^\U$. Let $\widehat{G}(T^\U)$ be a graph estimator that takes the observations as an input and outputs a graph. Finally, we say that a graph $G'$ approximately recovers graph $G$ if $G \in \B(G')$, where  $\B(G')\subseteq \G$ is any pre-defined set of graphs, with one such set defined for every $G'$. 

So for example, if we are interested in exact recovery, we would have $\B(G') = \{G'\}$, i.e. the singleton. If we were interested in edit distance of $s$, we would have $\B(G')$ be the set of all graphs within edit distance $s$ of $G'$.

We define the probability of error of a graph estimator $\widehat{G}(\cdot)$ to be
\[
P_e(\widehat{G}) ~ := ~ \mathbb{P}[G \notin \B(\widehat{G}(T^\U))]
\]
where the probability is calculated over the randomness in the choice of $G$ itself, and the generation of infection times in this $G$. Note that the definition defines error to be when approximate recovery (as defined by the sets $\B$) fails.

\begin{theorem}\label{thm:necessarycond}
In the general setting above, for {\em any} graph estimator to have a probability of error of $P_e$, we need
\begin{align*}
  m  ~ \geq ~ \frac{\left(1-P_e\right) \log \frac{|\G|}{\sup_{G'}|\mathcal{B}(G')|} - 1}{\sum_{i\in V} H(T_i)}
\end{align*}
where $H(\cdot)$ is the entropy function.
\end{theorem}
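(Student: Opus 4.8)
The plan is to prove this as a converse (impossibility) result via a generalized, list-decoding version of Fano's inequality applied to the Markov chain $G \to T^\U \to \widehat{G}(T^\U)$. Since $G$ is drawn uniformly from $\G$, we have $H(G) = \log|\G|$, and writing $L := \sup_{G'}|\B(G')|$ for the maximum ``list size,'' the whole argument reduces to two bounds that sandwich the mutual information $I(G;T^\U)$: a lower bound coming from the requirement that approximate recovery succeed with probability $1-P_e$, and an upper bound coming from the fact that we observe only $m$ cascades.

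First, for the lower bound on $I(G;T^\U) = H(G) - H(G\mid T^\U)$, I would instead bound $H(G\mid \widehat{G})$, which dominates $H(G\mid T^\U)$ by data processing. Introduce the error indicator $E := \mathbf{1}[\,G \notin \B(\widehat{G})\,]$, so that $\mathbb{P}[E=1]=P_e$. Expanding $H(G\mid\widehat{G}) \le H(E,G\mid\widehat{G}) = H(E\mid\widehat{G}) + H(G\mid E,\widehat{G})$ and bounding $H(E\mid\widehat{G}) \le H(E) \le 1$, the key step is the conditional term: on the success event $E=0$ the true graph $G$ lies in the set $\B(\widehat{G})$ of size at most $L$, so $H(G\mid E=0,\widehat{G}) \le \log L$, whereas on the error event it can be any element of $\G$, giving $H(G\mid E=1,\widehat{G}) \le \log|\G|$. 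Averaging yields $H(G\mid\widehat{G}) \le 1 + (1-P_e)\log L + P_e\log|\G|$, and combining with the data-processing inequality $I(G;T^\U) \ge I(G;\widehat{G}) = \log|\G| - H(G\mid\widehat{G})$ gives $I(G;T^\U) \ge (1-P_e)\log\frac{|\G|}{L} - 1$.

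Second, for the upper bound on $I(G;T^\U)$, I would tensorize over the $m$ cascades. Conditioned on $G$ the cascades are independent and identically distributed, so $H(T^\U\mid G) = \sum_{u}H(T^u\mid G)$, while subadditivity of entropy gives $H(T^\U) \le \sum_u H(T^u)$; subtracting yields $I(G;T^\U) \le \sum_u I(G;T^u) = m\, I(G;T^1)$. Then, dropping the nonnegative term $H(T^1\mid G)$ and applying subadditivity across nodes, $I(G;T^1) \le H(T^1) \le \sum_{i\in V}H(T_i)$, so that $I(G;T^\U) \le m\sum_{i\in V}H(T_i)$. Chaining the two bounds on $I(G;T^\U)$ and solving for $m$ produces the stated inequality.

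The step requiring the most care is this upper bound on the mutual information: one must invoke conditional independence of the cascades given $G$ in exactly the right place to extract the factor $m$ cleanly, use subadditivity of entropy only in the direction where it is valid, and then verify that the resulting single-cascade, per-node quantity is precisely the marginal entropy $H(T_i)$ that appears in the statement (i.e. the entropy of the infection time under the full generative process, averaging over the uniform choice of $G$). The list-decoding refinement of Fano, which replaces the usual $\log(|\G|-1)$ by $\log L$, is the other point to get right, but as sketched above it follows directly once one conditions on the error indicator $E$.
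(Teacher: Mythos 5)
Your proposal is correct and takes essentially the same route as the paper's proof: the same Markov chain $G \rightarrow T^{\U} \rightarrow \widehat{G}$, the data-processing inequality, the list-decoding form of Fano's inequality obtained by conditioning on the error indicator (giving $H(G\mid\widehat{G}) \leq 1 + P_e\log|\G| + (1-P_e)\log L$), and subadditivity across nodes to reach $m\sum_{i\in V}H(T_i)$. The only, immaterial, difference is in the upper bound on the mutual information: you tensorize via conditional independence of the cascades given $G$ to get $I(G;T^{\U}) \leq m\,I(G;T^1) \leq m\,H(T^1)$, whereas the paper bounds $I(G;T^{\U}) \leq H(T^{\U}) \leq m\,H(T)$ directly by subadditivity over cascades --- both chains discard the same conditional entropy terms and arrive at the identical final inequality.
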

\begin{proof}
To shorten notation, we will denote $\widehat{G}(T^\U)$ simply by $\widehat{G}$. The proof uses several basic information-theoretic inequalities, which can be found e.g. in \cite{CovTho06}. In the following $H(\cdot)$ denotes entropy and $I(\cdot;\cdot)$ denotes mutual information.

We can see that the following diagram forms a Markov chain
\begin{align*}
  G \longleftrightarrow T^{\U} \longleftrightarrow \widehat{G}
\end{align*}
We have the following series of inequalities:
\begin{align*}
  H(G) 	&= I(G;\widehat{G}) + H(G \; \vert \; \widehat{G}) \\
	&\stackrel{(\varsigma_1)}{\leq} I(G;T^U) + H(G \; \vert \; \widehat{G}) \\
	&\stackrel{(\varsigma_2)}{\leq} H(T^U) + H(G \; \vert \; \widehat{G}) \\
	&\stackrel{(\varsigma_3)}{\leq} mH(T) + H(G \; \vert \; \widehat{G}) \\
	&\stackrel{(\varsigma_4)}{\leq} m\sum_{i\in V}H(T_i) + H(G \; \vert \; \widehat{G}) \\
\end{align*}
where $(\varsigma_1)$ follows from the data processing inequality, $(\varsigma_2)$ follows from the fact
that the mutual information between two random variables is less than the entropy of either of them,
$(\varsigma_3)$ and $(\varsigma_4)$ follows from the subadditivity of entropy.
Since $G$ is sampled uniformly at random from $\G$, we have that $H(G) = \log |\G|$.
We now use Fano's inequality to bound $H(G\;\vert\;\widehat{G})$.
\begin{align*}
  H(G \; \vert \; \widehat{G}) &\stackrel{(\varsigma_1)}{\leq} H(G,Err \; \vert \; \widehat{G}) \\
	&\stackrel{(\varsigma_2)}{\leq} H(Err \; \vert \; \widehat{G}) + H(G \; \vert \; Err,\widehat{G}) \\
	&\stackrel{(\varsigma_3)}{\leq} H(Err) + H(G \; \vert \; E,\widehat{G}) \\
	&\stackrel{(\varsigma_4)}{\leq} 1 + P_e \log |\G| + (1-P_e) \log \sup_{\widehat{G}}|\mathcal{B}_s(\widehat{G})|
\end{align*}
where $Err$ is the error indicator random variable (i.e., is $1$ if $G\notin \B(\widehat{G})$ and $0$ otherwise), so that
$P_e = \mathbb{E}\left[Err\right]$.
$(\varsigma_1)$ follows from the monotonicity of entropy, $(\varsigma_2)$ follows from the chain rule of entropy,
$(\varsigma_3)$ follows from the monotonicity of entropy with respect to conditioning and $(\varsigma_4)$
follows from Fano's inequality.
Combining the above two results, we obtain
\begin{align}
  & \;\;m\sum_{i\in V}H(T_i) \geq \left(1-P_e\right) \log \frac{|\G|}{\sup_{\widehat{G}}|\mathcal{B}(\widehat{G})|} - 1 \nonumber\\
  & \Rightarrow m \geq \frac{\left(1-P_e\right) \log \frac{|\G|}{\sup_{\widehat{G}}|\mathcal{B}(\widehat{G})|} - 1}{\sum_{i\in V}H(T_i)} \label{eqn:numsamples_fano_lowerbound}
\end{align}
\end{proof}

To apply this result to a particular ensemble $\mathcal{G}$ and notion of approximation $\B$, we need to find a lower bound on $|\mathcal{G}|$, and  upper bounds on $|\B(G')|$ for all $G'$ and $H(T_i)$ for all $i$. The following lemma states an upper bound on $H(T_i)$ for our independent cascade model when we have correlation decay coefficient $\alpha$. Both our corollaries assume this is the case for all graphs in their respective ensembles.

\begin{lemma}\label{lem:entropy_upperbound}
For any graph with correlation decay coefficient $\alpha$, for any node $i$, and when $p_{init} < \frac{1}{e}$, we have that
\begin{eqnarray*} 
  H(T_i) &\leq  &\frac{p_{\textrm{init}}}{1-\alpha} \left(\log \frac{1}{p_{\textrm{init}}} +
		  \left(\frac{1-\alpha}{\alpha}\right)^2\log \frac{1}{1-\alpha}\right) \\
		& &\;\;\;  - \left(1-\frac{p_\textrm{init}}{\alpha}\right)\log\left(1-\frac{p_\textrm{init}}{\alpha}\right) \\
		& =: & p_{init} \overline{H}(\alpha,p_{init})
\end{eqnarray*}
\end{lemma}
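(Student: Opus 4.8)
The plan is to bound the entropy $H(T_i)$ of a single node's infection time by decomposing the random variable $T_i$ according to the event that $i$ is ever infected. I would write $H(T_i) = H(\mathds{1}[T_i < \infty]) + \mathbb{P}[T_i < \infty]\, H(T_i \mid T_i < \infty)$, i.e. a ``grouping'' or chain-rule split that separates the Bernoulli cost of whether the node is infected at all from the cost of specifying \emph{when} it is infected, conditioned on infection happening. From Lemma~\ref{lem:corrdecay-probbound} and the subsequent remark, $\mathbb{P}[T_i < \infty] < \frac{p_{\textrm{init}}}{\alpha}$, and also $\mathbb{P}[T_i<\infty] > p_{\textrm{init}}$; these two-sided bounds are exactly what will control the two pieces.

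For the Bernoulli term $H(\mathds{1}[T_i<\infty]) = H_b(q)$ with $q := \mathbb{P}[T_i<\infty]$, I would use the standard bound $H_b(q) \leq -q\log q - (1-q)\log(1-q)$ and then exploit monotonicity of $-q\log q$ on the relevant range (using $p_{\textrm{init}} < \tfrac1e$ to stay on the increasing branch) to replace $q$ by its upper bound $\tfrac{p_{\textrm{init}}}{\alpha}$ where that is the right direction; the term $-\bigl(1-\tfrac{p_{\textrm{init}}}{\alpha}\bigr)\log\bigl(1-\tfrac{p_{\textrm{init}}}{\alpha}\bigr)$ appearing in the lemma statement is precisely the $(1-q)$ contribution evaluated at this endpoint, which suggests this is the source of the last line of the claimed bound.

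For the conditional term, the idea is that conditioned on $T_i < \infty$, the distribution of $T_i$ over $t = 1, 2, 3, \dots$ is controlled by the geometric-type tail from Lemma~\ref{lem:corrdecay-probbound}: $\mathbb{P}[T_i = t] \leq (1-\alpha)^{t-1} p_{\textrm{init}}$. I would bound $H(T_i \mid T_i<\infty) = -\sum_{t\geq 1}\frac{\mathbb{P}[T_i=t]}{q}\log\frac{\mathbb{P}[T_i=t]}{q}$ by comparing to the entropy of a geometric distribution with parameter $\alpha$, whose entropy is a closed-form expression in $\alpha$; this is where the $\log\frac{1}{p_{\textrm{init}}}$ and the $\bigl(\frac{1-\alpha}{\alpha}\bigr)^2\log\frac{1}{1-\alpha}$ factors should emerge — the first from the $-\log p_{\textrm{init}}$ scale of each atom and the second from summing $t\,(1-\alpha)^{t-1}$ against $\log(1-\alpha)$, which produces the $\bigl(\frac{1}{\alpha}\bigr)^2$-type factor after differentiating the geometric series. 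Multiplying through by the prefactor $\mathbb{P}[T_i<\infty] \leq \frac{p_{\textrm{init}}}{1-\alpha}$ (or the appropriate two-sided version) recovers the $\frac{p_{\textrm{init}}}{1-\alpha}$ factor in front of the bracket.

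The main obstacle I anticipate is getting the constants and the direction of each inequality to line up cleanly: entropy is not monotone in the tail parameters in a completely transparent way, so I must be careful that each substitution of a probability by its bound moves $H$ in the correct (upward) direction. In particular, replacing the exact atom probabilities by the geometric upper bounds $(1-\alpha)^{t-1}p_{\textrm{init}}$ inside $-\sum p\log p$ is only valid on the branch where $-p\log p$ is increasing, so I would need to verify that the relevant atoms lie below the maximizer $e^{-1}$ (guaranteed by $p_{\textrm{init}}<\tfrac1e$ together with the $(1-\alpha)^{t-1}$ decay), and handle the normalization by $q$ carefully. Tracking how the $\frac{1}{1-\alpha}$ and $\frac{1}{\alpha}$ factors combine — rather than the qualitative structure — is where the real bookkeeping lives, but the overall architecture (Bernoulli split, then geometric-entropy comparison via Lemma~\ref{lem:corrdecay-probbound}) is what drives the proof.
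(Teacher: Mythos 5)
Your plan is essentially the paper's proof, with one extra layer that turns out to be removable. The paper bounds $H(T_i)$ directly as $-\sum_{t\geq 1}\mathbb{P}[T_i=t]\log\mathbb{P}[T_i=t]-\mathbb{P}[T_i=\infty]\log\mathbb{P}[T_i=\infty]$, substitutes the pointwise bound of Lemma~\ref{lem:corrdecay-probbound} into each finite atom (valid because $(1-\alpha)^{t-1}p_{\textrm{init}}\leq p_{\textrm{init}}<\tfrac{1}{e}$ keeps every atom on the increasing branch of $-x\log x$, exactly as you observe), bounds the $\infty$-atom term by $-\left(1-\tfrac{p_{\textrm{init}}}{\alpha}\right)\log\left(1-\tfrac{p_{\textrm{init}}}{\alpha}\right)$ using $\mathbb{P}[T_i=\infty]\geq 1-\tfrac{p_{\textrm{init}}}{\alpha}$ and monotonicity of $-x\log x$ above $\tfrac{1}{e}$, and then sums the geometric series. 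Your Bernoulli/conditional split collapses to the same computation: with $q=\mathbb{P}[T_i<\infty]$ and $p_t=\mathbb{P}[T_i=t]$ one has $q\,H(T_i\mid T_i<\infty)=-\sum_t p_t\log p_t+q\log q$, so adding the Bernoulli entropy $H(\mathds{1}_{\{T_i<\infty\}})=-q\log q-(1-q)\log(1-q)$ cancels the $q\log q$ terms and recovers the paper's starting expression exactly. This cancellation is also the right way past the obstacle you flagged: the normalized atoms $p_t/q$ need \emph{not} lie below $\tfrac{1}{e}$ (indeed $p_1/q$ can be close to $1$), so bounding $H(T_i\mid T_i<\infty)$ atomwise against a geometric entropy is not directly available; do the exact algebra and let the split dissolve. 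One concrete correction: your prefactor claim $\mathbb{P}[T_i<\infty]\leq\frac{p_{\textrm{init}}}{1-\alpha}$ is false in general --- Lemma~\ref{lem:corrdecay-probbound} gives $\mathbb{P}[T_i<\infty]\leq\frac{p_{\textrm{init}}}{\alpha}$, and $\frac{p_{\textrm{init}}}{1-\alpha}$ is the \emph{smaller} quantity when $\alpha<\tfrac{1}{2}$. The factor $\frac{p_{\textrm{init}}}{1-\alpha}$ in the lemma does not come from bounding $\mathbb{P}[T_i<\infty]$ at all: the exact geometric sums yield $\frac{p_{\textrm{init}}}{\alpha}\log\frac{1}{p_{\textrm{init}}}+p_{\textrm{init}}\frac{1-\alpha}{\alpha^2}\log\frac{1}{1-\alpha}$, whose second term matches the lemma identically (since $\frac{p_{\textrm{init}}}{1-\alpha}\left(\frac{1-\alpha}{\alpha}\right)^2=p_{\textrm{init}}\frac{1-\alpha}{\alpha^2}$), and the paper passes from $\frac{1}{\alpha}$ to $\frac{1}{1-\alpha}$ in the first term only via unspecified ``algebraic manipulations'' --- a step deserving exactly the care about $\frac{1}{\alpha}$ versus $\frac{1}{1-\alpha}$ that you yourself anticipated, since $\frac{1}{\alpha}\leq\frac{1}{1-\alpha}$ holds only for $\alpha\geq\tfrac{1}{2}$.
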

Note that the {\em edit distance} between two graphs is the number of edges present in only one of the two graphs but not the other (i.e. the number of edges in the symmetric difference of the two graphs). Our first corollary is for the case when there is no super-graph information, and we want to approximate in global edit distance.
\begin{corollary}
 Let $\G_d$ denote the set of all graphs with in-degrees bounded by $d$, and $\B_\gamma(G')$ be the set of all graphs within edit distance $\gamma$ of $G'$. Let $p_{init} < \frac{1}{e}$. Then for any algorithm to have a probability of error of $P_e$, we need
\[
m ~ > ~ \frac{(1-P_e)}{p_{init}} \frac{1-\alpha}{\overline{H}(\alpha,p_{init})} \left ( d \log \frac{n}{d} - \frac{\gamma}{n} \log \frac{n^2}{\gamma} \right ) - 1 
\]
\end{corollary}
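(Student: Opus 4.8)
The plan is to specialize the general lower bound of Theorem~\ref{thm:necessarycond} to the ensemble $\G_d$ and the edit-distance notion of approximation $\B_\gamma$. That theorem gives
\[
m \geq \frac{(1-P_e)\log\frac{|\G_d|}{\sup_{G'}|\B_\gamma(G')|} - 1}{\sum_{i\in V} H(T_i)},
\]
so the task reduces to three estimates: a lower bound on $|\G_d|$, an upper bound on $\sup_{G'}|\B_\gamma(G')|$, and an upper bound on the denominator $\sum_i H(T_i)$. For the last of these I would directly invoke Lemma~\ref{lem:entropy_upperbound}: since the ensemble is built so that every graph has correlation decay coefficient $\alpha$, each of the $n$ nodes satisfies $H(T_i)\leq p_{init}\overline{H}(\alpha,p_{init})$, hence $\sum_{i\in V} H(T_i)\leq n\, p_{init}\overline{H}(\alpha,p_{init})$. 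Substituting this and regrouping the explicit constant from the definition of $\overline{H}$ produces the prefactor $\frac{(1-P_e)}{p_{init}}\frac{1-\alpha}{\overline{H}(\alpha,p_{init})}$ appearing in the statement.

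Before counting, I would first fix the edge probabilities, assigning a single common value $p$ to every edge present in any graph of the ensemble, chosen small enough that the in-degree-$d$ constraint forces $\sum_k p_{ki}<1-\alpha$ at every node. This guarantees that every member of $\G_d$ has correlation decay $\alpha$, and that graphs are distinguished only by their edge sets, so Lemma~\ref{lem:entropy_upperbound} applies uniformly across the ensemble. To lower bound $|\G_d|$, I would count independently for each of the $n$ nodes the number of admissible parent sets: choosing any $d$ of the other $n-1$ nodes gives at least $\binom{n-1}{d}$ choices per node, and since the choices across nodes are independent, $|\G_d|\geq\binom{n-1}{d}^{\,n}$. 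Using $\binom{n-1}{d}\geq\left(\frac{n-1}{d}\right)^d$ then yields $\log|\G_d|\geq nd\log\frac{n}{d}$ after absorbing lower-order terms.

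For the denominator inside the logarithm I would bound the edit-distance ball uniformly over $G'$. A graph within edit distance $\gamma$ of $G'$ is obtained by flipping at most $\gamma$ of the $n(n-1)\leq n^2$ possible directed edge slots, so $|\B_\gamma(G')|\leq\sum_{k=0}^{\gamma}\binom{n^2}{k}\leq\left(\frac{e n^2}{\gamma}\right)^{\gamma}$, giving $\log\sup_{G'}|\B_\gamma(G')|\leq\gamma\log\frac{n^2}{\gamma}$ up to the additive $\gamma$ term. Combining the two cardinality estimates yields $\log\frac{|\G_d|}{\sup_{G'}|\B_\gamma(G')|}\geq nd\log\frac{n}{d}-\gamma\log\frac{n^2}{\gamma}$; dividing by the denominator $n\, p_{init}\overline{H}$ cancels the factor $n$ in the first term and turns the second into $\frac{\gamma}{n}\log\frac{n^2}{\gamma}$, which is exactly the bracketed expression in the corollary, while the trailing $-1$ from the theorem survives (after loosening the additive term, which is at most $1$ for the relevant range of parameters).

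I expect the main obstacle to be bookkeeping rather than conceptual: matching the clean stated form requires the binomial estimates $\log\binom{n-1}{d}\approx d\log\frac{n}{d}$ and $\log\sum_{k\le\gamma}\binom{n^2}{k}\approx\gamma\log\frac{n^2}{\gamma}$ to hold with the lower-order $e$-factors and additive terms cleanly absorbed, and correctly reconciling the $(1-\alpha)$ factor that is folded into the explicit form of $\overline{H}$ in Lemma~\ref{lem:entropy_upperbound}. One must also keep track of the fact that the resulting bound is only meaningful when $\gamma$ is small relative to $nd\log\frac{n}{d}$, i.e. when the allowed edit distance is not so large that approximate recovery becomes trivial and the right-hand side turns negative.
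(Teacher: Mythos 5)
Your proposal is correct and follows essentially the same route as the paper: specialize the Fano-type bound of Theorem~\ref{thm:necessarycond} with $\log|\G_d| = \log\binom{n}{d}^n = (1+o(1))\,nd\log\frac{n}{d}$, bound the edit-distance ball by $\gamma\log\frac{n^2}{\gamma}$ via binomial estimates, and divide by $\sum_i H(T_i) \leq n\,p_{init}\overline{H}(\alpha,p_{init})$ using Lemma~\ref{lem:entropy_upperbound}. The extra care you take --- fixing a common edge probability so that every graph in the ensemble has correlation decay $\alpha$, and summing the ball over all radii up to $\gamma$ rather than exactly $\gamma$ --- only makes explicit steps the paper leaves implicit, including the $(1-\alpha)$ bookkeeping in the prefactor that you rightly flag as the main loose end.
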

\begin{proof}
We have that
\begin{align*}
  \log |\G_d| &= \log {n\choose d}^n = \left(1+o(1)\right)nd \log \frac{n}{d} \\
  \log |\B_{\gamma}(G')| &\leq \log {{{n}\choose {2}}\choose {\gamma}} \leq \gamma \log \frac{n^2}{\gamma}
\end{align*}
Using the above two equations along with Theorem \ref{thm:necessarycond} and Lemma \ref{lem:entropy_upperbound} gives us the result.
\end{proof}
Note that the number of times a node is infected thus needs to be $\Omega( (d-\frac{2\gamma}{n})\log n )$ (since it is of the same order as $mp_{init}$). For exact recovery, i.e. $\gamma = 0$, we see that our result on the performance of our ML algorithm -- specialized to the no prior information case $D=n$ -- is off by just a factor $d$ in terms of the number of samples required. 

The second corollary is for the case when we do have prior supergraph information. In particular, we assume that we are given sets $\mS_i$, of size $|\mS_i| = D$, for each node $i$. We consider the ensemble $\G_{D,d}$ of all in-degree-$d$ {\em subgraphs of this fixed supergraph}. Thus for each node, we need to learn the $d$ parents it has, from a given super-set of size $D$. Finally, for each node $i$ we allow $s_i$ errors; let $\B_s(G')$ be the corresponding set of all subgraphs of the given supergraph. 
\begin{corollary}
For any estimator to have a probability of error of $P_e$ in the setting above, the number of samples $m$ must be bigger than
\[
\frac{(1-P_e)}{p_{init}} \frac{1-\alpha}{\overline{H}(\alpha,p_{init})} \left ( d \log \frac{D}{d} - \frac{1}{n} \sum_i  s_i \log \frac{eD}{s_i} + \log \max(s_i,1) \right ) - 1 
\]
\end{corollary}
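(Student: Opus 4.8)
The plan is to follow the template of the previous corollary: lower-bound $|\G_{D,d}|$, uniformly upper-bound $|\B_s(G')|$ over all $G'$, upper-bound the per-node entropies $H(T_i)$ via Lemma~\ref{lem:entropy_upperbound}, and substitute all three into the necessary condition of Theorem~\ref{thm:necessarycond}. First I would count the ensemble. Since each of the $n$ nodes \emph{independently} selects a size-$d$ parental set from its own size-$D$ super-neighborhood $\mS_i$, we have $|\G_{D,d}| = \binom{D}{d}^n$, so that $\log|\G_{D,d}| = n\log\binom{D}{d} \ge nd\log\frac{D}{d}$, using the elementary bound $\binom{D}{d}\ge (D/d)^d$.

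Next come the approximation balls, which is the step deserving the most care. Because $\B_s(G')$ consists of all subgraphs of the fixed supergraph whose in-neighborhood at node $i$ differs from that of $G'$ in at most $s_i$ edges, and because each node's neighborhood is an arbitrary subset of the $D$ slots of $\mS_i$, the number of admissible neighborhoods at node $i$ is the size of a radius-$s_i$ Hamming ball in the $D$-cube, namely $\sum_{k=0}^{s_i}\binom{D}{k}$. Crucially this count depends only on $D$ and $s_i$, \emph{not} on the particular $G'$, so the supremum over $G'$ is attained and factorizes across nodes. Bounding the binomial sum by $\max(s_i,1)\left(\frac{eD}{s_i}\right)^{s_i}$ and taking logarithms then yields $\log\sup_{G'}|\B_s(G')| \le \sum_i\left(s_i\log\frac{eD}{s_i}+\log\max(s_i,1)\right)$.

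Finally, Lemma~\ref{lem:entropy_upperbound} gives $\sum_{i\in V}H(T_i)\le n\,p_{init}\,\overline{H}(\alpha,p_{init})$. Substituting the lower bound on $\log|\G_{D,d}|$, the upper bound on $\log\sup_{G'}|\B_s(G')|$, and this entropy bound into Theorem~\ref{thm:necessarycond}, and then pulling a factor of $n$ out of both the leading $nd\log\frac{D}{d}$ term and the denominator, produces the stated bound; the residual $-1$ arises from the additive constant in the Fano step, exactly as in the previous corollary.

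The individual steps are routine binomial estimates and direct substitution, so I expect the main obstacle to be the bookkeeping around $\B_s(G')$ rather than any genuinely hard inequality. Specifically, one must verify that "edit distance $s_i$ at node $i$'' really corresponds to a Hamming ball in the subgraph lattice of $\mS_i$ (which is what makes the count independent of $G'$, so the sup is clean), and that over-counting by ignoring any in-degree constraint only loosens the bound in the safe direction. One must also keep the three inequalities pointing consistently: a \emph{lower} bound on $\log|\G|$, an \emph{upper} bound on $\sup_{G'}|\B|$, and an \emph{upper} bound on $\sum_i H(T_i)$ all drive the right-hand side of Theorem~\ref{thm:necessarycond} downward, which is precisely the direction required for a valid necessary (lower) bound on $m$.
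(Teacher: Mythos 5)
Your proposal is correct and follows essentially the same route as the paper's proof: lower-bound $\log|\G_{D,d}| = n\log\binom{D}{d}$, upper-bound $\sup_{G'}|\B_s(G')|$ by a product over nodes of Hamming-ball volumes $\sum_{l=0}^{s_i}\binom{D}{l} \leq \max(1,s_i)\left(\frac{eD}{s_i}\right)^{s_i}$, bound $\sum_i H(T_i)$ via Lemma~\ref{lem:entropy_upperbound}, and substitute into Theorem~\ref{thm:necessarycond}. The only cosmetic difference is that the paper passes through the intermediate bound $\sum_{l=0}^{s_i}\binom{D}{l}\leq \max(1,s_i)\binom{D}{s_i}$ and therefore explicitly dispenses with the regime $s_i > \frac{D}{2}$ (where approximate recovery is trivial, so one may assume $s_i\leq \frac{D}{2}$), a caveat your direct Hamming-ball estimate sidesteps but which is worth stating if you bound via the largest binomial coefficient as the paper does.
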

{\bf Remark:} Specializing this result to exact recovery (i.e. $s_i = 0$) removes dependence on $n$, and again shows us that the ML algorithm is within a factor $d$ of optimal for the case when we have a super-graph.
\begin{proof}
We have the following bound on the size of the ensemble:
\begin{align*}
  \log |\mathcal{G}_d| = \log {D\choose d}^n = \left(1+o(1)\right)nd \log \frac{D}{d}
\end{align*}
Similarly,
\begin{align}
  \log |\mathcal{B}_s(\widehat{G})| &\leq \log \prod_{i\in V}\left(\sum_{l=0}^{s_i} {{D}\choose{l}}\right) \nonumber\\
	&\leq \log \prod_{i\in V}\left(\max(1,s_i) {{D}\choose{s_i}}\right) \nonumber\\
	&\leq \sum_{i\in V} \log \left(\max(1,s_i) \left(\frac{De}{s_i}\right)^{s_i}\right) \nonumber\\
	&= \sum_{i\in V} \log \max(1,s_i) + \sum_{i\in V}s_i \log \frac{De}{s_i} \label{eqn:Bs-upperbound1}
\end{align}
where 
\begin{align*}
\mathcal{B}_s(\widehat{G})= \{\widetilde{G}\in \G_d : \widetilde{\V}_i \vartriangle \widehat{\V}_i \leq s_i \;\forall \;i \in V\}
\end{align*}
Note that in the second inequality we assume $s_i \leq \frac{D}{2}$ because otherwise
if $d < \frac{D}{2}$, we can choose $\widehat{\V}_i=\Phi$ and
if $d \geq \frac{D}{2}$, we can choose $\widehat{\V}_i=\mathcal{V}_i$.
Using Theorem \ref{thm:necessarycond}, \eqref{eqn:Bs-upperbound1} and Lemma \ref{lem:entropy_upperbound} gives us the first part of the result.
\end{proof}

\section{General SIR Epidemics: Markov Graphs and Causality} \label{sec:markov}

In this section we consider a much more general model for SIR epidemics/cascades on a directed graph, and establish a connection to the classic formalism of Markov Random Fields (MRFs) -- see e.g. \cite{lauritzen-book} for a formal introduction. Specifically, we show that the (undirected) Markov graph of infection times of an SIR epidemic is obtained via the {\em moralization} of the true (directed) network graph on which the cascade spreads. A moralized graph, as defined below, is obtained by adding edges between all parents of a node (i.e. ``marrying" them), and removing all directions from all edges. Graph moralization also arises in Bayesian networks, and we comment on the relationship, and the role of causality, after we present our result.

We first briefly describe our general model for SIR epidemics, then define its Markov graph, and finally present our result.

{\bf General SIR epidemics:} We now describe a general model for SIR epidemics propagating on a directed graph. Nodes can be in one of three {\em states}: {\bf 0} for susceptible, {\bf 1} for infected and active, and {\bf 2} for resistant and inactive; we restrict our attention to discrete time in this paper. Let $X_i(t)$ be the state of node $i$ at time $t$, and $X(t)$ to be the vector corresponding to the states of all nodes. We require that this process be causal, and governed by the true directed graph $G$, in the sense that for any time step $t$,
\begin{equation}
\mathbb{P} [ X(t) = x(t) \, | \, x(0:t-1)] ~ = ~ \prod_i ~  \mathbb{P} [ X_i(t) = x_i(t) \, | \, x_{\V_i}(0:t-1)] \label{eq:markov}
\end{equation}
where the notation $x(1:t) = \{x(s),1\leq s\leq t\}$ is the entire history upto time $t$, and as before $\V_i$ is the set of parents of node $i$, and includes $i\in \V_i$ as well. Note the above encodes that the probability distribution of each node's next state depends only on the history of itself and its neighbors, but is otherwise independent of the history or current state of the other nodes. We assume that the cascade is initially seeded arbitrarily, i.e. $x(0)$ can be any fixed initial condition.

For each node $i$, let $T^{(1)}_i$ be the (random) time when its state transitioned from {\bf 0} to {\bf 1}, and $T^{(2)}_i$ for the time from {\bf 1} to {\bf 2} (of course, if neither happened then we can take them to be $\infty$). Let $T_i = (T^{(1)}_i,T^{(2)}_i)$ be the summary for node $i$'s participation in the cascade. 

{\bf Markov Graphs:} Markov random fields (MRFs, also known as Graphical Models) are a classic formalism, enabling the use of graph algorithms for tasks in statistics, physics and machine learning. The central notion therein is that of the Markov graph of a probability distribution; in particular, every collection of random variables has an associated graph. Every variable is a node in the graph, and the edges encode conditional independence: conditioned on the neighbors, the variable is independent of all the other variables. For our purposes here, the random variables are the $T := \{T_i, i\in V\}$. We say that an undirected graph $G'$ is the Markov graph of the variables $T$ if their joint probability distribution, for all $t$, factors as follows
\begin{equation*}
\mathbb{P} [ T = t ] ~ = ~ \prod_{c \in \mathcal{C}'} \, f_c(t_c) 
\end{equation*}
for some functions $f_c$; here $\mathcal{C}'$ is the set of cliques of $G'$, and for a clique $c\in \mathcal{C}'$, $t_c := \{t_i, i\in c\}$ is the vector of node times for nodes in $c$.
 
We need one more definition before we state our result.

{\bf Moralization:} Given a directed graph $G$, its moralized graph $\overline{G}$  is the undirected graph  where two nodes are connected if and only if they either have a parent-child relationship in $G$, or if they have a common child, or both. Formally, undirected edge $(i,j)$ is present in $\overline{G}$ if and only if at least one of the following is true \\
{\em (a)} directed edges $(i,j)$ or $(j,i)$ are present in $G$, or \\
{\em (b)} there is some node $k$ such that $(i,k)$ and $(j,k)$ are present in $G$ (i.e. $k$ is a common child).\\
Figure \ref{fig:moralization-example} illustrates the process of moralization with an example.
\begin{figure}[h]
\centering
\subfigure[Directed graph $G$]{
\includegraphics[width=0.35\linewidth]{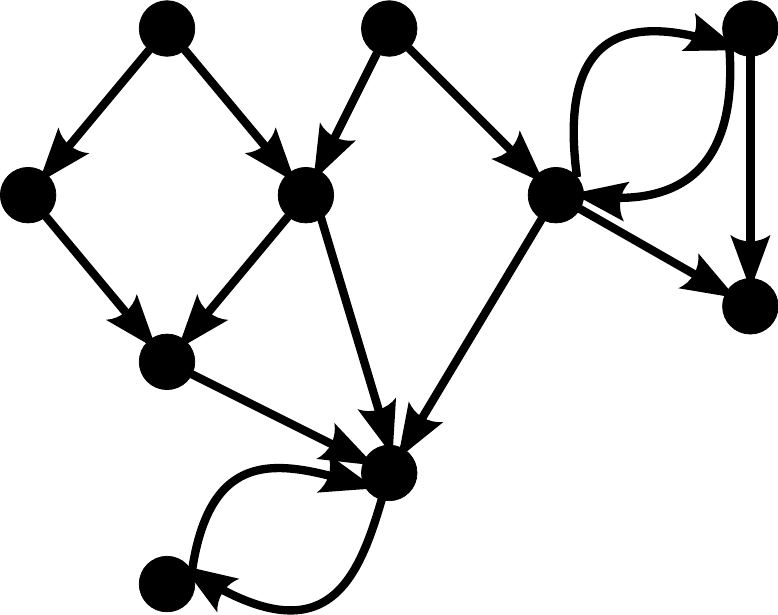}
\label{fig:moralization-dirgraph}
}
\subfigure[Moralized graph $\overline{G}$ of $G$]{
\includegraphics[width=0.35\linewidth]{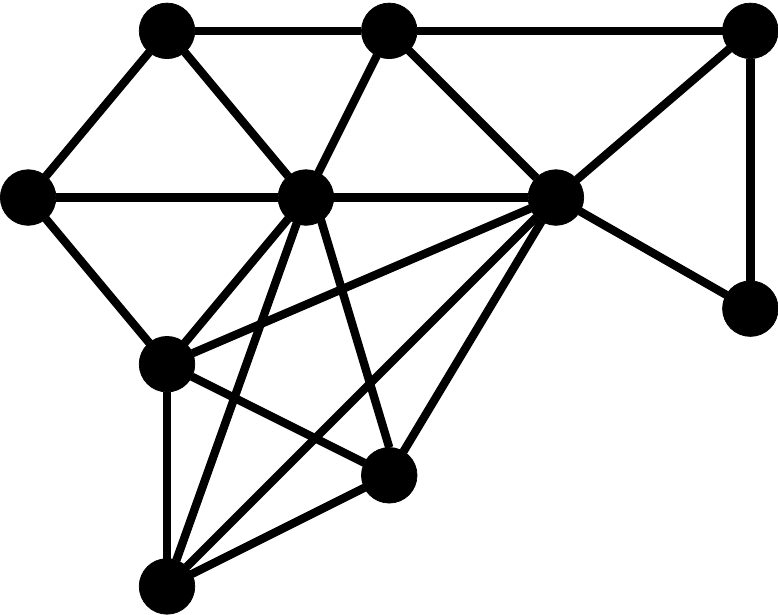}
\label{fig:moralization-undirgraph}
}
\caption{An example of moralization}
\label{fig:moralization-example}
\end{figure}

\begin{theorem}\label{thm:markov}
Suppose infection times $T$ are generated from a general SIR epidemic, as above, propagating on a directed network graph $G$. Let $\overline{G}$ be the (undirected) moralized graph of $G$. Then, $\overline{G}$ is the Markov graph of $T$.
\end{theorem}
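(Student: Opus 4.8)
The plan is to exploit two features of the model: the monotone (SIR) dynamics, which let us pass between the full state trajectory and the transition-time vector $T$ without losing information, and the causal factorization (\ref{eq:markov}), which is exactly a time-unrolled directed graphical model. First I would observe that because every node follows the path $0\to 1\to 2$, the pair $T_i=(T_i^{(1)},T_i^{(2)})$ determines the whole sample path $\{X_i(t)\}_{t\ge 0}$ of node $i$ (it is $0$ before $T_i^{(1)}$, $1$ on $[T_i^{(1)},T_i^{(2)})$, and $2$ afterwards), and conversely the trajectory determines $T_i$. Hence $t\mapsto x$ is a bijection and $\mathbb{P}[T=t]=\mathbb{P}[X=x]$ for the corresponding trajectory $x$; it therefore suffices to exhibit a clique factorization of the trajectory law over $\overline{G}$.

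For that, I would write the trajectory law using (\ref{eq:markov}) and the fixed initial condition as $\mathbb{P}[X=x]=\prod_{t\ge1}\prod_i \mathbb{P}[X_i(t)=x_i(t)\mid x_{\V_i}(0:t-1)]$, and reorganize it as $\prod_i \psi_i$, where $\psi_i:=\prod_{t\ge1}\mathbb{P}[X_i(t)=x_i(t)\mid x_{\V_i}(0:t-1)]$ depends only on the histories of $i$ and its parents $\V_i$. The key combinatorial point is that $\V_i$ (which already contains $i$) is a clique of $\overline{G}$: the child $i$ is joined to each parent by rule (a) of moralization, and any two parents are joined to each other by rule (b), since they share the child $i$. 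Translating through the bijection, each $\psi_i$ becomes a function of $t_{\V_i}$ alone, so $\mathbb{P}[T=t]=\prod_i \psi_i(t_{\V_i})$ is a product of potentials supported on cliques of $\overline{G}$. By the definition given just before the theorem, this already establishes that $\overline{G}$ is a Markov graph of $T$.

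To justify that $\overline{G}$ is \emph{the} Markov graph (the minimal one), I would add the converse: no edge of $\overline{G}$ is superfluous. The factorization above, together with positivity of the transition kernels, immediately gives the easy half via the global Markov property: any pair of nodes nonadjacent in $\overline{G}$ is conditionally independent given the remaining times. For necessity I would argue that each edge carries genuine dependence: a parent--child edge is clearly active, while for a ``married'' co-parent pair $i,j$ with common child $k$ one must show $T_i\not\perp T_j\mid T_{V\setminus\{i,j\}}$. This is precisely the explaining-away effect -- conditioning on $k$'s infection time couples the times of its parents -- and making it rigorous is where a non-degeneracy assumption on the kernels is used.

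I expect the main obstacle to be twofold. The more routine-but-delicate part is setting up the bijection cleanly: handling seed nodes ($T_i^{(1)}=0$), never-infected nodes ($T_i=\infty$), and the deterministic initial condition $x(0)$, so that the product over $t$ is genuinely a finite, well-defined potential on $t_{\V_i}$. The genuinely subtle part is the converse/minimality claim for the married edges: unlike the factorization, which is purely structural, showing that co-parents are actually dependent (and hence that moralization introduces no spurious edges) relies on the kernels being non-degenerate, and this explaining-away step is the crux of why the Markov graph is the moralized graph rather than the network graph itself.
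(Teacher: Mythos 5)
Your argument---the bijection between transition-time vectors $T=(T^{(1)},T^{(2)})$ and state trajectories $x$, the causal chain-rule factorization from (\ref{eq:markov}) regrouped into per-node potentials $\psi_i(t_{\V_i})$, and the observation that each $\V_i$ is a clique of $\overline{G}$ by the two moralization rules---is exactly the paper's proof, and since the paper defines the Markov graph of $T$ purely by clique factorization, this alone establishes the theorem. Your additional converse (minimality via explaining-away for married co-parents) is extraneous to the statement as defined, and as you yourself note it would require non-degeneracy assumptions on the kernels that the theorem does not make; the paper does not attempt it.
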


{\bf Remarks:} The main appeal of this result arises from the generality of the model; indeed, it may be possible to learn the moralized graph even when we may not know what the precise epidemic evolution model is, as long as it satisfies (\ref{eq:markov}). In particular, related to the focus of this paper, there has been substantial work on learning the Markov graph structure of random variables from samples. In our setting, each cascade is a sample from the joint distribution of $T$, and hence one can imagine using some of these techniques. Markov graph learning techniques can generally be divided into \\
{\em (a)} those that assume a specific class of probability distributions: see e.g. \cite{MeiBuh06, Ravikumaretal08} for Gaussian MRFs, \cite{RavWaiLaf10, AnaTan11} for Ising models, \cite{Jalalietal11} for general discrete pairwise distributions. These typically require knowledge of the precise parametric form of the dependence, but then enable learning with a smaller number of samples. \\
{\em (b)} distribution-free algorithms, usually for discrete distributions and based on conditional independence tests \cite{AbbKolNg06,BreMosSly08,NetBanSanSha10}. These do not need to know the parametric form a-priori, but typically have higher computational and sample complexity.

{\bf Causality:} It is interesting to contrast Theorem \ref{thm:markov} with the other results in this paper. In particular, on the one hand, Theorems \ref{thm:samplecomplexity_maxLL} and \ref{thm:greedy-guarantees} utilize the fact precise causal process that generates $T$ to find the exact true  directed network graph. On the other hand, applying a Markov graph learning technique directly to the samples of $T$, without leveraging the process that generated them, only allows us to get to the moralized graph. It thus serves as a motivating example to extend the study of graph learning from samples to causal phenomena, in a way that explicitly takes into account time dynamics. 

Moralization also arises in Bayesian networks; this is an alternative formulation that associates an {\em acyclic} directed graph with a probability distribution. In that setting, the undirected Markov graph is also the moralization of this directed graph. We note however that our original true network graph $G$ can have directed cycles; in our setting the moralization arises from (ignoring the) causality in time.
\section{Experiments}\label{sec:experiments}

As an initial empirical illustration of our results, in this section, we present -- via Figures \ref{fig:sample_complexity}, \ref{fig:super_graph}, \ref{fig:twitter-subsubgraph-recovery} and \ref{fig:twitter-300node-scatterplot} --  empirical evaluations of both the ML and Greedy algorithms on synthetic graphs, and sub-graphs of the Twitter graph. In all cases, for the ML algorithm the threshold $\eta$ was picked via cross-validation. 

\begin{figure}[h]
\centering
\subfigure{
\includegraphics[width=0.47\linewidth , height = 0.33\linewidth]{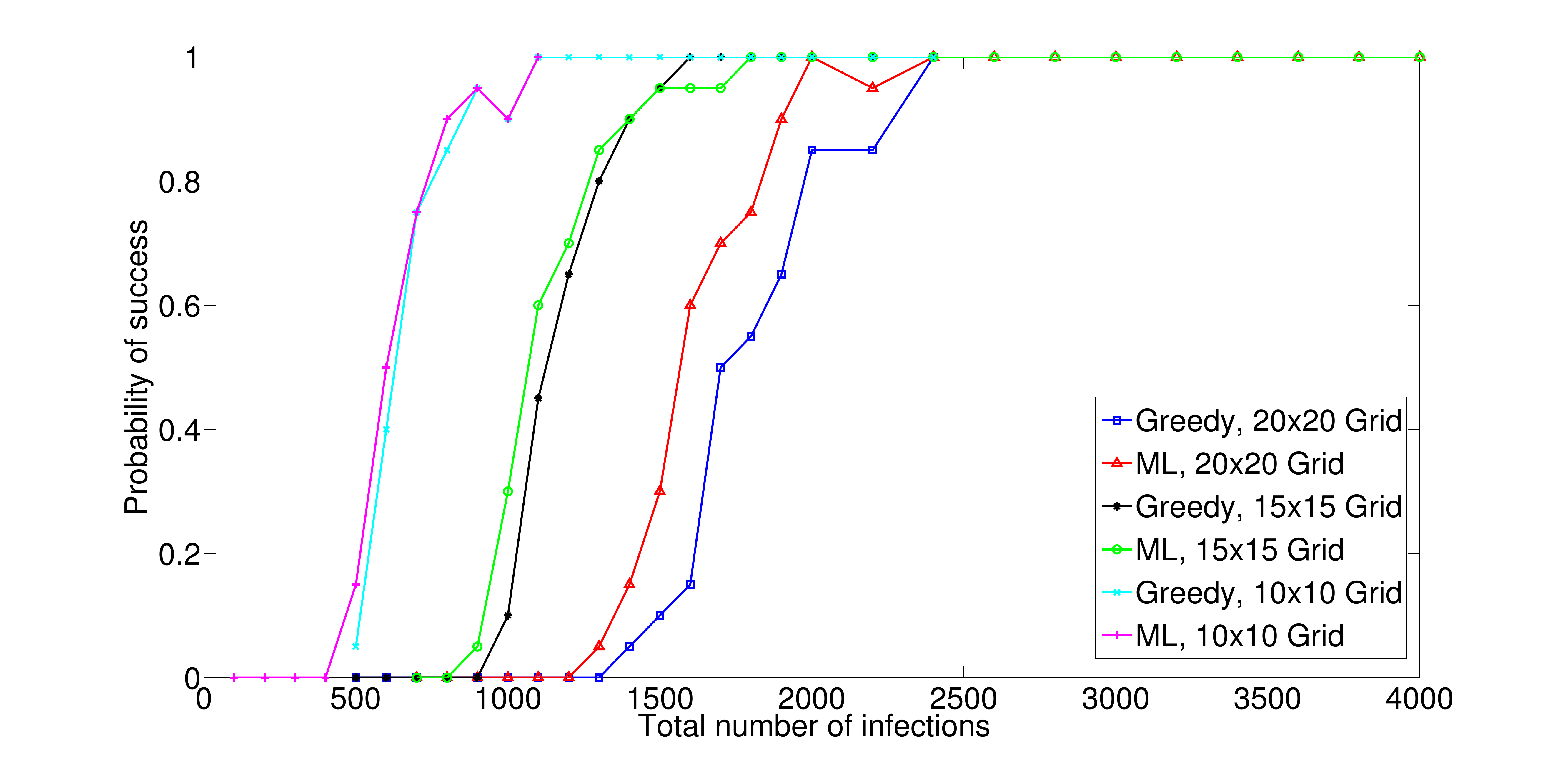}
}
\subfigure{
\includegraphics[width=0.47\linewidth , height = 0.33\linewidth]{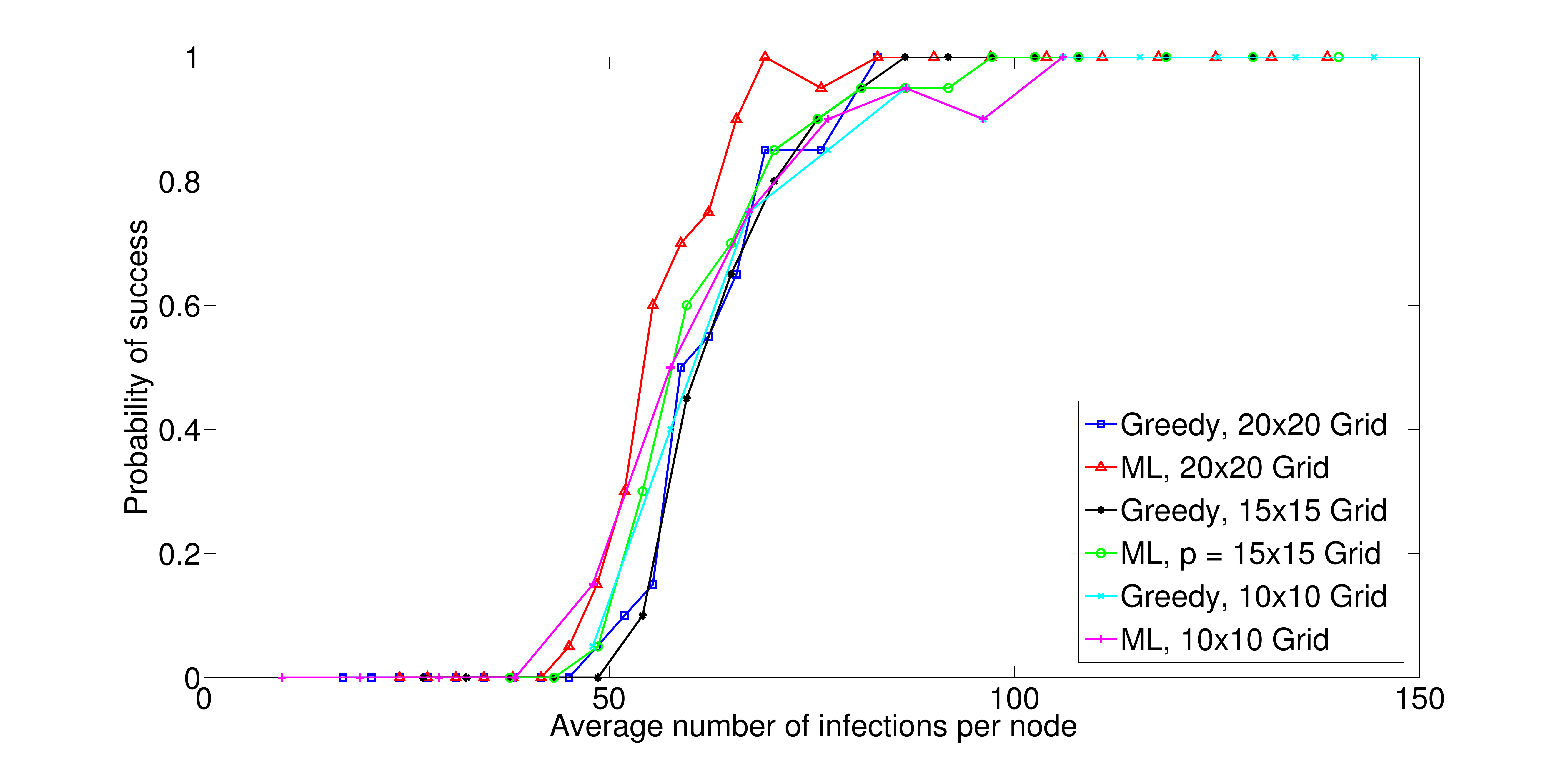}
}
\caption{{\bf Interpreting sample complexity:} As mentioned in Section \ref{sec:sysmodel}, and re-inforced by our theorems, consistent structure recovery is governed not so much by the total number of infections $m$ in the network, as by the number of times a node is infected (which is approx. $p_{init}m$). This figure provides some empirical validation of this claim; the plots on the left and right are from the same set of experiments using the ML algorithm (and no super-graph information). On the left we plot the probability of successful recovery of the entire graph as a function of $m$, while on the right we plot it as a function of the average number of times a node was infected; for several different sizes of 2-d grids. On the left, we see that the total number of cascades varies noticeably with grid size, but the average number of infections does not. This squares with Theorem \ref{thm:samplecomplexity_maxLL}, since in all these graphs the $d$ is the same, and $\log n$ does not vary much either. }
\label{fig:sample_complexity}
\end{figure}

\begin{figure}[h]
\centering
\includegraphics[width=0.8\linewidth]{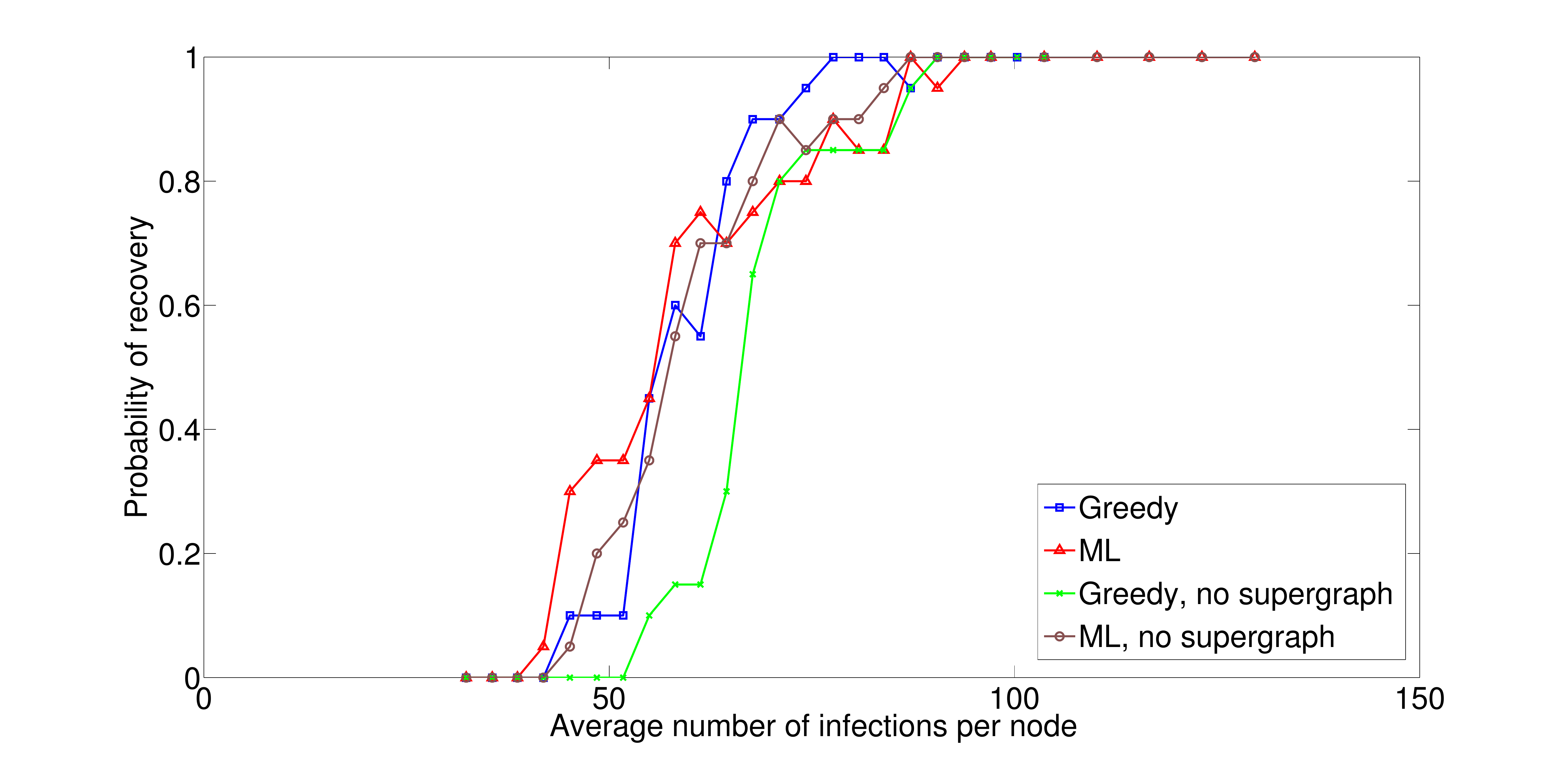}
\caption{{\bf Effect of super-graph information:} The presence of super-graph information can reduce the number of node infections (and hence cascades) required to learn the graph. Here we plot the probability of successful recovery for a 200-node random 4-regular graph, for the ML and Greedy algorithms, for two scenarios: when we are given a super-graph of regular degree 8 that contains the true graph, and when we are not given such information. We can see that the extent of reduction in sample complexity is moderate, reflecting the fact that the effect of super-graph information is logarithmic (i.e. $\log D$ vs $\log n$). 
}
\label{fig:super_graph}
\end{figure}
\begin{figure}[h]
\centering
\includegraphics[width=1\linewidth]{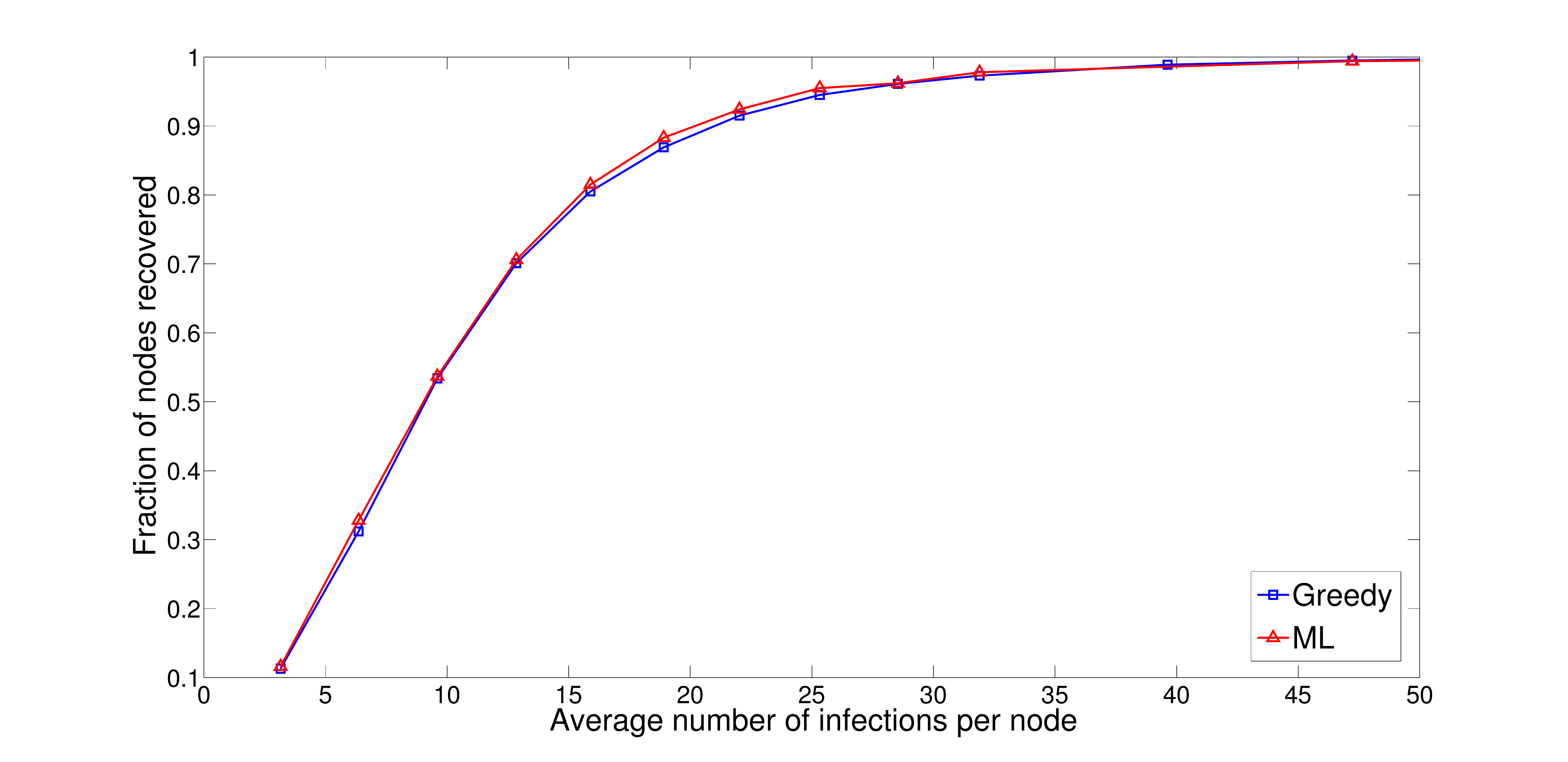}
\caption{{\bf Twitter graph as a super-graph:} We extracted a directed graph of $1000$ nodes from Twitter as follows: if $i$ follows $j$ on Twitter (i.e. $j$'s tweets appear in $i$'s feed) then we put a directed edge $j \rightarrow i$ in the graph. We then treat this as a super-graph; from this we chose a 4-regular sub-graph of ``real" neighbors -- i.e. each person is assumed to have at most 4 significant influencers from among  the people she/he follows. Edge parameters were chosen so as to give weight to only the chosen important parents. Infections were
sampled by simulating the independent cascade model using the above parameters. Both ML and Greedy algorithms were
given the infections as input along with the $1000$ node graph as a super graph. For various number of total infections,
the x-axis shows the average number of infections of a node and the y-axis shows the fraction of nodes whose neighborhoods
are recovered exactly by the algorithm.
}
\label{fig:twitter-subsubgraph-recovery}
\end{figure}

\begin{figure}[h]
\centering
\includegraphics[width=1\linewidth]{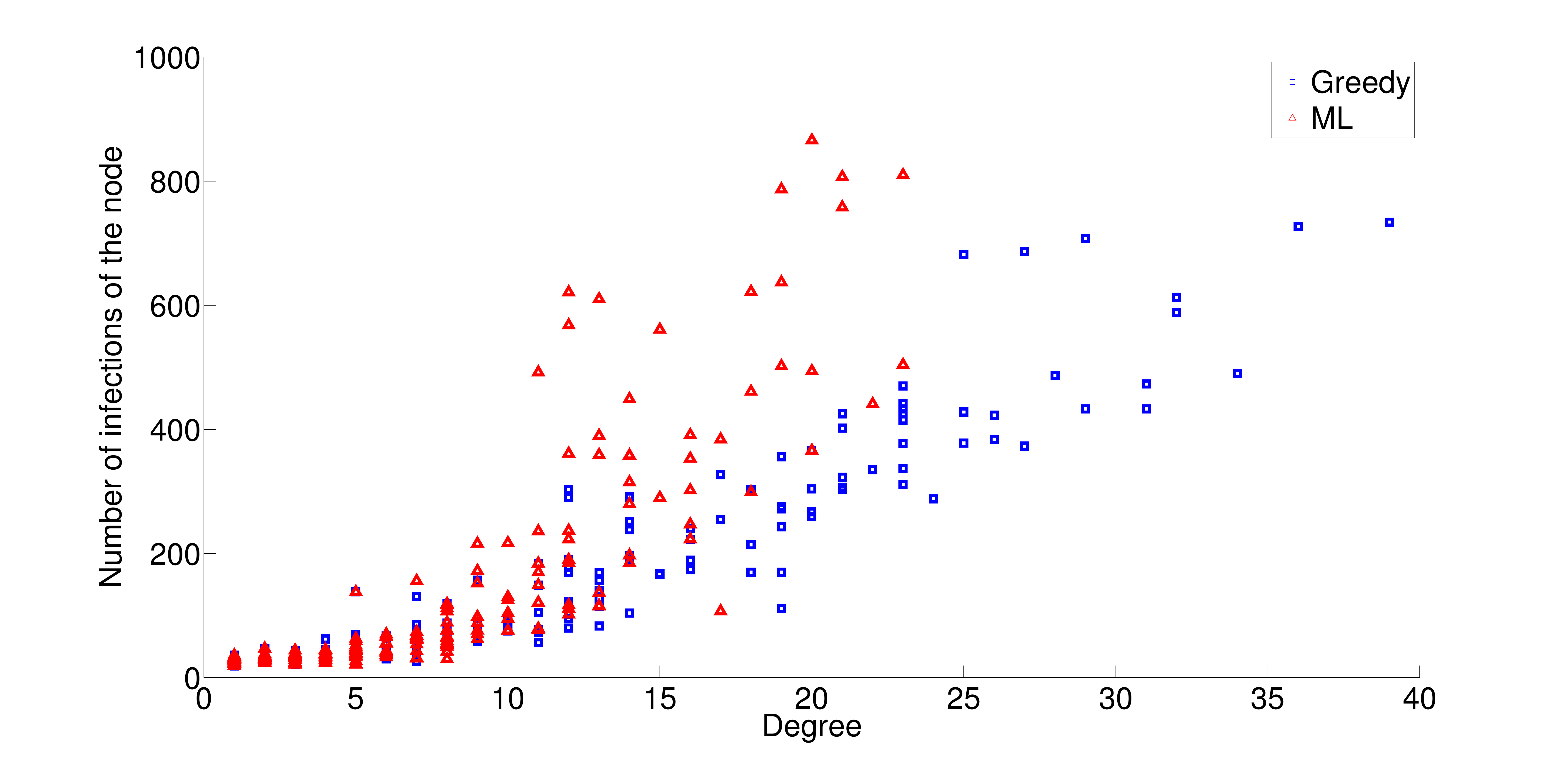}
\caption{{\bf Dependence on degree for Twitter graph:} This figure is a scatter plot where for each node we plot its degree, and the number of times it was infected before ML or Greedy succeeded in finding its neighborhood. The graph is a  $300$ node graph was extracted from Twitter (with edges made as explained in Figure \ref{fig:twitter-subsubgraph-recovery}); now however this is treated as the true graph to be learnt, and the algorithm is given no super-graph information. At least for this example, the sample complexity increases super-linearly with degree for the ML algorithm, where as
the dependence of the sample complexity on the degree is almost linear for the Greedy algorithm. This is in spite of the fact that the graph is far from being a tree; it has several small cycles.
}
\label{fig:twitter-300node-scatterplot}
\end{figure}

\section{Summary and Discussion}

This paper studies the problem of learning the graph on which epidemic cascades spread, given only the times when nodes get infected, and possibly a super-graph. We studied the sample complexity -- i.e. the number of cascade samples required -- for two natural algorithms for graph recovery, and also established a corresponding information-theoretic lower bound. To our knowledge, this is the first paper to study the sample complexity of learning graphs of epidemic cascades. Several extensions suggest themselves; we discuss some below.

{\bf Observation Model:} In this paper it is assumed that we have access to the times when nodes get infected. However, this may not always be possible. Indeed a weaker assumption is to only know the infected set in each cascade. To us this seems like a much harder problem, e.g. it is now not clear that there is a decoupling of the global graph learning problem.

{\bf Decoupling:} A key step in our ML results is to show that the global graph finding problem decouples into $n$ local problems. Our proof of this fact can be extended to any causal network process -- i.e. any process where the state $x_i(t)$ depends only on $x_{\V_i}(t-1)$ -- under the assumption that we can reconstruct the entire process trajectory from our observations (so e.g. the weaker observation model above would not fall into this class). In particular, it holds for more general models of epidemic cascade propagation as well; we focused on the discrete-time one-step model as a first step. 

{\bf Correlation decay:} Our results are for the case of correlation decay, i.e. when the cascade from one seed reaches a constant depth of nodes before extinguishing. Equally interesting and relevant is the case without correlation decay, when the cascade from each seed can reach as much as a constant fraction of the network. We suspect, based on experiments, that our algorithms would be efficient in this case as well; however, a proof would be technically quite different, and interesting.

{\bf Greedy algorithms:} As can be seen in our experiments, the greedy algorithm performs quite well even when the graph is far from being a tree (i.e. has several small cycles). It would be interesting to develop an alternate and more general proof of the performance of the greedy algorithm. We also note that one can easily formulate greedy algorithms in more general epidemic settings; this would involve iteratively choosing the parameter that gives the biggest change in the corresponding likelihood function. 


\appendix
\section{Correlation decay}

\begin{proof}[Proof of Lemma \ref{lem:corrdecay-probbound}]
We establish this by an induction on the number of nodes $n$ in the graph. If $n=1$, the statement above is obvious. Suppose the statement above is true for all graphs which have upto $n-1$ nodes. Consider now a graph
$G$ that has $n$ nodes. Consider any node $i$. The statement of the proposition is clearly true for $t=1$. For $t>1$, consider the probability that $i$
is infected by a parent $k\in \mS_i$ at time step $t$. This can be upper bounded as follows:
\begin{eqnarray*}
  \mathbb{P}_G\left[\text{$k$ infects $i$ at time $t$}\right] &\leq &\mathbb{P}_{\widetilde{G}}\left[T_k=t-1\right] p_{ki} \\
			&\leq & \left(1-\alpha\right)^{t-2} p_{\textrm{init}}p_{ki}
\end{eqnarray*}
where $\widetilde{G} := G\setminus i$ is the graph without node $i$, $\mathbb{P}_G$ denotes the probability when the graph is $G$, and similarly for $\mathbb{P}_{\widetilde{G}}$. The second inequality follows from the induction assumption, and the fact that if $\alpha$ is the decay coefficient for $G$, it is also for $\widetilde{G}$.
 Taking a union bound over $k\in \mS_i$ now gives us the statement of the theorem for $G$:
\begin{eqnarray*}
  \mathbb{P}_G \left[T_i = t \right] &= & \sum_{k \in \mS_i} \mathbb{P}_G\left[\text{$k$ infects $i$ at time $t$}\right]  \\
		&\leq & \left(1-\alpha\right)^{t-2} p_{\textrm{init}} \sum_{k\in \mS_i} p_{ki} \\
		&\leq & \left(1-\alpha\right)^{t-1} p_{\textrm{init}}
\end{eqnarray*}
The bounds on $\mathbb{P}[T_i < \infty]$ follow simply from summing this geometric series.
\end{proof}


\section{Maximum Likelihood}

\vspace{0.05in}
\subsection{Proof of Prop. \ref{prop:likelihood}}

Let $X_i(\tau)=0$ if $i$ is susceptible at time $\tau$, $1$ if $i$ is active at time $\tau$ and $2$ if $i$ is inactive
at time $\tau$. Let $X(\tau)$, $\tau = 0,\cdots,n$ be the corresponding vector process. Note that $X(\tau)$ is a Markov process, and
there is a one to one correspondence between the set of infection times $t$ and sample path $x(\tau)$ of the process $X(\tau)$.

Given $t$, let $x^0(\tau)$ be the corresponding vector process. In particular,
\begin{align*}
x_i^0(\tau) = 
  \begin{cases}
    0 & \mbox{ if } \tau < t_i \\
    1 & \mbox{ if } \tau = t_i \\
    2 & \mbox{ if } \tau > t_i
  \end{cases}
\end{align*}
Then,
\begin{align*}
 \mathbb{P}_{\theta}\left[T=t\right] & = ~ \mathbb{P}_{\theta}\left[X(\tau) = x^0(\tau) \mbox{ for } \tau = 0,\cdots,n\right] \\
	&= \mathbb{P}_{\theta}\left[X(0) = x^0(0)\right] ~ \times \\
	&\;\;\;\;\;  \prod_{\tau = 1}^n \mathbb{P}_{\theta}\left[X(\tau) = x^0(\tau) \middle \vert X(\tau-1) = x^0(\tau-1)\right]
\end{align*}
Now, $\mathbb{P}_{\theta}\left[X(0) = x^0(0)\right] = p_{\textrm{init}}^s\left(1-p_{\textrm{init}}\right)^{n-s}$.
Also,
\begin{align*}
  &\mathbb{P}_{\theta}\left[X(\tau) = x^0(\tau) \middle \vert X(\tau-1) = x^0(\tau-1)\right] \\
  &\quad = ~ \prod_{i\in V} \mathbb{P}_{\theta}\left[X_i(\tau) = x_i^0(\tau) \middle \vert X(\tau-1) = x^0(\tau-1)\right]
\end{align*}
because each node gets infected independently from each of its currently active neighbors. Thus we have that
\begin{align}\label{eqn:maxLL-probproduct}
  \mathbb{P}\left[T=t\right]
      = p_{\textrm{init}}^s\left(1-p_{\textrm{init}}\right)^{n-s} 
	    \prod_{i\in V} \left(\prod_{\tau=1}^n a_i(\tau)\right)
\end{align}
where $a_i(\tau) = \mathbb{P}_{\theta}\left[X_i(\tau) = x_i^0(\tau) \middle \vert X(\tau-1) = x^0(\tau-1)\right]$.
It is clear that for $\tau > t_i$, $a_i(\tau)=1$. For $\tau=t_i$, $a_i(\tau)$ is the probability that at least one
of its active nodes at time $t_i-1$ infected node $i$. Thus, 
\begin{align}\label{eqn:maxLL-infectionpart}
  a_i(t_i) = 1 - \prod_{j:t_j = t_i-1} \exp\left(-\theta_{ji}\right)
\end{align}
Finally, for each $\tau < t_i$, $a_i(\tau)$ is the probability that active nodes at time $\tau-1$ failed to infect node $i$.
The set of all nodes that were active but failed to infect susceptible node $i$ is $\{j:t_j \leq t_i-2\}$. So we have
\begin{align}\label{eqn:maxLL-failedinfectionpart}
  \prod_{\tau<t_i} a_i(\tau) = \prod_{j:t_j \leq t_i-2} \exp\left(-\theta_{ji}\right)
\end{align}
Putting \eqref{eqn:maxLL-probproduct}, \eqref{eqn:maxLL-infectionpart} and \eqref{eqn:maxLL-failedinfectionpart} together
and taking log gives the result. 

Concavity follows from the fact that $\log(1 - \exp(-x))$ is a concave function of $x$, and the fact that if any function $f(x)$ is a concave function of $x$ then $f(\sum_i \theta_i)$ is jointly concave in $\theta$.
$\blacksquare$

\subsection{Proof of Theorem \ref{thm:samplecomplexity_maxLL}}

We focus on the recovery of the neighborhood of node $i$. For brevity, we will drop $i$ from sub-scripts; thus we denote $\theta_{*i}$ by $\theta$, $\V_i$ by $\V$ and $\mS_i$ by $\mS$, and $d_i,D_i$ by $d,D$. Let $\theta^*$ be the true parameter values. Define the empirical log-likelihood function by
\begin{align*}
 \widehat{L}(\theta) \defas \frac{1}{m} \sum_u \mL_i(t^u;\theta)
\end{align*}
Note that the ML algorithm finds $\widehat{\theta}=\argmax_{\theta} \widehat{L}(\theta)$.
Also let $L(\theta)\defas \mathbb{E}_{\theta^*}\left[\mL_i(T,\theta)\right]$.

{\bf Idea:} Note that as the number of samples $m$ increases, $\widehat{L}\rightarrow L$. Also, we know that
$\theta^*=\arg \min_{\theta} L(\theta)$; this is just stating that the expected value of
the likelihood function is maximized by the true parameter values, a simple classical
result from ML estimation \cite{Poor_book}. Thus when $\widehat{L}\simeq L$, their minimizers will also be close; i.e. 
$\theta^* \simeq \widehat{\theta}$. However, they will not be exactly equal; hence hope then is to have subsequent thresholding find the significant edges. The challenge is in establishing non-asymptotic bounds that show that $m$ scales much slower than $n$ (the network size) or $D$ (the size of the super-neighborhood).

\noindent {\bf Roadmap to the proof:} 

{\bf (a)} In Proposition \ref{prop:grad-characterization} we provide an expression for the gradient $\grad_j L(\theta^*)$ of the expected log-likelihood evaluated at the true parameters $\theta^*$. This can be used to show that $\grad_j L(\theta^*) = 0$ for the true neighbors $j\in \V$, and  for the others we can show that $\grad_j L(\theta^*) < 0$ for $j\notin \V$. 

{\bf (b)} Note that {\em if} we had similar relationships hold for the empirical likelihood, i.e. if $\grad_j \widehat{L}(\widehat{\theta}) = 0$ for $j\in \V$ and $\grad_j \widehat{L}(\widehat{\theta}) < 0$ for $j\notin \V$, then we would be done; this is because by complementary slackness conditions we would have that $\widehat{\theta}_j > 0$ for $j\in \V$ and $\widehat{\theta}_j = 0$ otherwise: the non-zero $\widehat{\theta}_j$ would then correspond to the true neighborhood. Of course, these relationships do not hold exactly; the rest of the proof is showing they hold approximately, and the neighborhood can be found by thresholding. 

{\bf (c)} As a first step to analyzing $\grad_j \widehat{L}(\widehat{\theta})$, in Lemma \ref{lemma:concentration-results} we establish concentration results showing that an intermediate quantity $\grad_j \widehat{L}(\theta^*)$ is close to $\grad_j L(\theta^*)$, and hence we can show that $|\grad_j \widehat{L}(\theta^*)|< a$ for $j\in \V$ (i.e. the gradient is small for the true neighbors), and $\grad_j \widehat{L}(\theta^*)<-b$ for $j\notin \V$ (i.e. the gradient is negative for the others). Here $a$ and $b$ depend on the system parameters, and $a$ depends on the threshold $\eta$ as well, with $a \rightarrow 0$ as $\eta \rightarrow 0$. This latter dependence is important as it shows that once the number of samples $m$ becomes large, we can choose $\eta$ small and get exact recovery. 

{\bf (d)} In Lemma \ref{lemma:thetaneighb-upperbound}, we provide an upper bound on the value of $\widehat{\theta}_j$ for $j\in \V$. We need this to not be too large for the next step. 

{\bf (e)} In Lemma \ref{lemma:thetanonneighb-upperbound} we derive an upper bound on the total value $\sum_{j\notin \V} \widehat{\theta}_j$ of the non-neighbor parameters in $\widehat{\theta}$. This upper bound implies that no non-neighbors will be selected after thresholding at $\eta$, completing the proof of the first claim of the theorem. 

{\bf (f)} Finally, in Lemma \ref{lemma:thetaneighb-lowerbound} we show that, for true neighbors $j\in \V$, if the true $p^*_{ji} > \frac{8}{\alpha} (e^{2\eta} - 1)$ then $\widehat{\theta}_j > \eta$, and will thus be estimated to be in the true neighborhood. This completes the proof of the second claim of the theorem.


\begin{prop}\label{prop:grad-characterization}
\begin{align}
  \grad_j L(\theta^*) = -\mathbb{P}\left[T_i > T_j~ ; ~ T_k \neq T_j ~~ \forall ~ k \in \V\right]
  \label{eqn:gradient}
\end{align}
\end{prop}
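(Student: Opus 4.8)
The plan is to differentiate the node-likelihood $\mL_i$ of Proposition \ref{prop:likelihood} coordinate-wise, take the expectation under the true parameters, and then exploit the causal (Markovian) structure of the one-step cascade so that the expectation collapses to a single probability. First I would compute the partial derivative of $\mL_i(t;\theta)$ with respect to $\theta_j$ (writing $\theta_j$ for $\theta_{ji}$, $\V$ for $\V_i$, etc., as in the theorem's proof). Only two terms of $\mL_i$ involve $\theta_j$: the failed-infection sum contributes $-1$ whenever $t_j \le t_i-2$, and the successful-infection log-term contributes $\frac{e^{-S}}{1-e^{-S}}$ whenever $t_j=t_i-1$, where $S=\sum_{k:\,t_k=t_i-1}\theta_k$; for $t_j\ge t_i$ the derivative is $0$. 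Hence, at $\theta=\theta^*$,
\[
\grad_j \mL_i = -\,\mathds{1}[T_j \le T_i-2] \;+\; \frac{e^{-S^*}}{1-e^{-S^*}}\,\mathds{1}[T_j = T_i-1],\qquad S^* := \sum_{k\in\V:\,T_k=T_i-1}\theta^*_k .
\]
Both indicator events force $T_i>T_j$, so only realizations with $T_i>T_j$ can contribute to $\grad_j L(\theta^*)=\mathbb{E}_{\theta^*}[\grad_j \mL_i]$.

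The key step is to evaluate this expectation by conditioning on the history $\mathcal{H}$ of the whole process up to time $T_j$ (the states of all nodes at times $0,\dots,T_j$). This choice of conditioning is precisely what lets me sidestep any difficulty coming from directed cycles or from $j$ lying downstream of $i$: by the causal factorization of the model, given $\mathcal{H}$ the infection attempts made on $i$ at the next step $T_j+1$ are fresh, mutually independent Bernoulli trials, one per parent active at time $T_j$. The event $\{T_i>T_j\}$ (i.e.\ $i$ still susceptible at the end of time $T_j$) is $\mathcal{H}$-measurable, and on it I set $\sigma:=\sum_{k\in\V:\,T_k=T_j}\theta^*_k$, the total true-parent weight active at time $T_j$ (also $\mathcal{H}$-measurable). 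Conditioned on $\mathcal{H}$ and on $\{T_i>T_j\}$, exactly one of two things happens: with probability $1-e^{-\sigma}$ some active parent succeeds and $T_i=T_j+1$ (the second term fires, with $S^*=\sigma$), and with probability $e^{-\sigma}$ all fail and $T_i\ge T_j+2$ (the first term fires with value $-1$).

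Putting these together, the conditional expectation of $\grad_j \mL_i$ telescopes: when $\sigma>0$,
\[
\frac{e^{-\sigma}}{1-e^{-\sigma}}\,(1-e^{-\sigma}) \;-\; e^{-\sigma} \;=\; 0,
\]
so any neighborhood configuration having a true parent active at time $T_j$ contributes nothing. The main obstacle — and the real crux of the proposition — is the boundary case $\sigma=0$, where the factor $\frac{e^{-\sigma}}{1-e^{-\sigma}}$ is singular and the naive cancellation is invalid; here one must instead observe that $\sigma=0$ makes $T_i=T_j+1$ impossible, so the success term carries probability $0$ and the conditional expectation is simply $-1$. Thus $\mathbb{E}[\grad_j\mL_i \mid \mathcal{H}]=-\,\mathds{1}[\sigma=0]$ on $\{T_i>T_j\}$.

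Finally I would identify the event $\{\sigma=0\}\cap\{T_i>T_j\}$. Since $T_i>T_j$ already rules out $k=i$, the condition $\sigma=0$ says that no true parent $k\in\V\setminus\{i\}$ is active at time $T_j$, which together with $T_i\neq T_j$ is exactly $T_k\neq T_j$ for all $k\in\V$. Taking the outer expectation over $\mathcal{H}$ (the degenerate realizations $T_i=0$ and $T_j=\infty$ give zero on both sides and are checked separately) then yields
\[
\grad_j L(\theta^*) = -\,\mathbb{P}\!\left[T_i > T_j\;;\; T_k \neq T_j\ \forall\, k\in\V\right],
\]
the claimed identity. As a sanity check, this immediately produces the two regimes needed later: for a true neighbor $j\in\V$ the event is empty because $T_j\neq T_j$ fails, giving $\grad_j L(\theta^*)=0$, whereas for $j\notin\V$ the probability can be strictly positive, giving a strictly negative gradient.
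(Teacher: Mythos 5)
Your proposal is correct and follows essentially the same route as the paper's proof: differentiate $\mL_i$ term by term, condition on the history up to the stopping time $T_j$ (the paper's $\mathcal{F}_{T_j}$), and use causality so that the success and failure terms cancel exactly when a true parent is active at time $T_j$, leaving $-\mathds{1}\{T_k \neq T_j \ \forall k \in \V\}$ on $\{T_i > T_j\}$. Your explicit handling of the singular boundary case $\sigma=0$ and of the convention $i\in\V_i$ is a slightly more careful writeup of the same argument (the paper absorbs the $\sigma=0$ case into the indicator $\mathds{1}_{\{\exists k\in\V:\,T_k=T_j\}}$), not a different approach.
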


\begin{proof}
Taking the derivative of $L(\cdot)$ with respect to $\theta_j$, we obtain
\begin{align*}
  \grad_j L(\theta)
  = \mathbb{E}\left[ -\mathds{1}_{\{T_j \leq T_i - 2\}} +
		    \frac{\mathds{1}_{\{T_j = T_i - 1\}}}
		      {\exp\left(\sum_{k\colon T_k = T_i - 1} \theta_{k}\right)-1}\right]
\end{align*}
Let $\mathcal{F}_{T_j}$ be the sigma algebra with information up to the (random) time $T_j$.
By iterated conditioning, we obtain
\begin{align}
  \grad_j L(\theta^*)
		      = -\mathbb{E}\left[ \mathbb{E}\left[ \mathds{1}_{\{T_j \leq T_i - 2\}} -
		      \frac{\mathds{1}_{\{T_j = T_i - 1\}}}
		      {\exp\left(\sum_{k\colon T_k = T_i - 1} \theta^*_{ki}\right)-1}
		      \;\middle\vert\; \mathcal{F}_{T_j} \right] \right] \label{eqn:grad-conditional}
\end{align}
Since the event $\{T_i \leq T_j\}$ is measurable in $\mathcal{F}_{T_j}$, we have
\begin{align}
&\mathbb{E}\left[ \mathds{1}_{\{T_j \leq T_i - 2\}} -
     \frac{\mathds{1}_{\{T_j = T_i - 1\}}}
     {\exp\left(\sum_{k\colon T_k = T_i - 1} \theta^*_{ki}\right)-1}
     \;\middle\vert\; \mathcal{F}_{T_j} \right]
	= 0 \mbox{ if } T_i \leq T_j \label{eqn:grad-conditional-1}
\end{align}
On the other hand, if $\{T_i > T_j\}$, we have
\begin{align*}
&\mathbb{E}\left[ \mathds{1}_{\{T_j \leq T_i - 2\}} -
     \frac{\mathds{1}_{\{T_j = T_i - 1\}}}
     {\exp\left(\sum_{k\colon T_k = T_i - 1} \theta^*_{si}\right)-1}
     \;\middle\vert\; \mathcal{F}_{T_j} \right] \\
	&= \mathbb{P}\left[T_i \geq T_j + 2 \;\middle\vert\; \mathcal{F}_{T_j}\right] -
	  \mathbb{E}\left[ \frac{\mathds{1}_{\{T_j = T_i - 1\}}}
		      {\exp\left(\sum_{k\colon T_k = T_i - 1} \theta^*_{ki}\right)-1}
		      \;\middle\vert\; \mathcal{F}_{T_j}\right]
\end{align*}
Considering the two terms above separately, we see that
\begin{align*}
  \mathbb{P}\left[T_i \geq T_j + 2 \;\middle\vert\; \mathcal{F}_{T_j}\right] = 
	\exp\left(-\sum_{k \colon T_k = T_j} \theta^*_{ki}\right)
\end{align*}
which follows from the fact that the probability that (active) $j$ failed to infect (susceptible) $i$ is equal
to the probability that all the nodes that were active at $T_j$ failed to infect $i$. For the second term, we have
\begin{align*}
	\mathbb{E}\left[ \frac{\mathds{1}_{\{T_j = T_i - 1\}}}
		      {\exp\left(\sum_{k\colon T_k = T_i - 1} \theta^*_{ki}\right)-1}
		      \;\middle\vert\; \mathcal{F}_{T_j}\right]
	& =\mathbb{E}\left[ \frac{\mathds{1}_{\{T_j = T_i - 1\}}}
		      {\exp\left(\sum_{k\colon T_k = T_j} \theta^*_{ki}\right)-1}
		      \;\middle\vert\; \mathcal{F}_{T_j}\right] \\
	& \stackrel{(\varsigma_1)}{=}\frac{1}{\exp\left(\sum_{k\colon T_k = T_j} \theta^*_{ki}\right)-1}
		    \mathbb{E}\left[ \mathds{1}_{\{T_j = T_i - 1\}}		      
		      \;\middle\vert\; \mathcal{F}_{T_j}\right] \\
	& \stackrel{(\varsigma_2)}{=}\frac{\left(1-\exp\left(-\sum_{k \colon T_k = T_j} \theta^*_{ki}\right)\right)
		\mathds{1}_{\{\exists k \in \V \mbox{ s.t. } T_k = T_j\}}}
	      {\exp\left(\sum_{k\colon T_k = T_j} \theta^*_{ki}\right)-1} \\
	& = \exp\left(-\sum_{k \colon T_k = T_j} \theta^*_{ki}\right) \mathds{1}_{\{\exists k \in \V \mbox{ s.t. } T_k = T_j\}}
\end{align*}
where $(\varsigma_1)$ follows from the fact that $\{k : T_k=T_j\}$ is measurable in $\mathcal{F}_{T_j}$ and
$(\varsigma_2)$ follows from the fact that $T_i=T_j+1$ if and only if at least one of the parents of $i$ were
active at $T_j$ and succeeded in infecting $i$. Combining the above two equations, we obtain
\begin{align}
\mathbb{E}\left[ \mathds{1}_{\{T_j \leq T_i - 2\}} -
     \frac{\mathds{1}_{\{T_j = T_i - 1\}}}
     {\exp\left(\sum_{k\colon T_k = T_i - 1} \theta^*_{si}\right)-1}
     \;\middle\vert\; \mathcal{F}_{T_j} \right]
=   \mathds{1}_{\{T_k \neq T_j \; \forall\; k \in \V\}} \mbox{ if } T_i > T_j \label{eqn:grad-conditional-2}
\end{align}
Combining \eqref{eqn:grad-conditional}, \eqref{eqn:grad-conditional-1} and \eqref{eqn:grad-conditional-2}
\begin{align}
  \grad_j L(\theta^*) = -\mathbb{P}\left[T_i > T_j; T_k \neq T_k \forall k \in \V\right]
\end{align}
\end{proof}


An easy corollary of Proposition \ref{prop:grad-characterization} is that if $j$ is a parent of
$i$, then the gradient with respect to $\theta_j$ is zero since the probability above
needs none of the parents of $i$ to be infected at the same time as $j$. On the other hand,
if $j$ is not a parent of $i$, the gradient is strictly negative since the probability on
the right hand side is strictly positive.
\begin{align}
  \grad_j L(\theta^*) &= 0 \mbox{ if } j \in \V \label{eqn:grad-neighbor-zero}\\
  \grad_j L(\theta^*) &< 0 \mbox{ if } j \notin \label{eqn:grad-nonneighb-nonzero}\V
\end{align}
We now state our concentration results.
For any $j$, let $\grad_j\widehat{L}(\theta)$ be the partial derivative of $\widehat{L}(\theta)$
with respect to $\theta_j$. For $j\in \V$, let 
\begin{align*}
  m_{1,j} \defas \left|\{u : t_j^u=t_i^u - 1 \;\&\;
  t_k^u \neq t_i^u - 1 \; \forall \; k \in \V\setminus j\}\right|
\end{align*}
be the number of cascades where $j$ is the sole infector of node $i$ and
\begin{align*}
  m_{2,j} \defas \left|\{u : t_j^u \leq t_i^u - 2 \}\right|
\end{align*}
be the number of cascades where $j$ is infected at least two time units before $i$.


\begin{lemma}\label{lemma:concentration-results}
  For $m > \frac{c}{p_{init}} \left ( \frac{1}{\alpha^7 \eta^2  p_{i,min}^2} \right )  d_i^2 \log \left(\frac{D_i}{\delta}\right)$, we have that
\begin{itemize}
  \item [(a)] $\left|\grad_j\widehat{L}(\theta^*)\right|<a$ for $j\in \V$ where
	  $a := \frac{\alpha^3 \eta p_{\textrm{init}}}{144 d}$
  \item [(b)] $\grad_j\widehat{L}(\theta^*)<-b$ for $j\notin \V$ where
	  $b := \frac{\alpha p_{\textrm{init}}}{16}$
  \item [(c)] $\xi_1 p_j^* < m_{1,j} < \overline{\xi}_1$ for $j \in \V$ where
	  $\xi_1 \defas \frac{c}{4}\log \frac{D}{\delta}$, $\overline{\xi}_1 \defas \frac{2c}{\alpha}\log\frac{D}{\delta}$ and $p_j^*\defas 1-\exp(-\theta_j^*)$
  \item [(d)] $\xi_2 < m_{2,j} < \overline{\xi}_2$ for $j \in \V$ where
	  $\xi_2 \defas \frac{c\alpha}{4}\log \frac{D}{\delta}$ and $\overline{\xi}_2 \defas \frac{2c}{\alpha}\log\frac{D}{\delta}$
\end{itemize}
with probability greater than $1-\delta$.
\end{lemma}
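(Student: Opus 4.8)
The plan is to treat each of the four statements as a concentration inequality for a sum of independent contributions, one per cascade $u$, and then union-bound over the at most $D$ candidate parents $j\in\mS$ (allocating failure probability $\delta/4D$ to each of the four parts for each $j$). Writing $Z_j^u := -\mathds{1}_{\{t_j^u\le t_i^u-2\}} + \mathds{1}_{\{t_j^u=t_i^u-1\}}/(\exp(\sum_{k:t_k^u=t_i^u-1}\theta_k^*)-1)$ for the per-cascade gradient summand (as in the proof of Proposition \ref{prop:grad-characterization}), we have $\grad_j\widehat{L}(\theta^*)=\frac1m\sum_u Z_j^u$ and $\mathbb{E}[Z_j^u]=\grad_j L(\theta^*)$, which by \eqref{eqn:grad-neighbor-zero}--\eqref{eqn:grad-nonneighb-nonzero} equals $0$ for $j\in\V$ and is strictly negative for $j\notin\V$. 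Thus (a) is concentration about a zero mean, (b) is concentration about a strictly negative mean once I lower bound its magnitude, and (c),(d) are concentration of binomial counts.

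The crux is controlling the variance of $Z_j^u$, so as to obtain the stated dependence of $m$ on $\alpha,\eta,p_{i,min}$ rather than the far weaker bound Hoeffding would give. The one dangerous feature is the factor $1/(\exp(\cdot)-1)$, which blows up when the active parent set has small total weight. The key observation that tames it is that whenever $\mathds{1}_{\{t_j^u=t_i^u-1\}}=1$ and $i$ is infected, at least one \emph{true} parent of $i$ must be active at time $t_i^u-1$ (else $i$ could not have been infected), so $\sum_{k:t_k^u=t_i^u-1}\theta_k^*\ge -\log(1-p_{i,min})$ and the denominator is at least $p_{i,min}/(1-p_{i,min})\ge p_{i,min}$; hence $|Z_j^u|\le O(1/p_{i,min})$. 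Moreover $Z_j^u$ is nonzero only when $j$ is infected, which by Lemma \ref{lem:corrdecay-probbound} has probability at most $p_{init}/\alpha$, so $\mathbb{E}[(Z_j^u)^2]=O(p_{init}/(\alpha\,p_{i,min}^2))$. Applying Bernstein's inequality with target deviation $a$ for $j\in\V$ (resp.\ $b$ for $j\notin\V$), the variance term dominates and demands $m\gtrsim \mathbb{E}[(Z_j^u)^2]\,a^{-2}\log\frac{D}{\delta}$; substituting $a=\Theta(\alpha^3\eta p_{init}/d)$ reproduces exactly the stated $\frac{1}{p_{init}}\cdot\frac{d^2}{\alpha^7\eta^2 p_{i,min}^2}\log\frac{D}{\delta}$ scaling, with the range term subdominant. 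I expect this variance bound to be where the real work lies.

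For part (b) I still need $|\grad_j L(\theta^*)|$ bounded below by roughly $2b$, and this step is clean. Restricting the event $\{T_i>T_j,\;T_k\ne T_j\ \forall k\in\V\}$ to the sub-event that $j$ is a seed while no node of $\V$ is a seed, and using the convention $i\in\V$ (so on this sub-event $T_i\ge 1>0=T_j$ and no true parent shares $j$'s time), independence of the seed coin flips gives $\mathbb{P}\ge p_{init}(1-p_{init})^{d_i}\ge \tfrac12 p_{init}$, where the last inequality uses $d_i p_{init}<\tfrac12$. Since $\tfrac12 p_{init}\gg b=\alpha p_{init}/16$, the deviation budget required for (b) is much looser than for (a), so (b) holds at the same threshold $m$.

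Finally, parts (c) and (d) bound the binomial counts $m_{1,j}$ and $m_{2,j}$, each a sum of $m$ i.i.d.\ indicators with per-cascade success probabilities $q_{1,j}=\mathbb{P}[j\text{ is the sole active true parent of }i\text{ and infects it}]$ and $q_{2,j}=\mathbb{P}[t_j\le t_i-2]$. I would estimate these from the dynamics and Lemma \ref{lem:corrdecay-probbound}: $q_{1,j}$ factors as (the probability that $j$ is infected with no other true parent co-active) times the edge success probability $p_j^*=1-e^{-\theta_j^*}$, with the assumption $d_ip_{init}<\tfrac12$ keeping the ``no co-active parent'' factor bounded away from $0$. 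A Chernoff bound on each binomial count, again union-bounded over $j\in\mS$, then concentrates $m_{1,j}$ and $m_{2,j}$ within constant (and $\alpha$-dependent) multiples of their means, yielding the claimed two-sided bounds. Taking the absolute constant $c$ large enough to satisfy all four Bernstein/Chernoff requirements simultaneously completes the argument.
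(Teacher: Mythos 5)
Your proposal is correct, and it shares the paper's overall skeleton---the same per-cascade summands $Z_j^u$ with means computed via Proposition \ref{prop:grad-characterization} (zero for $j\in\V$; at most $-p_{\textrm{init}}(1-p_{\textrm{init}})^{d+1} \le -p_{\textrm{init}}/2$ for $j\notin\V$ via the seed event, exactly as the paper argues), the same Chernoff treatment of the binomial counts $m_{1,j}$ and $m_{2,j}$, and a union bound over $\mS$---but it handles the crux, concentration of $\grad_j\widehat{L}(\theta^*)$ at the stated sample size, by a genuinely different mechanism. The paper avoids the extra $1/p_{\textrm{init}}$ factor that naive Hoeffding would cost by conditioning: it observes $Z_j^u=0$ unless $T_j^u<\infty$, shows via Chernoff that $|\U_j|$ lies in $\left[\frac{C\log\frac{D}{\delta}}{2},\frac{2C\log\frac{D}{\delta}}{\alpha}\right]$ with high probability, and then applies Azuma--Hoeffding over only the $|\U_j|$ nonzero terms. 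You achieve the same effect in a single step with Bernstein, encoding the rareness of nonzero terms in the second-moment bound $\mathbb{E}\left[(Z_j^u)^2\right]\le \mathbb{P}\left[T_j<\infty\right]/p_{i,min}^2 \le p_{\textrm{init}}/(\alpha\, p_{i,min}^2)$; the two routes give the same exponent up to constants, and yours is arguably cleaner since it sidesteps the conditional bookkeeping (the paper implicitly needs $\mathbb{E}\left[Z_j \mid \mathcal{F}_{T_j}\right]=0$ so that the zero mean survives conditioning on $\U_j$). Moreover, your uniform bound $|Z_j^u|\le 1/p_{i,min}$---via the observation that whenever $i$ is infected some true parent must be active at $T_i-1$, so the denominator is at least $p_{i,min}/(1-p_{i,min})$---is more careful than the paper's \eqref{eqn:Zj-absbound}, which asserts $|Z_j|<1/p_j^*$ and is vacuous for $j\notin\V$ (there $\theta_j^*=0$); your argument is precisely what is needed to make the paper's part (b) computation rigorous. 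One small detail you gloss over: in part (d) the lower bound $\xi_2=\frac{c\alpha}{4}\log\frac{D}{\delta}$ carries the factor $\alpha$ because when $j\in\V$ is a seed you must additionally demand that $j$ \emph{fails} to infect $i$ at the first step (probability $1-p_j^*>\alpha$) to guarantee $T_j\le T_i-2$; with that factor inserted, your Chernoff argument for (d) goes through unchanged.
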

\begin{proof}
For simplicity of notation we denote the number of samples as $m = \frac{C\log\frac{D}{\delta}}{p_{\textrm{init}}}$
where $C = \frac{cd_i^2}{\alpha^7 \eta^2  p_{i,min}^2}$ and $D=D_i$.
We will first prove (c).
First, we note the following bounds for independent Bernoulli random variables $X_l$ where $\mu$ is the mean of the sum of
$X_l$.
\begin{align}
  \mathbb{P}\left[\sum_l X_l < (1-\kappa)\mu\right] < \left(\frac{\exp(-\kappa)}{(1-\kappa)^{(1-\kappa)}}\right)^{\mu} \label{eqn:chernoff-lower}\\
  \mathbb{P}\left[\sum_l X_l > (1+\kappa)\mu\right] < \left(\frac{e^{\frac{\kappa}{1+\kappa}}}{1+\kappa}\right)^{(1+\kappa)\mu} \label{eqn:chernoff-upper}
\end{align}
So as to be able to use the above inequalities, we first establish bounds on the expected value of $m_{1,j}$.
\begin{align*}
  \mathbb{E}\left[m_{1,j}\right] &\geq m p_{\textrm{init}} (1-p_{\textrm{init}})^d p_j^*
      \geq 2\xi_1 p_j^*
\end{align*}
where the bound uses the probability that $j$ is infected at time $0$ and neither $i$ nor
any of its other neighbors are infected at time $0$ and $j$ infects $i$ at time $1$.
Similarly, we have
\begin{align*}
  \mathbb{E}\left[m_{1,j}\right] &\leq m \mathbb{P}\left[T_j < \infty\right] 
      \leq \frac{\overline{\xi}_1}{2}
\end{align*}
where we use Lemma \ref{lem:corrdecay-probbound}.
Now applying \eqref{eqn:chernoff-lower} to $m_{1,j}$ we obtain
\begin{align*}
  \mathbb{P}\left[m_{1,j} < (1-\frac{1}{2}) 2\xi_1 p_j^*\right] <
	\left(\frac{\exp\left(-\frac{1}{2}\right)}
	      {\left(\frac{1}{2}\right)^{\frac{1}{2}}}\right)^{2\xi_1 p_j^*} 
    < \frac{\delta}{8D}
\end{align*}
Similarly applying \eqref{eqn:chernoff-upper} to $m_{1,j}$ gives us
\begin{align*}
  \mathbb{P}\left[m_{1,j} > (1+1) \frac{\overline{\xi}_1}{2}\right] <
	\left(\frac{\sqrt{e}}{2}\right)^{\overline{\xi}_1} 
    < \frac{\delta}{8D}
\end{align*}
This proves (c). The proof of (d) is similar.

We will now prove (a). Fix any $j \in \V$. Let $\U_j = \{u\in \U:T_j^u < \infty\}$.
Since $\mathbb{E}\left[\left|\U_j\right|\right] \geq p_{\textrm{init}} m = C \log \frac{D}{\delta}$, using
\eqref{eqn:chernoff-lower}, we obtain
\begin{align}\label{eqn:Uj-lowerbound}
  \mathbb{P}\left[|\U_j| < \frac{C\log \frac{D}{\delta}}{2}\right] < \frac{\delta}{16D}
\end{align}
Similarly since $\mathbb{E}\left[\left|\U_j\right|\right] \leq \frac{p_{\textrm{init}}}{\alpha} m = \frac{C}{\alpha} \log \frac{D}{\delta}$,
using \eqref{eqn:chernoff-upper}, we obtain
\begin{align}\label{eqn:Uj-upperbound}
  \mathbb{P}\left[|\U_j| > \frac{2C\log \frac{D}{\delta}}{\alpha}\right] < \frac{\delta}{16D}
\end{align}
Define the random variable
\begin{align*}
  Z_j = -\mathrm{1}_{\{T_j \leq T_i - 2\}} + \frac{\mathrm{1}_{\{T_j=T_i-1\}}}{\exp\left(\sum_{k:T_k=T_i-1}\theta_k^*\right)-1}
\end{align*}
Note that we have the following absolute bound on $Z_j$
\begin{align}\label{eqn:Zj-absbound}
  |Z_j| < 1+\frac{1}{\exp(\theta^*_{j})-1} = \frac{1}{p^*_{j}}
\end{align}
where $p^*_j=1-\exp\left(-\theta_j\right)$ and also
\begin{align*}
\grad_j \widehat{L}(\theta^*) &= \frac{1}{m} \sum_{u\in \U} Z_j^u
	= \frac{1}{m} \sum_{u \in \U_j} Z_j^u
\end{align*}
where $Z_j^u$ is the realization of $Z_j$ on infection $u$.
\begin{align*}
  \mathbb{P}\left[\left|\grad_j\widehat{L}(\theta^*)\right|\geq a\right]
  = \mathbb{P}\left[\frac{1}{m}\left|\sum_{u\in \U} Z_j^u\right|\geq a\right]
  = \mathbb{P}\left[\left|\sum_{u\in \U} Z_j^u\right|\geq ma\right] \nonumber
\end{align*}
At this point we could apply Azuma-Hoeffding inequality to bound the above probability. However, the scaling factor in the
exponent will be $ma^2$ which gives us an extra $p_{\textrm{init}}$. To avoid this, we bound the above quantity as follows:
\begin{align}
  &\mathbb{P}\left[\left|\sum_{u\in \U} Z_j^u\right|\geq ma\right] \nonumber \\
  &\leq \mathbb{P}\left[|\U_j| > \frac{2C\log \frac{D}{\delta}}{\alpha} \mbox{ or } |\U_j| < \frac{C\log \frac{D}{\delta}}{2}\right]
	+ \sum_{s=\frac{C\log \frac{D}{\delta}}{2}}^{\frac{2C\log \frac{D}{\delta}}{\alpha}}
	    \mathbb{P}\left[\left|\U_j\right|=s;\left|\sum_{u\in \U} Z_j^u \right|\geq ma\right] \nonumber\\
  &\stackrel{(\varsigma_1)}{\leq}\frac{\delta}{8D}+
  \sum_{s=\frac{C\log \frac{D}{\delta}}{2}}^{\frac{2C\log \frac{D}{\delta}}{\alpha}}
    \sum_{U_j : |U_j|=s}\mathbb{P}\left[\U_j=U_j\right]\mathbb{P}\left[\left|\sum_{u\in U_j} Z_j^u \right|\geq ma
      \; \middle \vert \; \U_j = U_j\right] \label{eqn:grad-neighb-concentration1}
\end{align}
where $U_j$ varies over all the subsets of $\U$ and $(\varsigma_1)$ follows from \eqref{eqn:Uj-lowerbound}
and \eqref{eqn:Uj-upperbound}.
Focusing on the last term, we first note that $Z_j^u$ are still independent random variables for $u\in U_j$.
Since $\mathbb{E}\left[Z_j\right]=0$ from \eqref{eqn:grad-neighbor-zero}, we can apply Azuma-Hoeffding inequality and using \eqref{eqn:Zj-absbound} we obtain
\begin{align}
  \mathbb{P}\left[\left|\sum_{u\in U_j} Z_j^u \right|\geq ma \; \middle \vert \; \U_j = U_j, |U_j|=s\right]
  \leq 2\exp\left(\frac{-(ma)^2}{2s\left(\frac{1}{p^*_j}\right)^2}\right)
  < \frac{\delta}{16D} \label{eqn:grad-neighb-concentration2}
\end{align}
where $(\varsigma_1)$ follows from the fact that $s \leq \frac{2C\log \frac{D}{\delta}}{\alpha}$.
The proof of (b) is on the same lines after noting that for any $j\notin \V$,
\begin{align}
  \mathbb{E}\left[Z_j\right]=\grad_j L(\theta^*) &\stackrel{(\varsigma_1)}{=} -\mathbb{P}\left[T_i > T_j ; T_j \neq T_k \;\forall \; k \in \V\right] \nonumber\\
	&\stackrel{(\varsigma_2)}{<} -p_{\textrm{init}}\left(1-p_{\textrm{init}}\right)^{d+1}
	\stackrel{(\varsigma_3)}{<} -\frac{p_{\textrm{init}}}{2} \label{eqn:Zj-nonneighb-upperbound}
\end{align}
where $(\varsigma_1)$ follows from Proposition \ref{prop:grad-characterization},
$(\varsigma_2)$ follows from the fact that the probability when $j$ is infected before $i$ and none of the
parents of $i$ are infected at the same time can be lower bounded by the case where $j$ is infected at time $0$
and neither $i$ nor any of its parents are infected at time $0$. $(\varsigma_3)$ follows from the assumption
that $p_{\textrm{init}} < \frac{1}{2d}$ and hence $(1-p_{\textrm{init}})^{d+1} > \frac{1}{2}$. Using
\eqref{eqn:Zj-nonneighb-upperbound} and Lemma \ref{lem:corrdecay-probbound}, we obtain
\begin{align}
  \mathbb{E}\left[Z_j \; \middle \vert \; T_j < \infty\right] &=
      \frac{\mathbb{E}\left[Z_j\right] - \mathbb{E}\left[Z_j\mathrm{1}_{\{T_j = \infty\}}\right]}{\mathbb{P}\left[T_j < \infty\right]} 
      < = \frac{-\alpha}{2} \label{eqn:Zj-Tjfinite-nonneighb-upperbound}
\end{align}
Using \eqref{eqn:grad-neighb-concentration1} it suffices to show that
\begin{align*}
  \mathbb{P}\left[\sum_{u\in U_j} Z_j^u \geq -mb \; \middle \vert \; \U_j = U_j, |U_j|=s\right]
  < \frac{\delta}{16D} \label{eqn:grad-nonneighb-concentration2}
\end{align*}
for $\frac{C\log\frac{D}{\delta}}{2}\leq s \leq \frac{2C\log \frac{D}{\delta}}{\alpha}$. An application of Azuma-Hoeffding inequality gives us the required bound as follows.
\begin{align*}
  &\mathbb{P}\left[\sum_{u\in U_j} Z_j^u \geq -mb \; \middle \vert \; \U_j = U_j, |U_j|=s\right] \\
  &\stackrel{(\varsigma_1)}{=}\mathbb{P}\left[\sum_{u\in U_j} Z_j^u - s \mathbb{E}\left[Z_j\right]\geq
    \frac{-C\alpha\log\frac{D}{\delta}}{16} - s \mathbb{E}\left[Z_j\right] \; \middle \vert \; \U_j = U_j, |U_j|=s\right] \\
  &\stackrel{(\varsigma_2)}{\leq} \mathbb{P}\left[\sum_{u\in U_j} Z_j^u - s \mathbb{E}\left[Z_j\right]\geq 
\frac{C\alpha\log\frac{D}{\delta}}{8} \; \middle \vert \; \U_j = U_j, |U_j|=s\right] \\
  &\stackrel{(\varsigma_3)}{\leq} \exp\left(\frac{\left(\frac{C\alpha\log\frac{D}{\delta}}{8}\right)^2}{2\left(\frac{2C\log\frac{D}{\delta}}{\alpha}\right)\left(\frac{1}{p^*_j}\right)^2}\right)
  \leq \frac{\delta}{16D}
\end{align*}
where $(\varsigma_1)$ follows by subtracting $s\mathbb{E}\left[Z_j\right]$ from both sides of the inequality for which we are bounding the probability,
$(\varsigma_2)$ follows from the fact that $s \geq \frac{C\log\frac{D}{\delta}}{2}$ and \eqref{eqn:Zj-Tjfinite-nonneighb-upperbound} and
$(\varsigma_3)$ is an application of the Azuma-Hoeffding inequality using \eqref{eqn:Zj-absbound} and the fact that
$s \leq \frac{2C\log \frac{D}{\delta}}{\alpha}$.
\end{proof} 

\vspace{0.1in}


\begin{lemma}\label{lemma:thetaneighb-upperbound}
When (a)-(d) in Lemma \ref{lemma:concentration-results} hold, \,\,  $\max_{j\in \V} \widehat{\theta}_j < \frac{\overline{\xi}_1}{\xi_2}$
\end{lemma}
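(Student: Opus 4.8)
The plan is to extract the bound directly from the first-order optimality conditions of the concave maximization solved in Step~1 of the ML algorithm, and then convert the resulting empirical counts into the quantities $\overline{\xi}_1$ and $\xi_2$ using parts (c)--(d) of Lemma~\ref{lemma:concentration-results}. Since $\widehat{\theta}=\argmax_{\theta\ge 0}\widehat{L}(\theta)$ with $\widehat{L}$ concave (Proposition~\ref{prop:likelihood}), for each coordinate $j\in\V$ either $\widehat{\theta}_j=0$, in which case the claimed bound holds trivially, or $\widehat{\theta}_j>0$ and the KKT stationarity condition forces $\grad_j\widehat{L}(\widehat{\theta})=0$. I would dispose of the first case immediately and work under $\widehat{\theta}_j>0$.

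Using the gradient expression from the proof of Lemma~\ref{lemma:concentration-results}, the stationarity condition $\grad_j\widehat{L}(\widehat{\theta})=0$ is precisely
\[
m_{2,j}~=~\sum_{u\,:\,t_j^u=t_i^u-1}\frac{1}{\exp\!\left(\sum_{k:\,t_k^u=t_i^u-1}\widehat{\theta}_k\right)-1}.
\]
The key observation is that in each summand on the right the index $j$ itself lies in the set $\{k:t_k^u=t_i^u-1\}$, and every coordinate of $\widehat{\theta}$ is nonnegative, so $\sum_{k:\,t_k^u=t_i^u-1}\widehat{\theta}_k\ge\widehat{\theta}_j$. Combined with the elementary inequality $e^{x}-1\ge x$, each summand is at most $1/\widehat{\theta}_j$, and hence
\[
m_{2,j}~\le~\frac{N_j}{\widehat{\theta}_j},\qquad N_j:=\bigl|\{u:t_j^u=t_i^u-1\}\bigr|,
\]
which rearranges to $\widehat{\theta}_j\le N_j/m_{2,j}$.

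It then remains to bound the two counts on the high-probability event of Lemma~\ref{lemma:concentration-results}. The denominator is controlled by part (d), $m_{2,j}>\xi_2$. For the numerator I would bound $N_j$ by the number $|\U_j|$ of cascades in which $j$ is infected at all, and invoke the upper-tail estimate \eqref{eqn:Uj-upperbound} established inside the proof of that lemma, which gives $|\U_j|<\overline{\xi}_1$. Putting these together yields $\widehat{\theta}_j\le N_j/m_{2,j}\le|\U_j|/m_{2,j}<\overline{\xi}_1/\xi_2$; since parts (a)--(d) hold simultaneously for all $j\in\V$, maximizing over $j\in\V$ proves the lemma.

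The step I expect to require the most care is the numerator count. Stationarity naturally produces $N_j$, the number of cascades in which $j$ is infected exactly one step before $i$ whether or not other parents are also active at that time, rather than the sole-infector count $m_{1,j}$ appearing in part (c); because $N_j\ge m_{1,j}$, one cannot simply cite (c), and must instead route the bound through the cruder but still valid inequality $N_j\le|\U_j|<\overline{\xi}_1$. The only other point to watch is the boundary case $\widehat{\theta}_j=0$: the stationarity equation is valid only where the nonnegativity constraint is inactive, so that case must be separated out (as noted, it is trivial).
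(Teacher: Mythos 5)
Your proposal is correct, and its skeleton is the same as the paper's: dispose of the boundary case $\widehat{\theta}_j=0$, invoke KKT stationarity $\grad_j\widehat{L}(\widehat{\theta})=0$ at a strictly positive coordinate, bound each positive term of the gradient by $1/(\exp(\widehat{\theta}_j)-1)\le 1/\widehat{\theta}_j$ using the fact that $j$ itself belongs to the set $\{k: t_k^u=t_i^u-1\}$ together with $\widehat{\theta}\ge 0$ and $e^x-1\ge x$, and lower-bound $m_{2,j}$ by $\xi_2$ via part (d). (The paper phrases this for $k=\argmax_{j\in\V}\widehat{\theta}_j$, but as your coordinate-wise argument shows, the argmax plays no real role.) Where you genuinely diverge is the numerator count, and your version is the more careful one: the paper's step $(\varsigma_1)$ bounds the positive part of the gradient by $m_{1,k}/(\exp(\widehat{\theta}_k)-1)$, i.e.\ it counts only the sole-infector cascades of part (c), whereas stationarity produces one nonnegative term for \emph{every} cascade with $t_k^u=t_i^u-1$, including those where another parent is simultaneously active one step before $i$; as written, the paper silently drops nonnegative terms from an upper bound, which is exactly the pitfall you flag. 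Your detour through $N_j\le|\U_j|$ and the upper-tail estimate \eqref{eqn:Uj-upperbound} repairs this at no cost, since that estimate yields a high-probability bound of $\frac{2C}{\alpha}\log\frac{D}{\delta}$ on $|\U_j|$, which coincides with the intended value of $\overline{\xi}_1$ (the paper's own proof of (c) bounds the mean of $m_{1,j}$ by $m\,\mathbb{P}[T_j<\infty]\le\frac{C}{\alpha}\log\frac{D}{\delta}$ and calls it $\overline{\xi}_1/2$, so the lowercase $c$ in the displayed definition of $\overline{\xi}_1$ is evidently a typo for $C$). One caveat is worth recording: the bound $|\U_j|<\overline{\xi}_1$ is not literally among conditions (a)--(d), so your proof conditions on a slightly enlarged event relative to the lemma's stated hypothesis; since the proof of Lemma \ref{lemma:concentration-results} already establishes \eqref{eqn:Uj-upperbound} with failure probability $\delta/(16D)$ per node, this enlargement costs nothing in the probability budget, but strictly speaking the hypothesis ``(a)--(d) hold'' should be read as including that event.
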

\begin{proof}
  Let $k=\argmax_{j\in\V} \widehat{\theta}_j$. If $\widehat{\theta}_k=0$, we are done.
So assume $\widehat{\theta}_k>0$. By the optimality of $\widehat{\theta}$, we see that
\begin{equation}\label{eqn:thetaneighb-ub1}
  \grad_k\widehat{L}(\widehat{\theta})  = 0
\end{equation}
On the other hand, we have
\begin{align}
  \grad_k\widehat{L}(\widehat{\theta})
  &= \frac{1}{m} \left(-m_{2,k} +
	\sum_u \mathrm{1}_{\{t_i^u < \infty\}}
	  \left(\exp\left(\displaystyle\sum_{j:t_j^u =t_i^u -1}\widehat{\theta}_j\right)-1\right)^{-1}\right) \nonumber \\
  &\stackrel{(\varsigma_1)}{\leq} \frac{1}{m} \left(-m_{2,k} + \frac{1}{\exp(\widehat{\theta}_k)-1}m_{1,k}\right) \nonumber \\
  &\stackrel{(\varsigma_2)}{\leq} \frac{1}{m} \left(-\xi_2 + \frac{1}{\exp(\widehat{\theta}_k)-1}\overline{\xi}_1\right)
  \leq \frac{1}{m} \left(-\xi_2 + \frac{1}{\widehat{\theta}_k}\overline{\xi}_1\right) \label{eqn:thetaneighb-ub2}
\end{align}
where $(\varsigma_1)$ follows from the definition of $m_{1,k}$ and the fact that
on the infections corresponding to $m_{1,k}$, we have
\begin{align*}
\displaystyle\sum_{j:t_j^u =t_i^u -1}\widehat{\theta}_j \geq \widehat{\theta}_k
\end{align*}
and $(\varsigma_2)$ follows from Lemma \ref{lemma:concentration-results}.
Putting \eqref{eqn:thetaneighb-ub1} and \eqref{eqn:thetaneighb-ub2} together, we obtain the result.
\end{proof}

\vspace{0.1in}


\begin{lemma}\label{lemma:thetanonneighb-upperbound}
When (a)-(d) in Lemma \ref{lemma:concentration-results} hold, \,\, $\sum_{j\notin \V} \widehat{\theta}_j ~ \leq ~ \frac{ad}{b}\left(\frac{\overline{\xi}_1}{\xi_2}+\log \frac{1}{\alpha}\right) ~ < ~ \eta$
\end{lemma}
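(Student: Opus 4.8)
The plan is to convert the gradient estimates of Lemma \ref{lemma:concentration-results} into a bound on the off-support mass $\sum_{j\notin\V}\widehat{\theta}_j$ by combining the concavity of $\widehat{L}$ (Prop. \ref{prop:likelihood}) with the first-order optimality conditions satisfied by $\widehat{\theta}=\argmax_{\theta\geq 0}\widehat{L}(\theta)$. First I would record the KKT conditions for this nonnegativity-constrained maximization: $\grad_j\widehat{L}(\widehat{\theta})\leq 0$ for every $j$, with equality whenever $\widehat{\theta}_j>0$. Complementary slackness then gives the coordinatewise identity $\grad\widehat{L}(\widehat{\theta})\cdot\widehat{\theta}=0$, while nonnegativity of the feasible point $\theta^*$ gives $\grad\widehat{L}(\widehat{\theta})\cdot\theta^*=\sum_{j\in\V}\grad_j\widehat{L}(\widehat{\theta})\,\theta^*_j\leq 0$ (the off-support terms vanish since $\theta^*_j=0$ there).

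Next I would invoke concavity through gradient monotonicity, $\left(\grad\widehat{L}(\widehat{\theta})-\grad\widehat{L}(\theta^*)\right)\cdot(\widehat{\theta}-\theta^*)\leq 0$. Combining this with the two facts above yields the clean inequality $\grad\widehat{L}(\theta^*)\cdot(\widehat{\theta}-\theta^*)\geq 0$. I would then split this inner product over $j\in\V$ and $j\notin\V$ and insert the concentration estimates, namely $|\grad_j\widehat{L}(\theta^*)|<a$ on the support (part (a)) and $\grad_j\widehat{L}(\theta^*)<-b$ off it (part (b)); using $\theta^*_j=0$ for $j\notin\V$, this produces
\[
b\sum_{j\notin\V}\widehat{\theta}_j \;<\; a\sum_{j\in\V}\bigl|\widehat{\theta}_j-\theta^*_j\bigr|.
\]

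To bound the right-hand sum I would use that there are $d$ neighbors, and estimate each term: $\widehat{\theta}_j<\overline{\xi}_1/\xi_2$ by Lemma \ref{lemma:thetaneighb-upperbound}, while correlation decay forces $\theta^*_j=-\log(1-p_{ji})<\log\frac{1}{\alpha}$, since $p_{ji}\leq\sum_k p_{ki}<1-\alpha$. As both quantities are nonnegative, $|\widehat{\theta}_j-\theta^*_j|<\frac{\overline{\xi}_1}{\xi_2}+\log\frac{1}{\alpha}$, and summing over the $d$ neighbors gives the first claimed inequality $\sum_{j\notin\V}\widehat{\theta}_j<\frac{ad}{b}\bigl(\frac{\overline{\xi}_1}{\xi_2}+\log\frac{1}{\alpha}\bigr)$. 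Finally, substituting $a=\frac{\alpha^3\eta p_{init}}{144d}$, $b=\frac{\alpha p_{init}}{16}$, $\overline{\xi}_1=\frac{2c}{\alpha}\log\frac{D}{\delta}$ and $\xi_2=\frac{c\alpha}{4}\log\frac{D}{\delta}$ collapses the constants to $\frac{ad}{b}=\frac{\alpha^2\eta}{9}$ and $\frac{\overline{\xi}_1}{\xi_2}=\frac{8}{\alpha^2}$, so the bound becomes $\frac{\eta}{9}\bigl(8+\alpha^2\log\frac{1}{\alpha}\bigr)$; since $\alpha^2\log\frac{1}{\alpha}\leq\frac{1}{2e}<1$ on $(0,1)$, this is strictly less than $\eta$.

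I expect the main obstacle to be the structural first half: correctly routing the optimality conditions and concavity into the inner-product inequality $\grad\widehat{L}(\theta^*)\cdot(\widehat{\theta}-\theta^*)\geq 0$, and then pushing the support/off-support gradient estimates through it with the correct signs so that the off-support mass appears with coefficient $b$ and the on-support deviation with coefficient $a$. Once that inequality is established, the neighbor-side estimate (via Lemma \ref{lemma:thetaneighb-upperbound} and correlation decay) and the final substitution of constants are routine.
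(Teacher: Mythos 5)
Your proof is correct and takes essentially the same route as the paper: both reduce the lemma to the inner-product inequality $\left\langle \grad\widehat{L}(\theta^*),\, \widehat{\theta}-\theta^*\right\rangle \geq 0$, split it over $\V$ and $\V^c$, and then apply parts (a)--(b) of Lemma \ref{lemma:concentration-results}, Lemma \ref{lemma:thetaneighb-upperbound}, and the correlation-decay bound $\theta^*_j \leq \log\frac{1}{\alpha}$. The only (immaterial) difference is that you derive the key inequality via gradient monotonicity of concave functions plus KKT complementary slackness, whereas the paper gets it in one step from the first-order concavity bound $\widehat{L}(\widehat{\theta})-\widehat{L}(\theta^*)\leq \left\langle \grad\widehat{L}(\theta^*),\,\widehat{\theta}-\theta^*\right\rangle$ combined with $\widehat{L}(\widehat{\theta})\geq \widehat{L}(\theta^*)$; your explicit verification of the final step $\frac{ad}{b}\left(\frac{\overline{\xi}_1}{\xi_2}+\log\frac{1}{\alpha}\right)=\frac{\eta}{9}\left(8+\alpha^2\log\frac{1}{\alpha}\right)<\eta$, which the paper dismisses as ``easy to see,'' is accurate.
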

\begin{proof}
Since $\widehat{L}(\theta)$ is concave, the subgradient condition at $\theta^*$ gives us the following
\begin{align}
  \widehat{L}(\widehat{\theta}) - \widehat{L}(\theta^*) \nonumber
  &\leq \left\langle \grad\widehat{L}(\theta^*), \widehat{\theta} - \theta^* \right\rangle \nonumber\\
  &\stackrel{(\varsigma_1)}{=} \left\langle \grad_{{\V}^c}\widehat{L}(\theta^*), \widehat{\theta}_{{\V}^c}\right\rangle
    + \left\langle \grad_{\V}\widehat{L}(\theta^*), \widehat{\theta}_{\V} -\theta_{\V}^*\right\rangle \nonumber\\
  &\stackrel{(\varsigma_2)}{\leq} -b ||\widehat{\theta}_{{\V}^c}||_1 + a ||\widehat{\theta}_{\V}-\theta_{\V}^*||_1 \nonumber\\
  &\leq -b ||\widehat{\theta}_{{\V}^c}||_1 + ad \left(||\widehat{\theta}_{\V}||_{\infty}+||\theta_{\V}^*||_{\infty}\right) \label{eqn:thetanonneighb-ub1}
\end{align}
where $(\varsigma_1)$ follows from the fact that $\theta_{\V^c}^*=0$ and $(\varsigma_2)$ follows from the fact that
$\widehat{\theta}>0$ and Lemma \ref{lemma:concentration-results}. The optimality of $\widehat{\theta}$ gives us
\begin{align}\label{eqn:thetanonneighb-ub2}
  \widehat{L}(\widehat{\theta}) - \widehat{L}(\theta^*) \geq 0
\end{align}
Finally we have the following bound on $||\theta_{\V}^*||_{\infty}$:
\begin{align}\label{eqn:thetamaxbound}
  \theta^*_j = -\log\left(1-p^*_j\right) \leq \log\frac{1}{\alpha}
\end{align}
Using \eqref{eqn:thetanonneighb-ub1}, \eqref{eqn:thetanonneighb-ub2}, \eqref{eqn:thetamaxbound} and Lemma \ref{lemma:thetaneighb-upperbound} proves the first inequality, that $\sum_{j\notin \V} \widehat{\theta}_j ~ \leq ~ \frac{ad}{b}\left(\frac{\overline{\xi}_1}{\xi_2}+\log \frac{1}{\alpha}\right)$. The second inequality, that $\frac{ad}{b}\left(\frac{\overline{\xi}_1}{\xi_2}+\log \frac{1}{\alpha}\right) < \eta$, is easy to see. 
\end{proof}

\vspace{0.1in}


\begin{lemma}\label{lemma:thetaneighb-lowerbound}
When (a)-(d) in Lemma \ref{lemma:concentration-results} hold, for every $j \in \V$ we have that $\widehat{\theta}_j > \log \left(1+\frac{p_j^*\xi_1}{\overline{\xi}_2}\right)-\eta$ where
$p_j^*=1-\exp(\theta_j^*)$.
\end{lemma}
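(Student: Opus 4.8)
The plan is to mirror Lemma \ref{lemma:thetaneighb-upperbound}, but to extract a \emph{lower} bound on $\widehat{\theta}_j$ from the stationarity conditions of the concave program. Since $\widehat{L}$ is concave (Proposition \ref{prop:likelihood}) and $\widehat{\theta}$ maximizes it over the nonnegative orthant, the KKT conditions give, for every coordinate $j$, that $\grad_j\widehat{L}(\widehat{\theta}) = 0$ whenever $\widehat{\theta}_j > 0$ and $\grad_j\widehat{L}(\widehat{\theta}) \le 0$ whenever $\widehat{\theta}_j = 0$. Either way I may start from the single inequality $\grad_j\widehat{L}(\widehat{\theta}) \le 0$, which conveniently covers both cases without having to first establish $\widehat{\theta}_j > 0$.

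Next I would write the gradient explicitly as
\begin{align*}
\grad_j\widehat{L}(\widehat{\theta}) = \frac{1}{m}\left(-m_{2,j} + \sum_{u\,:\,t_j^u = t_i^u - 1}\left(\exp\left(\sum_{k\,:\,t_k^u = t_i^u - 1}\widehat{\theta}_k\right) - 1\right)^{-1}\right),
\end{align*}
so that $\grad_j\widehat{L}(\widehat{\theta})\le 0$ becomes $\sum_{u:\,t_j^u = t_i^u-1}(\exp(S^u)-1)^{-1}\le m_{2,j}$, where I abbreviate $S^u := \sum_{k:\,t_k^u=t_i^u-1}\widehat{\theta}_k$. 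To turn this into a lower bound on $\widehat{\theta}_j$, I would discard all but the $m_{1,j}$ ``sole-infector'' cascades (legitimate since every summand is positive). On such a cascade the only true neighbor active at time $t_i^u-1$ is $j$ itself, so $S^u$ equals $\widehat{\theta}_j$ plus a sum of $\widehat{\theta}_k$ over non-neighbors $k\notin\V$ that happen to be active at that step. Invoking Lemma \ref{lemma:thetanonneighb-upperbound}, which gives $\sum_{k\notin\V}\widehat{\theta}_k < \eta$, I obtain $S^u < \widehat{\theta}_j + \eta$ on each sole-infector cascade, hence
\begin{align*}
m_{2,j} ~\ge~ \sum_{u\text{ sole}}\left(\exp(S^u)-1\right)^{-1} ~>~ \frac{m_{1,j}}{\exp(\widehat{\theta}_j+\eta)-1}.
\end{align*}

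Rearranging yields $\exp(\widehat{\theta}_j+\eta) > 1 + m_{1,j}/m_{2,j}$, i.e. $\widehat{\theta}_j > \log(1 + m_{1,j}/m_{2,j}) - \eta$. Finally I substitute the concentration bounds from Lemma \ref{lemma:concentration-results}: part (c) gives $m_{1,j} > \xi_1 p_j^*$ and part (d) gives $m_{2,j} < \overline{\xi}_2$, so $m_{1,j}/m_{2,j} > \xi_1 p_j^*/\overline{\xi}_2$, and monotonicity of $\log(1+\cdot)$ delivers the claimed $\widehat{\theta}_j > \log(1 + p_j^*\xi_1/\overline{\xi}_2) - \eta$. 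I expect the one genuinely delicate step to be the control of the exponent $S^u$: in principle $S^u$ could be inflated by spurious non-neighbor weights, and it is precisely the $\ell_1$-type bound $\sum_{k\notin\V}\widehat{\theta}_k<\eta$ of Lemma \ref{lemma:thetanonneighb-upperbound} that tames this and is the source of the additive $-\eta$ in the final bound. The complementary idea --- restricting to the $m_{1,j}$ sole-infector cascades rather than to all cascades with $t_j^u=t_i^u-1$ --- is what makes the exponent uniformly bounded by $\widehat{\theta}_j+\eta$ and lets the two concentration estimates combine cleanly.
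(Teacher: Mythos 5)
Your proposal is correct and follows essentially the same route as the paper's proof: starting from the KKT inequality $\grad_j\widehat{L}(\widehat{\theta})\leq 0$, restricting the gradient sum to the $m_{1,j}$ sole-infector cascades, bounding the exponent by $\widehat{\theta}_j+\|\widehat{\theta}_{\V^c}\|_1<\widehat{\theta}_j+\eta$ via Lemma \ref{lemma:thetanonneighb-upperbound}, and then substituting the concentration bounds (c) and (d). Your explicit gradient formula even fixes a sign/indicator typo in the paper's displayed expression, so nothing is missing.
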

\begin{proof}
Since $\widehat{\theta}_j\geq 0$, by the optimality of $\widehat{\theta}$ we have
\begin{align}\label{eqn:thetaneighb-lb1}
  \grad_j \widehat{L}(\widehat{\theta}) \leq 0
\end{align}
On the other hand, we have the following bound on the gradient
\begin{align}
  \grad_j\widehat{L}(\widehat{\theta})
  &= \frac{1}{m} \left(-m_{2,j} -
	\sum_u \mathrm{1}_{\{t_i^u < \infty\}}
	  \left(\exp\left(\displaystyle\sum_{k:t_k^u =t_i^u -1}\widehat{\theta}_k\right)-1\right)^{-1}\right) \nonumber \\
  &\stackrel{(\varsigma_1)}{\geq} \frac{1}{m} \left(-m_{2,j} + \frac{1}{\exp\left(\widehat{\theta}_j+||\widehat{\theta}_{{\V}^c}||_1\right)-1}m_{1,k}\right) \nonumber \\
  &\stackrel{(\varsigma_2)}{\geq} \frac{1}{m} \left(-\overline{\xi}_2 + \frac{1}{\exp\left(\widehat{\theta}_j+||\widehat{\theta}_{{\V}^c}||_1\right)-1}p_j^*\xi_1\right) \label{eqn:thetaneighb-lb2}  
\end{align}
where $(\varsigma_1)$ follows from the fact that on the infections corresponding to
$m_{1,k}$, we have 
\begin{align*}
  \displaystyle\sum_{k:t_k^u =t_i^u -1}\widehat{\theta}_k \leq \widehat{\theta}_j+||\widehat{\theta}_{{\V}^c}||_1
\end{align*}
and $(\varsigma_2)$ follows from Lemma \ref{lemma:concentration-results}.
Combining \eqref{eqn:thetaneighb-lb1}, \eqref{eqn:thetaneighb-lb2} and Lemma \ref{lemma:thetanonneighb-upperbound}
gives us the result.
\end{proof}
Thus we see that if the true parameter $p^*_{ji} > \frac{8}{\alpha} (e^{2\eta} - 1)$, then $\widehat{\theta}_j > \eta$ and thus will be in the estimated
neighborhood $\widehat{\mathcal{N}}_i$. This completes the proof of Theorem \ref{thm:samplecomplexity_maxLL}.


\section{Greedy algorithm}

\subsection{Proof of Theorem \ref{thm:greedy-guarantees}}

To simplify notation, we again denote $\V_i$ by $\V$, $\mS_i$ by $\mS$ and so on.
From Lemma \ref{lem:corrdecay-probbound}, we have that for every node $j$,
\begin{align*}
  \mathbb{P}\left[T_j < \infty\right] < \frac{p_{\textrm{init}}}{\alpha}
\end{align*}
Since the graph is a tree, for every node $j$ there exists a unique (undirected) path between $i$ and $j$.
All the nodes on this path are said to be ancestors of $j$.
Consider a node $j \in \mS \setminus \V$. Let $k \in \V$ be the ancestor of $j$ on this path. Then we have that
\begin{align*}
  \mathbb{P}\left[T_j \neq T_k; T_k = T_i - 1\right] \geq p_{\textrm{init}} \left(1-p_{\textrm{init}}\right)^2 p_{\textrm{min}}
\end{align*}
If $l\in \V$ but is not an ancestor of $j$ then
\begin{align*}
  \mathbb{P}\left[T_j = T_l = T_i - 1\right] 
      &<\mathbb{P}\left[T_j<\infty]\mathbb{P}[T_l<\infty\right]
      < \left(\frac{p_{\textrm{init}}}{\alpha}\right)^2
\end{align*}
since $T_j$ and $T_l$ are independent conditioned on $T_j, T_l < T_i$.
For any event $A$ that depends on the infection times, let $N(A)$ denote the number of cascades in $\U$ in which event $A$ has occurred.
Using \eqref{eqn:chernoff-lower} and \eqref{eqn:chernoff-upper},
we have the following bounds on probabilities of error events:
\begin{eqnarray}
  \mathbb{P}\left[N\left(T_k = T_i - 1\right) \leq \left(1-\frac{1}{2}\right)mp_{\textrm{min}}p_{\textrm{init}}\left(1-p_{\textrm{init}}\right)\right] 
	&< \left(\frac{2}{e}\right)^{\frac{mp_{\textrm{min}}p_{\textrm{init}}\left(1-p_{\textrm{init}}\right)}{2}} \label{eq:neighb_lowbound_prob} \nonumber \\
  \mathbb{P}\left[N\left(T_k = T_l = T_i - 1\right) \geq \frac{m p_{\textrm{init}}p_{\textrm{min}}}{8d}\right]
	&< \left(\frac{em\left(\frac{p_{\textrm{init}}}{\alpha}\right)^2}{\left(\frac{mp_{\textrm{init}}p_{\textrm{min}}}{8d}\right)}\right)^
	    {\frac{mp_{\textrm{init}}p_{\textrm{min}}}{8d}} \label{eq:neighb_upbound_prob} \nonumber \\
  \mathbb{P}\left[N\left(T_j = T_l = T_i - 1\right) \geq \frac{m p_{\textrm{init}}p_{\textrm{min}}}{8d}\right]
	&< \left(\frac{em\left(\frac{p_{\textrm{init}}}{\alpha}\right)^2}{\left(\frac{mp_{\textrm{init}}p_{\textrm{min}}}{8d}\right)}\right)^
	    {\frac{mp_{\textrm{init}}p_{\textrm{min}}}{8d}} \label{eq:nonneighb_upbound_prob} \nonumber \\
  \mathbb{P}\left[N\left(T_j \neq T_k; T_k = T_i - 1\right) \leq \left(1-\frac{1}{2}\right)mp_{\textrm{init}}p_{\textrm{min}}\left(1-p_{\textrm{init}}\right)^2\right]
	&< \left(\frac{2}{e}\right)^{mp_{\textrm{init}}p_{\textrm{min}}\left(1-p_{\textrm{init}}\right)^2}\label{eq:diff_lowbound_prob}
\end{eqnarray}
where $k,l \in \V$ and $j\notin \V$ such that $k$ is an ancestor of $j$.
Substituting the value of $m$ from the statement of Theorem \ref{thm:greedy-guarantees} and recalling the assumption on $p_{\textrm{init}}$,
we see that with probability greater than $1-\delta$, we have
\begin{align}
  N\left(T_k = T_i - 1\right) &> \frac{cd\left(1-p_{\textrm{init}}\right)\log\frac{D}{\delta}}{2} \label{eq:neighb_lowbound}\\
  N\left(T_k = T_l = T_i - 1\right) &< \frac{c \log \frac{D}{\delta}}{8} \label{eq:neighb_upbound}\\
  N\left(T_j = T_l = T_i - 1\right) &< \frac{c \log \frac{D}{\delta}}{8} \label{eq:nonneighb_upbound}\\
  N\left(T_j \neq T_k; T_k = T_i - 1\right) &> \frac{cd\left(1-p_{\textrm{init}}\right)^2 \log \frac{D}{\delta}}{2} \label{eq:diff_lowbound}
\end{align}

Note that the assumption on $p_{\textrm{init}}$ also yields an upper bound of $\frac{1}{16}$ on $p_{\textrm{init}}$.
Now we will show that under the above conditions, Algorithm \ref{algo:greedy} recovers the original graph exactly. Suppose in iteration
$s$, the neighborhood is $s-1$ of the correct parents and there is atleast one $k \in \V$, not in the current neighborhood. Let the
current set of infections be $U$. Then from \eqref{eq:neighb_lowbound} and \eqref{eq:neighb_upbound}, we see that
\begin{align*}
  N_U\left(T_k = T_i - 1\right) &> \frac{cd\left(1-p_{\textrm{init}}\right)\log \frac{D}{\delta}}{2} - d \;\frac{c \log \frac{D}{\delta}}{8} \\
	&= \frac{cd\log \frac{D}{\delta}}{8}\left(4\left(1-p_{\textrm{init}}\right)-1\right) 
	> 0
\end{align*}
So there is atleast one node that will be added to the neighborhood. Now consider any $j \notin \V$. If the ancestor of $j$ that
is a parent of $i$ has already been added to the neighborhood list, then from \eqref{eq:nonneighb_upbound}
\begin{align*}
  N_U\left(T_j = T_i - 1\right) &< d \; \frac{c \log \frac{D}{\delta}}{8} \\
		  &< \left(4\left(1-p_{\textrm{init}}\right)-1\right)\frac{cd\log \frac{D}{\delta}}{8} \\
		  &< N_U\left(T_k = T_i - 1\right)
\end{align*}
Suppose the ancestor of $j$ that is a parent of $i$ has not yet been added to the neighborhood of $i$. Without loss of generality, let $k$
be the ancestor of $j$. Then,
\begin{align*}
  &N_U\left(T_k = T_i - 1\right) - N_U\left(T_j = T_i - 1\right) \\
&= N_U\left(T_j \neq T_k; T_k = T_i - 1\right) \\
&\;\;\; -N_U\left(T_j = T_l = T_i - 1 \colon l \neq k, l \in S\right) \\
&> \frac{cd\left(1-p_{\textrm{init}}\right)^2 \log \frac{D}{\delta}}{2} - 2d\; \frac{c \log \frac{D}{\delta}}{8} \\
&= \frac{c d \log \frac{D}{\delta}}{4}\left(2\left(1-p_{\textrm{init}}\right)^2 - 1\right)
> 0
\end{align*}

Applying union bound over all nodes in the superneighborhood, we can conclude that all nodes in the superneighborhood satisfy
\eqref{eq:neighb_lowbound}, \eqref{eq:neighb_upbound}, \eqref{eq:nonneighb_upbound} and \eqref{eq:diff_lowbound} with probability
greater than $1-\delta$. This proves Theorem \ref{thm:greedy-guarantees}.


\section{Lower Bounds}

\vspace{0.05in}
\subsection{Proof of Lemma \ref{lem:entropy_upperbound}}
Recall from Lemma \ref{lem:corrdecay-probbound} that $\mathbb{P}\left[T_i = t\right] \leq \left(1-\alpha\right)^{t-1} p_{\textrm{init}}$. The proof just involves using this to bound $H(T_i)$.
Since $p_{\textrm{init}} < \frac{1}{e}$, we have the following
\begin{eqnarray*}
 H(T_i) 
	  &= & -\sum_{t=1}^n \mathbb{P}\left[T_i = t\right] \log \mathbb{P}\left[T_i = t\right] \\
		& &\;\;\;- \mathbb{P}\left[T_i = \infty\right] \log \mathbb{P}\left[T_i = \infty\right] \\
	 &\leq & -\sum_{t=1}^n \left(1-\alpha\right)^{t-1}p_{\textrm{init}} \log \left(1-\alpha\right)^{t-1}p_{\textrm{init}} \\
	      & & \;\;\;- \left(1-\frac{p_\textrm{init}}{\alpha}\right)\log\left(1-\frac{p_\textrm{init}}{\alpha}\right)\\
	 &\stackrel{(\varsigma_1)}{\leq} & \frac{p_{\textrm{init}}}{1-\alpha} \left(\log \frac{1}{p_{\textrm{init}}} +
		  \left(\frac{1-\alpha}{\alpha}\right)^2\log \frac{1}{1-\alpha}\right) \\
		& &\;\;\;  - \left(1-\frac{p_\textrm{init}}{\alpha}\right)\log\left(1-\frac{p_\textrm{init}}{\alpha}\right)
\end{eqnarray*}
where $(\varsigma_1)$ follows from some algebraic manipulations.


\section{Generalized Independent Cascade Model}

\vspace{0.05in}
\subsection{Proof of Prop. \ref{prop:likelihood-genIC}}
Defining
\begin{align*}
x_i^0(\tau) = 
  \begin{cases}
    0 & \mbox{ if } \tau < t_i \\
    1 & \mbox{ if } \tau \geq t_i
  \end{cases}
\end{align*}
and proceeding as in the proof of Proposition \ref{prop:likelihood}, we obtain
\begin{align*}
 \mathbb{P}_{\theta}\left[T=t\right] ~
	= ~ \mathbb{P}_{\theta}\left[X(0) = x^0(0)\right] ~ \times ~
	\prod_{\tau = 1}^n \mathbb{P}_{\theta}\left[X(\tau) = x^0(\tau) \middle \vert X(0:\tau-1) = x^0(0:\tau-1)\right]
\end{align*}
where $X(0:\tau)$ denotes the (joint) values of the vectors $X(0),\cdots,X(\tau)$.
Now, $\mathbb{P}_{\theta}\left[X(0) = x^0(0)\right] = p_{\textrm{init}}^s\left(1-p_{\textrm{init}}\right)^{n-s}$.
Also,
\begin{align*}
  \mathbb{P}_{\theta}\left[X(\tau) = x^0(\tau) \middle \vert X(0:\tau-1) = x^0(0:\tau-1)\right]
  ~ = ~ \prod_{i\in V} \mathbb{P}_{\theta}\left[X_i(\tau) = x_i^0(\tau) \middle \vert X(0:\tau-1) = x^0(0:\tau-1)\right]
\end{align*}
because each node gets infected independently from each of its currently active neighbors. Thus we have that
\begin{align}
  \mathbb{P}\left[T=t\right]
      = p_{\textrm{init}}^s\left(1-p_{\textrm{init}}\right)^{n-s} 
	    \prod_{i\in V} \left(\prod_{\tau=1}^n b_i(\tau)\right)\label{eqn:maxLL-probproduct-genIC}
\end{align}
where $b_i(\tau) = \mathbb{P}_{\theta}\left[X_i(\tau) = x_i^0(\tau) \middle \vert X(0:\tau-1) = x^0(0:\tau-1)\right]$.
It is clear that for $\tau > t_i$, $b_i(\tau)=1$. For $\tau=t_i$, $b_i(\tau)$ is the probability that at least one
of the parents $j$ of $i$ infected before $t_i$ infected node $i$ at time $t_i$ given that $j$ did not infect $i$
before $t_i$. Thus,
\begin{align}
  b_i(t_i) &= 1 - \prod_{j:t_j<t_i} \frac{1-\sum_{r\in[t_i]} p_{ji}^r}{1-\sum_{r\in[t_i-1]} p_{ji}^r} \nonumber \\
	  &= 1 - \prod_{j:t_j<t_i} \exp\left(-\theta_{ji}^{t_i-t_j}\right)\label{eqn:maxLL-infectionpart-genIC}
\end{align}
Finally, for each $\tau < t_i$, $b_i(\tau)$ is the probability that active nodes at time $\tau-1$ failed to infect node $i$.
The set of all nodes that were active but failed to infect susceptible node $i$ is $\{j:t_j \leq t_i-2\}$. Each such node
$j$ failed to infect $i$ for $t_i-t_j-1$ time slots. So we have
\begin{align}
  \prod_{\tau<t_i} b_i(\tau) &= \prod_{j:t_j \leq t_i-2} \left(1-\sum_{r\in[t_i-t_j-1]}p_{ji}^r\right) \nonumber\\
	&= \prod_{j:t_j \leq t_i-2} \prod_{r\in [t_i-t_j-1]} \exp\left(-\theta_{ji}^r\right) \label{eqn:maxLL-failedinfectionpart-genIC}
\end{align}
Putting \eqref{eqn:maxLL-probproduct-genIC}, \eqref{eqn:maxLL-infectionpart-genIC} and \eqref{eqn:maxLL-failedinfectionpart-genIC}
together and taking log gives the result. 

Concavity again follows from the fact that $\log(1 - \exp(-x))$ is a concave function of $x$, and the fact that if any function $f(x)$ is a concave
function of $x$ then $f(\sum_i \theta_i)$ is jointly concave in $\theta$. $\square$

\section{Markov Graphs and Causality}

\vspace{0.05in}
\subsection{Proof of Theorem \ref{thm:markov}}
We will show that $\mathbb{P}\left[T=t\right]$ can be written as a product of various factors where each
factor depends only on $t_{\V_i}$ for some $i\in V$. Given any vector $t$, for every $i \in V$ define the infection vectors
\begin{align*}
  x_i(\tau) = 
  \begin{cases}
    0 & \mbox{ if } 0 \leq \tau < t_i^{(1)} \\
    1 & \mbox{ if } t_i^{(1)} \leq \tau < t_i^{(2)} \\
    2 & \mbox{ if } \tau \geq t_i^{(2)}
  \end{cases}
\end{align*}
We can see that there is a one to one correspondence between valid time vectors $t$ and valid infection vectors $x$.
Using the above transformation, we can calculate the probability of a given time vector $t$ as follows:
\begin{align*}
  \mathbb{P}\left[T=t\right] &= \mathbb{P}\left[X=x\right] \\
	&= \mathbb{P}\left[X(0)=x(0)\right] \times \prod_{s=1}^{\infty} \mathbb{P}\left[X(s)=x(s) \; \middle \vert \; x(0:s-1)\right] \\
	&= \left(\prod_{i\in V}\mathbb{P}\left[X_i(0)=x_i(0)\right]\right) \times \prod_{s=1}^{\infty} \prod_{i\in V} \mathbb{P}\left[X_i(s)=x_i(s) \; \middle \vert \; x_{\V_i}(0:s-1)\right] \\
	&= \left(\prod_{i\in V}\mathbb{P}\left[X_i(0)=x_i(0)\right]\right) \times \prod_{i\in V} \prod_{s=1}^{\infty} \mathbb{P}\left[X_i(s)=x_i(s) \; \middle \vert \; x_{\V_i}(0:s-1)\right] \\
	&= \prod_{i\in V} f_i\left(t_{\V_i}\right)
\end{align*}
where $f_i(t_{\V_i}) = \mathbb{P}\left[X_i(0)=x_i(0)\right] \times \prod_{s=1}^{\infty} \mathbb{P}\left[X_i(s)=x_i(s) \; \middle \vert \; x_{\V_i}(0:s-1)\right]$.
\flushright $\square$

\end{document}